\newtheorem{theorem}{Theorem}
\newtheorem{example}{Example}
\newtheorem{lemma}{Lemma}
\newtheorem{proposition}{Proposition}
\newtheorem{remark}{Remark}
\begin{document}

\title{On the Sum-Capacity of Degraded Gaussian Multiaccess Relay Channels}
\author{Lalitha Sankar~\IEEEmembership{Member,~IEEE,}, Narayan B.
Mandayam,~\IEEEmembership{Senior~Member,~IEEE}, and~H. Vincent Poor
\IEEEmembership{Fellow,~IEEE}\thanks{The work of L.~Sankar and N.~B.~Mandayam
was supported in part by the National\ Science Foundation under
Grant~No.~{\scriptsize ITR-0205362}. The work of H. V. Poor was supported by
the National Science Foundation under Grants ANI-03-38807 and CNS-06-25632.
The material in this paper was presented in part at the Information Theory and
Applications Workshop, San Diego, CA, January 2006. L. Sankar and H.\ V. Poor
are with Princeton University. N.~B.~Mandayam is with the WINLAB, Rutgers
University. }}
\pubid{\ }
\specialpapernotice{\ }
\maketitle

\begin{abstract}
The sum-capacity is studied for a $K$-user degraded Gaussian multiaccess relay
channel (MARC) where the multiaccess signal received at the destination from
the $K$ sources and relay is a degraded version of the signal received at the
relay from all sources, given the transmit signal at the relay. An outer bound
on the capacity region is developed using cutset bounds. An achievable rate
region is obtained for the decode-and-forward (DF) strategy. It is shown that
for every choice of input distribution, the rate regions for the inner (DF)
and outer bounds are given by the intersection of two $K$-dimensional
polymatroids, one resulting from the multiaccess link at the relay and the
other from that at the destination. Although the inner and outer bound rate
regions are not identical in general, for both cases, a classical result on
the intersection of two polymatroids is used to show that the intersection
belongs to either the set of \textit{active cases} or \textit{inactive cases},
where the two bounds on the $K$-user sum-rate are active or inactive,
respectively. It is shown that DF achieves the capacity region for a class of
degraded Gaussian MARCs in which the relay has a high\ SNR\ link to the
destination relative to the multiaccess link from the sources to the relay.
Otherwise, DF is shown to achieve the sum-capacity for an \textit{active
}class of degraded Gaussian MARCs for which the DF\ sum-rate is maximized by a
polymatroid intersection belonging to the set of active cases. This class is
shown to include the class of \textit{symmetric} Gaussian MARCs where all
users transmit at the same power.

\end{abstract}

\begin{keywords}
Multiple-access relay channel (MARC), outer bounds, achievable strategies,
Gaussian and degraded Gaussian MARC.
\end{keywords}

\IEEEpeerreviewmaketitle

\section{Introduction}

The multiaccess relay channel (MARC) is a network in which several users
(sources) communicate with a single destination in the presence of a relay
\cite{cap_theorems:KvW01}. The coding strategies developed for the relay
channel \cite{cap_theorems:vdM01,cap_theorems:CEG01} extend readily to the
MARC \cite{cap_theorems:KGG_IT,cap_theorems:SKM03}. For example, the strategy
of \cite[theorem~1]{cap_theorems:CEG01}, now often called
\textit{decode-and-forward} (DF), has a relay that decodes user messages
before forwarding them to the destination
\cite{cap_theorems:KGG_IT,cap_theorems:SKM03}. Similarly, the strategy in
\cite[theorem~6]{cap_theorems:CEG01}, now often called
\textit{compress-and-forward} (CF), has the relay quantize its output symbols
and transmit the resulting quantized bits to the destination
\cite{cap_theorems:SKM03}. 

Capacity results for relay channels are known only for a few special cases
such as the class of degraded relay channels \cite{cap_theorems:CEG01} and its
multi-relay generalization \cite{cap_theorems:ReznikKV01,cap_theorems:XK02_IT}%
, the class of semi-deterministic relay channels \cite{cap_theorems:EGA01},
the class of orthogonal relay channels
\cite{cap_theorems:LiangVP_ResAllocJrnl,cap_theorems:ElGZahedi01}, the class
of Gaussian relay without delay channels
\cite{cap_theorems:ElGamal_Hassanpour01,cap_theorems:vdMVanroose}, and the
class of ergodic phase-fading relay channels \cite{cap_theorems:KGG_IT}. For
the class of degraded relay channels, the degradedness condition requires that
the received signal at the destination be independent of the source signal
when conditioned on the transmit and receive signals at the relay. For the
Gaussian case, this simplifies to the requirement that the signal received at
the destination be a noisier version of that received at the relay conditioned
on the transmitted signal at the relay. This condition immediately suggests
that requiring the relay to decode the source signals should be optimal. In
fact, for this class, applying this degradedness condition simplifies the
cut-set outer bounds to coincide with the DF bounds. For the MARC, we
generalize this degradedness condition to requiring that the signal received
at the destination be independent of all source signals conditioned on the
transmit and receive signals at the relay. Applying this degradedness
condition to the cutset outer bounds for a MARC, however, does not simplify
the bounds to those achieved by DF.

A $K$-user Gaussian MARC is degraded when the multiaccess signal received at
the destination from the $K$ sources and relay is a noisier version of the
signal received at the relay from all sources, given the transmit signal at
the relay. For a $K$-user degraded Gaussian MARC, we develop the DF rate
region as an inner bound on the capacity region using Gaussian signaling at
the sources and relay. The outer bounds on the capacity region are obtained by
specializing the cut-set bounds of \cite[Th. 14.10.1]{cap_theorems:CTbook} to
the case of independent sources \cite{cap_theorems:SKM02a} and by applying the
degradedness condition. In fact, for each choice of input distribution, both
the DF and the cutset rate regions are intersections of two multiaccess rate
regions, one with the relay as the receiver and the other with the destination
as the receiver. In general, however, the inner and outer bounds differ in
their input distributions as well as the rate bounds. The outer bounds allow a
more general dependence between the source and relay signals relative to DF
where we use auxiliary random variables, one for each source, to relate the
transmitted signals at the sources and relay. For the Gaussian\ degraded MARC,
we show that using Gaussian input signals at the sources and relay maximizes
the outer bounds. For the inner bounds, we use Gaussian signaling at the
sources and the relay via $K$ Gaussian auxiliary random variables. As a
result, for each choice of the appropriate Gaussian input distribution, both
the DF and outer bounds are then parametrized by $K$ source-relay
cross-correlation coefficients, i.e., a $K$-length correlation vector.
Specifically, each DF coefficient is a product of the two power fractions
allocated for cooperation at the corresponding source and the relay,
respectively. We show that the DF rate region over all feasible correlation
vectors is a convex region. On the other hand, for the outer bounds, all the
rate bounds at the relay except for the bound on the $K$-user sum-rate are
non-concave functions of the correlation coefficients, and thus, the outer
bound rate region requires time-sharing. Finally, we also show that for every
feasible choice of the correlation vector, the multiaccess regions achieved by
the inner and outer bounds at each receiver are polymatroids, and the
resulting region is an intersection of two polymatroids.

We use a well-known result on the intersection of two polymatroids \cite[chap.
46]{cap_theorems:Schrijver01} to broadly classify polymatroid intersections
into two categories, namely, the set of \textit{active }and the set of
\textit{inactive cases}, depending on whether the constraints on the $K$-user
sum-rate at both receivers are active or inactive, respectively. In fact, we
use \cite[chap. 46]{cap_theorems:Schrijver01} to show that the $K$-user
sum-rate for the inactive cases is always bounded by the minimum of the
(inactive) $K$-user sum-rate bounds at each receiver, and thus, by the largest
such bound. For both the inner and outer bounds, the intersection of the two
rate polymatroids results in either an active or a inactive case for every
choice of correlation vectors. In fact, the minimum of the $K$-user sum-rate
bounds at the relay and destination is the effective sum-rate only if the
polymatroid intersection is an active case and is strictly an upper bound for
an inactive case. 

Irrespective of the above mentioned distinction, we first consider the problem
of maximizing the minimum of the $K$-user sum-rate bounds at the relay and
destination over the set of all correlation coefficients. We solve this
max-min optimization problem using techniques analogous to the classical
minimax problem of detection theory \cite[II.C]{cap_theorems:HVPoor01}. We
refer to a sum-rate optimal correlation vector as a \textit{max-min rule}.

For both the inner and outer bounds, we show that the max-min optimization
described above has two unique solutions. The first solution is given by the
maximum $K$-user sum-rate achievable at the relay and results when the
multiaccess link between the sources and the relay is the bottle-neck link.
For this case, we show that the intersection of the rate regions at the relay
and destination belongs to the set of active cases and is in fact the same as
the region achieved at the relay. We further show that this region is the same
for both the inner and outer bounds and is the capacity region for a class of
degraded Gaussian\ MARCs where the source and relay powers satisfy the
bottle-neck condition for this case.

The second solution pertains to the case in which the bottle-neck condition
described above is not satisfied, i.e., the $K$-user sum-rate at the relay is
at least as large as that at the destination. For this case, we show that for
both the inner and outer bounds the max-min optimization solution requires the
$K$-user sum-rate bounds at the relay and destination to be equal. In fact, we
show that both the inner and outer bounds achieve the same maximum sum-rate
for this case. Further, for both sets of bounds, we show that this maximum is
achieved by a set of correlation vectors, i.e., the max-min rule is a set
rather than a singleton. Recall, however, that the sum-rate computed thus is
achievable for either bound only if there exists at least one max-min rule for
which the polymatroid intersection belongs to the set of active cases;
otherwise, the computed maximum is strictly an upper bound on the maximum
sum-rate. Combining this with the fact that the maximum inner and outer
$K$-user sum rate bounds for this case are the same, we establish that DF
achieves the sum-capacity of an \textit{active} class of degraded Gaussian
MARCs, i.e., a class for which the maximum sum-rate is achieved because there
exists at least one max-min rule for which the polymatroid intersection is an
active case. We also show that the class of \textit{symmetric }Gaussian MARCs,
in which all sources have the same power, belongs to this active class.
Finally, for the remaining \textit{inactive} class of degraded Gaussian\ MARCs
in which no active case results for any choice of the max-min rule, we provide
a common upper bound on both the DF and the cutset sum-rates.

This paper is organized as follows. In Section \ref{DG_Sec2} we present a
model for a degraded Gaussian MARC. In Section \ref{DG_Sec3} we develop the
cut-set bounds on the capacity region of a MARC. In Section \ref{DG_Sec4} we
determine the maximum $K$-user DF sum-rate. We discuss our results and
conclude in Section \ref{DG_Sec6}.

\section{\label{DG_Sec2}Channel Model and Preliminaries}

A $K$-user degraded Gaussian MARC has $K$ user (source) nodes, one relay node,
and one destination node (see Fig. \ref{Fig_DGMARC}). The sources emit the
messages $W_{k}$, $k=1,2,\ldots,K$, which are statistically independent and
take on values uniformly in the sets $\{1,2,\ldots,M_{k}\}$. The channel is
used $n$ times so that the rate of $W_{k}$ is $R_{k}=\left.  B_{k}\right/  n$
bits per channel use where $B_{k}=\log_{2}M_{k}$ bits. In each use of the
channel, the input to the channel from source $k$ is $X_{k}$ while the relay's
input is $X_{r}$. The channel outputs $Y_{r}$ and $Y_{d}$, respectively, at
the relay and the destination are
\begin{align}
Y_{r} &  =\left(  \sum\limits_{k=1}^{K}X_{k}\right)  +Z_{r}\label{Yr_defn}\\
Y_{d} &  =\left(  \sum\limits_{k=1}^{K}X_{k}\right)  +X_{r}+Z_{d}%
\label{Yd_defn}\\
&  =Y_{r}+X_{r}+Z_{\Delta}\label{DGMARC_def}%
\end{align}
where $Z_{r}$ and $Z_{\Delta}$ are independent Gaussian random variables with
zero means and variances $N_{r}$ and $N_{\Delta}$, respectively, such that the
noise variance at the destination is
\begin{equation}
N_{d}=N_{r}+N_{\Delta}.\label{DGMARC_Noise_var}%
\end{equation}
%

%TCIMACRO{\FRAME{ftbpFU}{4.1027in}{1.7504in}{0pt}{\Qcb{A two-user
%Gaussian\ degraded MARC.}}{\Qlb{Fig_DGMARC}}{dg_marc.eps}%
%{\special{ language "Scientific Word";  type "GRAPHIC";  display "USEDEF";
%valid_file "F";  width 4.1027in;  height 1.7504in;  depth 0pt;
%original-width 3.9972in;  original-height 1.2107in;  cropleft "0";
%croptop "1";  cropright "1";  cropbottom "0";
%filename '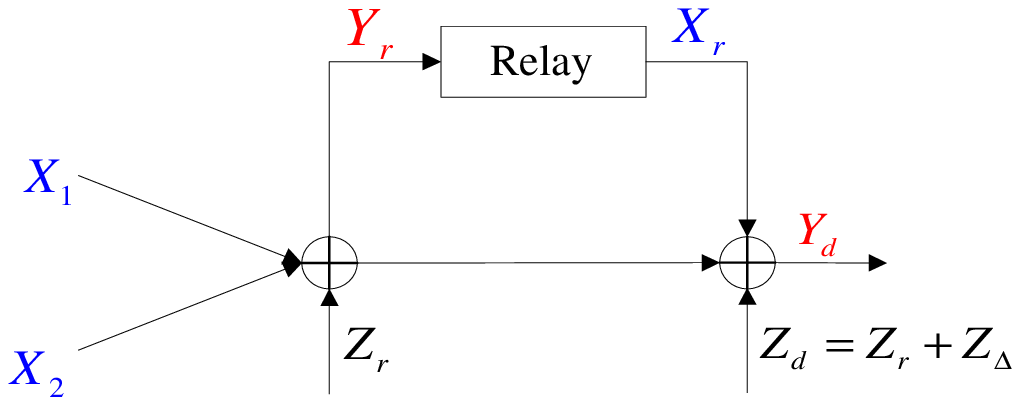';file-properties "XNPEU";}}}%
%BeginExpansion
\begin{figure}
[ptb]
\begin{center}
\includegraphics[
height=1.7504in,
width=4.1027in
]%
{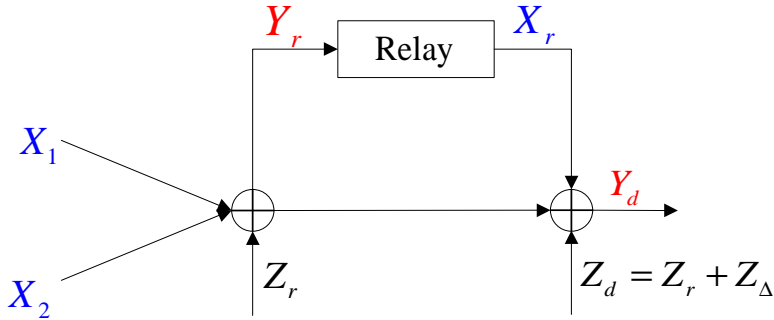}%
\caption{A two-user Gaussian\ degraded MARC.}%
\label{Fig_DGMARC}%
\end{center}
\end{figure}
%EndExpansion
We assume that the relay operates in a full-duplex manner, i.e., it can
transmit and receive simultaneously in the same bandwidth. Further, its input
$X_{r}$ in each channel use is a causal function of its outputs from previous
channel uses. We write $\mathcal{K}=\{1,2,\ldots,K\}$ for the set of sources,
$\mathcal{T}$ $=$ $\mathcal{K}$ $\cup$ $\{r\}$ for the set of transmitters,
$\mathcal{R}$ $=$ $\{r,d\}$ for the set of receivers, $X_{\mathcal{S}}%
=\{X_{k}$ $:$ $k$ $\in$ $\mathcal{S}\}$ for all $\mathcal{S}$ $\subseteq$
$\mathcal{K}$, and $\mathcal{S}^{c}$ to denote the complement of $\mathcal{S}$
in $\mathcal{K}$.

The transmitted signals from source $k$ and the relay have a per symbol power
constraint
\begin{equation}%
\begin{array}
[c]{cc}%
\left.  E\left[  \left\vert X_{k}\right\vert ^{2}\right]  \leq P_{k}\right.  &
\text{ }k\in\mathcal{T}.
\end{array}
\label{Pwr_cond}%
\end{equation}
One can equivalently express the relationship between the input and output
signals in (\ref{DGMARC_def}) as a Markov chain
\begin{equation}
\left(  X_{1},X_{2},\ldots,X_{K}\right)  -(Y_{r},X_{r})-Y_{d}.
\label{DG_Markov}%
\end{equation}
For $K$ $=$ $1$, (\ref{DG_Markov}) simplifies to the degradedness condition in
\cite[(10)]{cap_theorems:CEG01} for the classic (single source) relay channel.
A degraded Gaussian MARC is \textit{symmetric} if $P_{k}=P$, for all $k$.
Thus, a class of symmetric DG-MARCs is characterized by four parameters,
namely, $P,$ $P_{r}\,,$ $N_{r},$ and $N_{d}$.

The capacity region $\mathcal{C}_{\text{MARC}}$ is the closure of the set of
rate tuples $(R_{1},R_{2},\ldots,R_{K})$ for which the destination can, for
sufficiently large $n$, decode the $K$ source messages with an arbitrarily
small positive error probability. As further notation, we write
$R_{\mathcal{S}}=%
%TCIMACRO{\tsum \nolimits_{k\in\mathcal{S}}}%
%BeginExpansion
{\textstyle\sum\nolimits_{k\in\mathcal{S}}}
%EndExpansion
R_{k}$ and $Y_{\mathcal{R}}=\left(  Y_{r},Y_{d}\right)  $. We write
$\underline{0}$ and $\underline{1}$ to denote vectors whose entries are all
zero and one, respectively, and $C(x)=\log(1+x)/2$ to denote the capacity of
an AWGN channel with signal-to-noise ratio (SNR) $x$. We use the usual
notation for entropy and mutual information
\cite{cap_theorems:Gallager01,cap_theorems:CTbook} and take all logarithms to
the base 2 so that in each channel use our rate units are bits.

\section{\label{DG_Sec3}Outer Bounds}

An outer bound on the capacity region of a MARC is presented in
\cite{cap_theorems:SKM02a} using the cut-set bounds in \cite[Th.
14.10.1]{cap_theorems:CTbook} as applied to the case of independent sources.
We summarize the bounds below.

\begin{proposition}
\label{Prop_OB}The capacity region $\mathcal{C}_{\text{MARC}}$ is contained in
the union of the set of rate tuples $(R_{1},R_{2},\ldots,R_{K})$ that satisfy,
for all $\mathcal{S}\subseteq\mathcal{K}$,
\begin{equation}
R_{\mathcal{S}}\leq\min\left\{  I(X_{\mathcal{S}};Y_{r},Y_{d}|X_{\mathcal{S}%
^{c}},X_{r},U),I(X_{\mathcal{S}},X_{r};Y_{d}|X_{\mathcal{S}^{c}},U)\right\}
\label{MARC_OB_cutset}%
\end{equation}
where the union is over all distributions that factor as%
\begin{equation}
p(u)\cdot\left(  \prod\nolimits_{k=1}^{K}p(x_{k}|u)\allowbreak\right)  \cdot
p(x_{r}|\allowbreak x_{\mathcal{K}}\allowbreak,u)\cdot p(y_{r},y_{d}%
|x_{\mathcal{K}},x_{r}). \label{GMARC_converse_inpdist}%
\end{equation}

\end{proposition}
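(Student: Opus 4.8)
The plan is to obtain the two bounds inside the minimum of (\ref{MARC_OB_cutset}) as two instances of the general network cut-set bound \cite[Th.~14.10.1]{cap_theorems:CTbook}, specialized to independent sources as in \cite{cap_theorems:SKM02a}, and then to verify that the induced single-letter input distribution factors as in (\ref{GMARC_converse_inpdist}). For a fixed subset $\mathcal{S}\subseteq\mathcal{K}$, the two bounds correspond to the two cuts that separate the source-to-destination message flow. The first cut places the sources in $\mathcal{S}$ on the transmitter side and the sources in $\mathcal{S}^{c}$, the relay, and the destination on the receiver side; since $Y_{r}$ and $Y_{d}$ are both observed on the receiver side while $X_{\mathcal{S}^{c}}$ and $X_{r}$ are the inputs there, this yields the broadcast-type bound $I(X_{\mathcal{S}};Y_{r},Y_{d}|X_{\mathcal{S}^{c}},X_{r},U)$. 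The second cut groups the sources in $\mathcal{S}$ together with the relay on the transmitter side and leaves only the destination (with inputs $X_{\mathcal{S}^{c}}$) on the receiver side, giving the multiaccess-type bound $I(X_{\mathcal{S}},X_{r};Y_{d}|X_{\mathcal{S}^{c}},U)$. Since both must hold, $R_{\mathcal{S}}$ is bounded by their minimum.

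To make this precise, I would first invoke Fano's inequality: reliable decoding of all $K$ messages at the destination gives $H(W_{\mathcal{K}}|Y_{d}^{n})\leq n\epsilon_{n}$ with $\epsilon_{n}\to0$, and hence $H(W_{\mathcal{S}}|W_{\mathcal{S}^{c}},Y_{d}^{n})\leq n\epsilon_{n}$ for every $\mathcal{S}$. Using the independence of the messages, $nR_{\mathcal{S}}=H(W_{\mathcal{S}})=H(W_{\mathcal{S}}|W_{\mathcal{S}^{c}})$, so that $nR_{\mathcal{S}}\leq I(W_{\mathcal{S}};Y_{d}^{n}|W_{\mathcal{S}^{c}})+n\epsilon_{n}$. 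For the multiaccess cut I would keep $Y_{d}^{n}$, while for the broadcast cut I would enlarge the receiver observation to $(Y_{r}^{n},Y_{d}^{n})$, which only increases the mutual information. In each case I would expand by the chain rule over time and then use the memorylessness of the channel in (\ref{Yr_defn})--(\ref{Yd_defn})---conditioned on the time-$i$ inputs, the outputs are independent of the past---to replace the messages and past outputs by the single-letter inputs $(X_{\mathcal{S},i},X_{r,i})$, with $X_{\mathcal{S}^{c},i}$ a deterministic function of $W_{\mathcal{S}^{c}}$ already present in the conditioning. Introducing a time-sharing random variable $U$ uniform over $\{1,\ldots,n\}$ (together with whatever past quantities are needed to close each data-processing step) then converts the per-letter sum into the single-letter forms appearing in (\ref{MARC_OB_cutset}), after the usual $n\to\infty$, $\epsilon_{n}\to0$ limiting argument.

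Finally I would check that the joint distribution induced on $(U,X_{\mathcal{K}},X_{r},Y_{r},Y_{d})$ factors as (\ref{GMARC_converse_inpdist}): the independence of the messages forces the sources to be conditionally independent given $U$, i.e. $\prod_{k}p(x_{k}|u)$; the memoryless relation (\ref{Yr_defn})--(\ref{Yd_defn}) gives the channel factor $p(y_{r},y_{d}|x_{\mathcal{K}},x_{r})$; and the causal processing at the relay, $X_{r,i}=g_{i}(Y_{r}^{i-1})$, yields a dependence of $X_{r}$ on $(X_{\mathcal{K}},U)$ of the permitted form $p(x_{r}|x_{\mathcal{K}},u)$. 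I expect the main obstacle to lie exactly in this last point: because $X_{r,i}$ is a function of the relay's entire past $Y_{r}^{i-1}$, it is correlated with all the source inputs, so the single-letterization cannot treat $X_{r}$ as independent of $X_{\mathcal{K}}$, and I must argue that the relevant Markov relations survive the conditioning and that a single auxiliary $U$ can be chosen consistently across all $2^{K}$ subset constraints simultaneously. Handling this relay-induced correlation carefully---rather than the routine Fano and chain-rule steps---is the crux of the argument.
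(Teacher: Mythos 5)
Your proposal is correct and takes essentially the same route as the paper, which establishes Proposition \ref{Prop_OB} by citing the cut-set bound \cite[Th.~14.10.1]{cap_theorems:CTbook} specialized to independent sources as in \cite{cap_theorems:SKM02a}; your two-cut Fano/chain-rule/memorylessness derivation, single-letterized by a time-sharing variable $U$ uniform over $\{1,\ldots,n\}$, is exactly the standard argument behind that citation. The only point to tighten is your parenthetical and your closing ``crux'': no past quantities should ever be folded into $U$ --- the conditioning on $(W_{\mathcal{S}^{c}},Y_{d}^{i-1})$, resp.\ $(W_{\mathcal{S}^{c}},Y_{r}^{i-1},Y_{d}^{i-1})$, is removed by ``conditioning reduces entropy'' on the positive entropy term and by channel memorylessness on the negative one, after which $U$ is the bare time index, the same induced per-letter distribution serves both cuts and all $2^{K}$ subsets simultaneously, the sources remain conditionally independent given $U$ as (\ref{GMARC_converse_inpdist}) requires (this would fail if $U$ contained past channel outputs), and the relay-induced correlation is absorbed by the unconstrained factor $p(x_{r}|x_{\mathcal{K}},u)$, so the difficulty you anticipate dissolves.
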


\begin{remark}
The \textit{time-sharing} random variable $U$ ensures that the region in
(\ref{MARC_OB_cutset}) is convex. One can apply Caratheodory's theorem
\cite{cap_theorems:Eggbook01} to this $K$-dimensional convex region to bound
the cardinality of $U$ as $\left\vert \mathcal{U}\right\vert \leq K+1$.
\end{remark}

Consider the outer bounds in Proposition \ref{Prop_OB}. For a degraded
Gaussian\ MARC applying the degradness definition in (\ref{DG_Markov})
simplifies (\ref{MARC_OB_cutset}) as
\begin{equation}%
\begin{array}
[c]{cc}%
R_{\mathcal{S}}\leq\min\left\{  I(X_{\mathcal{S}};Y_{r}|X_{r}X_{\mathcal{S}%
^{c}}U),I(X_{\mathcal{S}}X_{r};Y_{d}|X_{\mathcal{S}^{c}}U)\right\}  &
\text{for all }\mathcal{S}\subseteq\mathcal{K}%
\end{array}
\label{DGMARC_OB_1}%
\end{equation}
for the same joint distribution in (\ref{GMARC_converse_inpdist}). In the
following theorem, we develop the bounds in (\ref{DGMARC_OB_1}) with $U$ as a
constant. For notational convenience, for a constant $U$, we write
$B_{r,\mathcal{S}}$ and $B_{d,\mathcal{S}}$ to denote the first and second
terms, respectively, of the minimum on the right-side of (\ref{DGMARC_OB_1}).
The proof of the following theorem is detailed in Appendix
\ref{DG_App0_OBProof}. \newline

\begin{theorem}
\label{DGOB_Th0}For a degraded Gaussian MARC, the bounds $B_{r,\mathcal{S}}$
and $B_{d,\mathcal{S}}$ are given by
\begin{equation}
B_{r,\mathcal{S}}=\left\{
\begin{array}
[c]{ll}%
C\left(
%TCIMACRO{\tsum \limits_{k\in\mathcal{S}}}%
%BeginExpansion
{\textstyle\sum\limits_{k\in\mathcal{S}}}
%EndExpansion
\frac{P_{k}}{N_{r}}\right)  &
%TCIMACRO{\tsum \limits_{k\in\mathcal{S}^{c}}}%
%BeginExpansion
{\textstyle\sum\limits_{k\in\mathcal{S}^{c}}}
%EndExpansion
\gamma_{k}=1\\
C\left(
%TCIMACRO{\tsum \limits_{k\in\mathcal{S}}}%
%BeginExpansion
{\textstyle\sum\limits_{k\in\mathcal{S}}}
%EndExpansion
\frac{P_{k}}{N_{r}}-\frac{\left(  \sum\limits_{k\in\mathcal{S}}\sqrt
{\gamma_{k}P_{k}}\right)  ^{2}}{N_{r}\overline{\gamma}_{\mathcal{S}^{c}}%
}\right)  & \text{otherwise}%
\end{array}
\right.  \label{Con_final_B1S}%
\end{equation}
and%
\begin{equation}
B_{d,\mathcal{S}}=C\left(  \frac{%
%TCIMACRO{\tsum \limits_{k\in\mathcal{S}}}%
%BeginExpansion
{\textstyle\sum\limits_{k\in\mathcal{S}}}
%EndExpansion
P_{k}+\overline{\gamma}_{\mathcal{S}^{c}}P_{r}+2%
%TCIMACRO{\tsum \limits_{k\in\mathcal{S}}}%
%BeginExpansion
{\textstyle\sum\limits_{k\in\mathcal{S}}}
%EndExpansion
\sqrt{\gamma_{k}P_{k}P_{r}}}{N_{d}}\right)  \label{Con_final_B2S}%
\end{equation}
where $\overline{\gamma}_{\mathcal{S}^{c}}=1-\sum_{k\in\mathcal{S}^{c}}%
\gamma_{k}$ and%
\begin{equation}%
\begin{array}
[c]{cc}%
\sqrt{\gamma_{k}P_{k}P_{r}}\overset{\vartriangle}{=}E(X_{k}X_{r}) & \text{for
all }k\in\mathcal{K}.
\end{array}
\end{equation}

\end{theorem}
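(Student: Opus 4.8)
The plan is to evaluate the two conditional mutual informations in (\ref{DGMARC_OB_1}) directly for jointly Gaussian inputs with $U$ a constant. Since the factorization (\ref{GMARC_converse_inpdist}) makes the sources $X_1,\ldots,X_K$ mutually independent given $U$, and (as argued for the Gaussian degraded MARC) Gaussian signaling maximizes these bounds, I treat $(X_{\mathcal{K}},X_r)$ as jointly Gaussian with $E[|X_k|^2]=P_k$, $E[|X_r|^2]=P_r$, and prescribed cross-correlations $E(X_kX_r)=\sqrt{\gamma_kP_kP_r}$. The central device is to write the relay signal as
\[
X_r=\sum_{k\in\mathcal{K}}\sqrt{\tfrac{\gamma_kP_r}{P_k}}\,X_k+V,
\]
where $V$ is independent of all sources; matching second moments forces $E[|V|^2]=P_r\bigl(1-\sum_{k\in\mathcal{K}}\gamma_k\bigr)\geq0$. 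This representation makes transparent how conditioning on a subset of sources strips out the corresponding components of $X_r$.

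For $B_{r,\mathcal{S}}=I(X_{\mathcal{S}};Y_r|X_rX_{\mathcal{S}^c})$ I expand it as $h(Y_r|X_r,X_{\mathcal{S}^c})-h(Y_r|X_{\mathcal{K}},X_r)$. The second entropy is simply $h(Z_r)=\tfrac12\log(2\pi eN_r)$. For the first, after subtracting the known signals $X_{\mathcal{S}^c}$, the residual output is $\sum_{k\in\mathcal{S}}X_k+Z_r$, so I need $\mathrm{Var}\bigl(\sum_{k\in\mathcal{S}}X_k\mid X_r,X_{\mathcal{S}^c}\bigr)$. I compute this by two-step conditioning: conditioning first on $X_{\mathcal{S}^c}$ gives $\mathrm{Var}(X_r\mid X_{\mathcal{S}^c})=P_r\overline{\gamma}_{\mathcal{S}^c}$ and $\mathrm{Cov}\bigl(\sum_{k\in\mathcal{S}}X_k,X_r\mid X_{\mathcal{S}^c}\bigr)=\sum_{k\in\mathcal{S}}\sqrt{\gamma_kP_kP_r}$; the scalar Gaussian regression formula then yields the residual variance $\sum_{k\in\mathcal{S}}P_k-\bigl(\sum_{k\in\mathcal{S}}\sqrt{\gamma_kP_k}\bigr)^2/\overline{\gamma}_{\mathcal{S}^c}$. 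Dividing by $N_r$ and applying $C(\cdot)$ produces the ``otherwise'' branch of (\ref{Con_final_B1S}).

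For $B_{d,\mathcal{S}}=I(X_{\mathcal{S}}X_r;Y_d|X_{\mathcal{S}^c})$ the computation is more direct: the subtracted term is $h(Z_d)=\tfrac12\log(2\pi eN_d)$, and the output entropy requires $\mathrm{Var}\bigl(\sum_{k\in\mathcal{S}}X_k+X_r\mid X_{\mathcal{S}^c}\bigr)$. Expanding this into $\mathrm{Var}\bigl(\sum_{k\in\mathcal{S}}X_k\mid X_{\mathcal{S}^c}\bigr)=\sum_{k\in\mathcal{S}}P_k$, $\mathrm{Var}(X_r\mid X_{\mathcal{S}^c})=\overline{\gamma}_{\mathcal{S}^c}P_r$, and twice the conditional covariance $\sum_{k\in\mathcal{S}}\sqrt{\gamma_kP_kP_r}$, then normalizing by $N_d$, yields (\ref{Con_final_B2S}) at once.

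The main obstacle is the two-step conditioning in $B_{r,\mathcal{S}}$: because $X_r$ correlates with \emph{every} source, including those in $\mathcal{S}^c$, conditioning on $X_{\mathcal{S}^c}$ shrinks the usable variance of $X_r$ to $P_r\overline{\gamma}_{\mathcal{S}^c}$, which is precisely the factor in the denominator of the residual-variance term. This also accounts for the case split in (\ref{Con_final_B1S}): when $\sum_{k\in\mathcal{S}^c}\gamma_k=1$ one has $\overline{\gamma}_{\mathcal{S}^c}=0$, so $X_r$ becomes a deterministic function of $X_{\mathcal{S}^c}$ and the extra conditioning on $X_r$ is vacuous, returning the residual variance to $\sum_{k\in\mathcal{S}}P_k$ and giving the first branch. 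I would dispatch this degenerate case by a separate limiting argument rather than by the regression formula, which is singular there.
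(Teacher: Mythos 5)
Your proposal is correct, and its skeleton is the same as the paper's proof in Appendix \ref{DG_App0_OBProof}: both reduce the two cutset mutual informations (with $U$ constant) to conditional variances under jointly Gaussian inputs --- invoking, rather than reproving, a maximum-entropy argument to justify Gaussianity --- and both evaluate $\mathrm{Var}\bigl(\sum_{k\in\mathcal{S}}X_{k}\mid X_{r},X_{\mathcal{S}^{c}}\bigr)$ by the scalar regression formula (the paper's (\ref{Cond_varXk})) with conditional covariance $\sum_{k\in\mathcal{S}}\sqrt{\gamma_{k}P_{k}P_{r}}$ and conditional variance $\overline{\gamma}_{\mathcal{S}^{c}}P_{r}$. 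Where you genuinely differ is in how the key identity $E[\mathrm{var}(X_{r}\mid X_{\mathcal{S}^{c}})]=\overline{\gamma}_{\mathcal{S}^{c}}P_{r}$ (the paper's (\ref{Con_varXr_condVs})) is established: the paper uses an iterative innovations argument, peeling off one source at a time via $\hat{X}_{r}$ and $\bar{X}_{r}$, whereas you posit the explicit representation $X_{r}=\sum_{k}\sqrt{\gamma_{k}P_{r}/P_{k}}\,X_{k}+V$ with $V$ independent of the sources, from which the identity drops out in one line. Your device buys three things the paper leaves implicit: the feasibility constraint $\sum_{k\in\mathcal{K}}\gamma_{k}\leq1$ of (\ref{OB_gammaK_sum}) appears automatically as $E[|V|^{2}]\geq0$ rather than being asserted by definition; the conditional covariance computation becomes transparent; and the first branch of (\ref{Con_final_B1S}) gets an honest justification --- when $\overline{\gamma}_{\mathcal{S}^{c}}=0$ your representation forces $\gamma_{k}=0$ for all $k\in\mathcal{S}$ and $V=0$, so $X_{r}$ is a deterministic function of $X_{\mathcal{S}^{c}}$, the extra conditioning is vacuous, and the residual variance is $\sum_{k\in\mathcal{S}}P_{k}$, whereas the paper writes the case split without commenting on the $0/0$ singularity of the regression formula at that point. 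The one caveat is that your representation presupposes jointly Gaussian inputs; to obtain the theorem at the stated level of generality (an upper bound valid for every admissible input distribution with the given correlations), the conditional-entropy-maximization step you allude to must actually be invoked, exactly as the paper does by citing its maximum-entropy lemma, so your proof is complete only modulo that same external ingredient.
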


\begin{remark}
For $K$ $=$ $1$, the bounds in (\ref{Con_final_B1S}) and (\ref{Con_final_B2S})
simplify to the first and second bound, respectively, for the degraded relay
channel in \cite[theorem 5]{cap_theorems:CEG01}.
\end{remark}

\begin{remark}
The source-relay cross-correlation variables $\gamma_{k}$, for all $k$,
satisfy (\ref{OB_gammaK_sum}), i.e., they lie in the closed convex region
$\Gamma_{OB}$ given by%
\begin{equation}
\Gamma_{OB}=\left\{  \underline{\gamma}_{\mathcal{K}}:\sum_{k\in\mathcal{K}%
}\gamma_{k}\leq1\right\}  .\label{DGOB_TOB}%
\end{equation}
The bound $B_{r,\mathcal{S}}$ in (\ref{Con_final_B1S}), in general, is not a
concave function of $\underline{\gamma}_{\mathcal{K}}$ for any $\mathcal{S}%
\subset\mathcal{K}$. For a fixed $\underline{\gamma}_{\mathcal{S}^{c}}$, in
Appendix \ref{DG_Appen_2} we show that $B_{r,\mathcal{S}}$ is a concave
function of $\underline{\gamma}_{\mathcal{S}}$. This in turn implies that
$B_{r,\mathcal{K}}$ is a concave function of $\underline{\gamma}_{\mathcal{K}%
}$. Further, in\ Appendix \ref{DG_AppConvex} we show that for all
$\mathcal{S}$, $B_{d,\mathcal{S}}$ in (\ref{Con_final_B2S}) is a concave
function of $\underline{\gamma}_{\mathcal{K}}$.
\end{remark}

\begin{remark}
In the expression for $B_{d,\mathcal{S}}$ in (\ref{Con_final_B2S}), the terms
involving the cross-correlation coefficients quantify the coherent combining
gains that result from choosing correlated source and relay signals. On the
other hand, the expression for $B_{r,\mathcal{S}}$ in (\ref{Con_final_B1S})
quantifies the upper bounds on the rate achievable at the relay when one or
more source signals are correlated with the transmitted signal at the relay.
\end{remark}

The rate region $\mathcal{R}_{OB}$ enclosed by the cut-set outer bounds is
obtained as follows. From (\ref{Con_finalbounds}) for any choice of
$\underline{\gamma}_{\mathcal{K}}$, the rate region is an intersection of the
regions enclosed by the bounds $B_{r,\mathcal{S}}$ and $B_{d,\mathcal{S}}$ for
all $\mathcal{S}$. Since $B_{r,\mathcal{S}}$ is not a concave function of
$\underline{\gamma}_{\mathcal{K}}$, one must also consider all possible convex
combinations of $\underline{\gamma}_{\mathcal{K}}$ to obtain $\mathcal{R}%
_{OB}$. For the $K$-dimensional convex region $\mathcal{R}_{OB}$, one can
apply Caratheodory's theorem \cite{cap_theorems:Eggbook01} to express every
rate tuple $(R_{1},R_{2},\ldots,R_{K})$ in $\mathcal{R}_{OB}$ as a convex
combination of at most $K+1$ rate tuples, where each rate tuple is obtained
for a specific choice of $\underline{\gamma}_{\mathcal{K}}$. Let $\Theta$
denote the collection of all vectors \underline{$\eta$} that satisfy
\begin{equation}%
%TCIMACRO{\tsum \nolimits_{m=1}^{K+1}}%
%BeginExpansion
{\textstyle\sum\nolimits_{m=1}^{K+1}}
%EndExpansion
\eta_{m}=1\label{Wt_vect_sum}%
\end{equation}
and let $\underline{\zeta}\equiv(\{\underline{\gamma}_{\mathcal{K}}\}_{K+1}%
,$\underline{$\eta$}$)\in\Gamma_{OB}^{K+1}\times\Theta$ denote a collection of
$K+1$ power fractions and weights such that the rate tuple achieved by the
$m^{th}$ vector $\underline{\gamma}_{\mathcal{K}}^{(m)}$ is weighted by the
$m^{th}$ non-negative entry of the weight vector \underline{$\eta$}, for all
$m\in\mathcal{K\cup}\left\{  K+1\right\}  $. Finally, since $\Gamma_{OB}$ in
(\ref{DGOB_TOB}) is a closed convex set, $%
%TCIMACRO{\tsum \nolimits_{m=1}^{K+1}}%
%BeginExpansion
{\textstyle\sum\nolimits_{m=1}^{K+1}}
%EndExpansion
\eta_{m}\underline{\gamma}_{\mathcal{K}}^{(m)}\in\Gamma_{OB}$. The following
theorem presents an outer bound on the capacity region of the degraded
Gaussian MARC.

\begin{theorem}
The capacity region $\mathcal{C}_{\text{MARC }}$of a degraded Gaussian MARC is
contained in the region $\mathcal{R}_{OB}$ given as%
\begin{equation}
\mathcal{R}_{OB}=\bigcup\limits_{\underline{\zeta}\in\Gamma_{OB}}\left(
\mathcal{R}_{r}^{ob}\left(  \underline{\zeta}\right)  \cap\mathcal{R}_{d}%
^{ob}\left(  \underline{\zeta}\right)  \right)  \label{GMARC_R_OB}%
\end{equation}
where the rate region $\mathcal{R}_{j}^{ob}\left(  \underline{\zeta}\right)
$, $j=r,d$, is
\begin{equation}
\mathcal{R}_{j}^{ob}\left(  \underline{\zeta}\right)  =\left\{  \left(
R_{1},R_{2},\ldots,R_{K}\right)  :0\leq R_{\mathcal{S}}\leq\overline
{B}_{j,\mathcal{S}}\left(  \underline{\zeta}\right)  \right\}
\end{equation}
and the bound $\overline{B}_{j,\mathcal{S}}$ is given by%
\begin{equation}
\overline{B}_{j,\mathcal{S}}\left(  \underline{\zeta}\right)  =\sum
\limits_{m=1}^{K+1}\eta_{m}B_{j,\mathcal{S}}\left(  \underline{\gamma
}_{\mathcal{K}}^{(m)}\right)  .\label{Con_convex_sum}%
\end{equation}

\end{theorem}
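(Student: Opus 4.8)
The plan is to reduce the claim to the per-slice bounds already computed in Theorem~\ref{DGOB_Th0} and then account for the time-sharing variable $U$ by a convexity argument. I would begin from Proposition~\ref{Prop_OB} together with the degradedness simplification \eqref{DGMARC_OB_1}, so that $\mathcal{C}_{\text{MARC}}$ is contained in the union, over all joint distributions factoring as \eqref{GMARC_converse_inpdist}, of the rate tuples obeying $R_{\mathcal{S}}\leq\min\{I(X_{\mathcal{S}};Y_{r}|X_{r}X_{\mathcal{S}^{c}}U),\,I(X_{\mathcal{S}}X_{r};Y_{d}|X_{\mathcal{S}^{c}}U)\}$ for every $\mathcal{S}\subseteq\mathcal{K}$. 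Fixing one such distribution, the next step is to condition on each realization $U=u$ and apply the maximum-entropy argument of Theorem~\ref{DGOB_Th0} slice by slice: for the additive Gaussian channels in \eqref{Yr_defn}--\eqref{Yd_defn}, a jointly Gaussian conditional law on $(X_{\mathcal{K}},X_{r})$ with second moments fixed by the power constraints \eqref{Pwr_cond} and the cross-correlations $E(X_kX_r)$ maximizes both conditional mutual informations. This yields, for each $u$, a correlation vector $\underline{\gamma}_{\mathcal{K}}^{(u)}\in\Gamma_{OB}$ and the bounds $B_{r,\mathcal{S}}(\underline{\gamma}_{\mathcal{K}}^{(u)})$ and $B_{d,\mathcal{S}}(\underline{\gamma}_{\mathcal{K}}^{(u)})$ of \eqref{Con_final_B1S}--\eqref{Con_final_B2S}.

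Next I would re-assemble the $U$-averaged bounds. Writing the conditional mutual information as $I(\cdot\,|\cdot\,U)=\sum_u p(u)\,I(\cdot\,|\cdot\,,U=u)$ and bounding each slice by its Gaussian-optimal value gives $I(X_{\mathcal{S}};Y_{r}|X_{r}X_{\mathcal{S}^{c}}U)\leq\sum_u p(u)B_{r,\mathcal{S}}(\underline{\gamma}_{\mathcal{K}}^{(u)})$ and the analogous inequality at the destination. Hence every admissible rate tuple satisfies $R_{\mathcal{S}}\leq\min\{\overline{B}_{r,\mathcal{S}},\overline{B}_{d,\mathcal{S}}\}$ for all $\mathcal{S}$, where each $\overline{B}_{j,\mathcal{S}}$ is the convex combination \eqref{Con_convex_sum}. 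The two families of constraints $R_{\mathcal{S}}\leq\overline{B}_{r,\mathcal{S}}$ and $R_{\mathcal{S}}\leq\overline{B}_{d,\mathcal{S}}$ are precisely the defining inequalities of $\mathcal{R}_{r}^{ob}(\underline{\zeta})$ and $\mathcal{R}_{d}^{ob}(\underline{\zeta})$, so the tuple lies in their intersection. To match the parametrization of $\underline{\zeta}$ I would invoke the cardinality bound $|\mathcal{U}|\leq K+1$ from the remark following Proposition~\ref{Prop_OB}, so that at most $K+1$ distinct vectors $\underline{\gamma}_{\mathcal{K}}^{(m)}$ with weights $\eta_m=p(u_m)$ obeying \eqref{Wt_vect_sum} suffice; taking the union over all such $\underline{\zeta}$ reproduces $\mathcal{R}_{OB}$ in \eqref{GMARC_R_OB}.

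The step I expect to be the main obstacle --- or at least the one that must be argued with care --- is preserving the coupling between the relay and destination bounds through a single time-sharing variable. Because $B_{r,\mathcal{S}}$ is in general non-concave in $\underline{\gamma}_{\mathcal{K}}$ (as noted in the remark after Theorem~\ref{DGOB_Th0}), one cannot replace the convex combination $\overline{B}_{r,\mathcal{S}}(\underline{\zeta})$ by $B_{r,\mathcal{S}}$ evaluated at the averaged correlation vector, so the convexification supplied by $U$ is essential and genuinely enlarges the region. Equally important, the \emph{same} collection $\{\underline{\gamma}_{\mathcal{K}}^{(m)}\}$ and weights $\underline{\eta}$ must appear in both $\mathcal{R}_{r}^{ob}$ and $\mathcal{R}_{d}^{ob}$; this is why the minimum of the two $U$-averaged bounds cannot be split as a $U$-average of slice-wise minima, and it is what keeps \eqref{GMARC_R_OB} a single union over $\underline{\zeta}$ rather than a product of two independent unions. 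The remaining items --- that each averaged vector stays in the convex set $\Gamma_{OB}$ and that the slice-wise Gaussian optimality survives conditioning on $U=u$ --- are routine given \eqref{DGOB_TOB} and the proof of Theorem~\ref{DGOB_Th0}.
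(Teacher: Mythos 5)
Your proposal is correct and takes essentially the same route as the paper: it rests on the cutset bounds of Proposition~\ref{Prop_OB} with the degradedness condition \eqref{DG_Markov}, the per-realization Gaussian maximization established in Theorem~\ref{DGOB_Th0}, and a $(K+1)$-point convexification of the resulting $\underline{\gamma}_{\mathcal{K}}$-parametrized bounds. The only difference is presentational --- the paper invokes Caratheodory's theorem on the convex region $\mathcal{R}_{OB}$ to justify the $(K+1)$-atom parametrization $\underline{\zeta}$, while you obtain the same object by averaging the slice-wise bounds over $U$ with $\eta_{m}=p(u_{m})$ --- and your explicit observation that one common $\underline{\zeta}$ must serve both $\mathcal{R}_{r}^{ob}$ and $\mathcal{R}_{d}^{ob}$ (so the minimum cannot be split across two independent unions) is exactly the coupling the paper's discussion leaves implicit.
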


\begin{theorem}
The regions $\mathcal{R}_{r}^{ob}\left(  \underline{\zeta}\right)  $ and
$\mathcal{R}_{d}^{ob}\left(  \underline{\zeta}\right)  $ are polymatroids.
\end{theorem}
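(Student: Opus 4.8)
The plan is to show that for each receiver $j\in\{r,d\}$ the set function $\overline{B}_{j,\mathcal{S}}(\underline{\zeta})$ is a polymatroid rank function, i.e. that it is normalized, monotone nondecreasing, and submodular; the region $\mathcal{R}_j^{ob}(\underline{\zeta})$ is then exactly the polymatroid associated with this rank function. First I would strip away the time-sharing weights: since $\overline{B}_{j,\mathcal{S}}(\underline{\zeta})=\sum_{m=1}^{K+1}\eta_m B_{j,\mathcal{S}}(\underline{\gamma}_{\mathcal{K}}^{(m)})$ with $\eta_m\geq 0$, and since monotonicity and submodularity are both preserved under nonnegative linear combinations, it suffices to prove that for every fixed feasible vector $\underline{\gamma}_{\mathcal{K}}\in\Gamma_{OB}$ the functions $B_{r,\mathcal{S}}$ and $B_{d,\mathcal{S}}$ of Theorem~\ref{DGOB_Th0} are rank functions. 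This reduces the whole statement to the single-$\underline{\gamma}_{\mathcal{K}}$ case.

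For fixed $\underline{\gamma}_{\mathcal{K}}$ I would work from the mutual-information forms behind (\ref{DGMARC_OB_1}), namely $B_{r,\mathcal{S}}=I(X_{\mathcal{S}};Y_r|X_rX_{\mathcal{S}^c})$ and $B_{d,\mathcal{S}}=I(X_{\mathcal{S}}X_r;Y_d|X_{\mathcal{S}^c})$ for jointly Gaussian inputs. Normalization and monotonicity are the routine steps. At the relay $B_{r,\emptyset}=C(0)=0$; at the destination $B_{d,\emptyset}=I(X_r;Y_d|X_{\mathcal{K}})\geq 0$ is generally positive, but the constraint $R_\emptyset=0\leq B_{d,\emptyset}$ is vacuous, so one replaces $B_{d,\emptyset}$ by $0$ and checks that the truncated function stays submodular (which it does once the original submodularity and $B_{d,\emptyset}\geq 0$ are in hand). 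Monotonicity follows in both cases from the chain rule, since enlarging $\mathcal{S}$ by a source $k$ changes the value by the nonnegative term $I(X_k;Y_j|\cdots)$.

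The substance is submodularity, which by the local (pairwise) criterion amounts to showing, for all $\mathcal{S}$ and $i,j\notin\mathcal{S}$, that $B_{j,\mathcal{S}\cup\{i\}}+B_{j,\mathcal{S}\cup\{j\}}\geq B_{j,\mathcal{S}\cup\{i,j\}}+B_{j,\mathcal{S}}$. Expanding each term with the chain rule collapses this, after cancellation, to the single inequality $I(X_i;X_j\mid Y_j,\text{cond})\geq I(X_i;X_j\mid\text{cond})$. For the destination the conditioning set is $X_{(\mathcal{S}\cup\{i,j\})^c}$, which consists only of mutually independent sources, so $I(X_i;X_j\mid\text{cond})=0$ and the inequality reduces to $I(X_i;X_j\mid Y_d,\cdots)\geq 0$, which always holds; hence $\mathcal{R}_d^{ob}$ is a polymatroid with essentially no extra work.

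The hard part, and the step I expect to be the main obstacle, is submodularity of the relay bound $B_{r,\mathcal{S}}$. Here the conditioning set contains $X_r$, and because each source is correlated with $X_r$ through $E(X_kX_r)=\sqrt{\gamma_kP_kP_r}$, the sources are \emph{not} conditionally independent given $X_r$; thus $I(X_i;X_j\mid X_r,\cdots)>0$ in general and neither the clean destination argument nor the textbook independent-input MAC polymatroid result applies. My approach would instead be through the explicit Gaussian expression: writing $B_{r,\mathcal{S}}=C\!\left(\sigma^2_{\mathcal{S}}/N_r\right)$ with $\sigma^2_{\mathcal{S}}=\mathrm{Var}\!\left(\sum_{k\in\mathcal{S}}X_k\mid X_r,X_{\mathcal{S}^c}\right)=\sum_{k\in\mathcal{S}}P_k-\big(\sum_{k\in\mathcal{S}}\sqrt{\gamma_kP_k}\big)^2/\overline{\gamma}_{\mathcal{S}^c}$, I would first try to prove that $\mathcal{S}\mapsto\sigma^2_{\mathcal{S}}$ is monotone and submodular, and then invoke the composition rule that a concave nondecreasing function (here $C$) of a monotone submodular function is submodular. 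The delicate point is the submodularity of $\sigma^2_{\mathcal{S}}$: since $\sum_{k\in\mathcal{S}}P_k$ is modular, this is equivalent to supermodularity of the subtracted ``square-over-linear'' term $(\sum_{k\in\mathcal{S}}\sqrt{\gamma_kP_k})^2/\overline{\gamma}_{\mathcal{S}^c}$, whose requisite second mixed difference I would compute directly. I flag this as the crux because its sign hinges on the precise interaction between the power profile $\{P_k\}$ and the correlation profile $\{\gamma_k\}$, and it is exactly here that I would want to check carefully whether submodularity holds for \emph{every} feasible $\underline{\gamma}_{\mathcal{K}}$ or only under an additional relation between the $\{P_k\}$ and the $\{\gamma_k\}$.
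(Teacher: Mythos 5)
Your reduction to a fixed $\underline{\gamma}_{\mathcal{K}}$ and your treatment of the destination region are correct, and they coincide with the paper's own argument (Appendix \ref{DG_App_PM}): submodularity of $B_{d,\mathcal{S}}$ collapses, after the chain rule, to $I(X_i;X_j|\,\cdot\,,Y_d)\geq I(X_i;X_j|\,\cdot\,)=0$, the right side vanishing because the sources are independent given $U$.

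The gap is exactly where you flagged it, and your suspicion is justified: submodularity of the relay bound $B_{r,\mathcal{S}}$ in (\ref{Con_final_B1S}) does \emph{not} hold for every feasible $\underline{\gamma}_{\mathcal{K}}$, so the crux step of your plan cannot be completed in the stated generality. Concretely, take $K=2$, $N_r=1$, $P_1=100$, $P_2=1$, $\gamma_1=0.01$, $\gamma_2=0.98$ (feasible, since $\gamma_1+\gamma_2<1$). Then (\ref{Con_final_B1S}) gives
\begin{equation}
B_{r,\{1\}}=C\!\left(100-\tfrac{1}{0.02}\right)=C(50),\qquad
B_{r,\{2\}}=C\!\left(1-\tfrac{0.98}{0.99}\right)=C(0.0101),\qquad
B_{r,\{1,2\}}=C\!\left(101-(1+\sqrt{0.98})^{2}\right)=C(97.04),
\end{equation}
so that $B_{r,\{1\}}+B_{r,\{2\}}=\tfrac{1}{2}\log(51\times1.0101)=\tfrac{1}{2}\log(51.5)<\tfrac{1}{2}\log(98.04)=B_{r,\{1,2\}}+B_{r,\emptyset}$, violating submodularity; equivalently, your variance function fails too, since $\sigma^{2}_{\{1\}}+\sigma^{2}_{\{2\}}=50.01<97.04=\sigma^{2}_{\{1,2\}}$. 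The mechanism is the one you identified: with $\gamma_1+\gamma_2$ near $1$, conditioning on $X_r$ makes $X_1$ and $X_2$ nearly deterministically dependent, and then $I(X_1;Y_r|X_2X_r)<I(X_1;Y_r|X_r)$, which is precisely the inequality that submodularity needs to go the other way. (Your composition lemma itself is fine; it is the inner function that is not submodular.)

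You should also know that the paper's own proof shares this hole. In Appendix \ref{DG_App_PM}, the step (\ref{DGAppPM_f2_4}) drops $X_{k_2}$ from a conditioning set that contains $X_r$ and justifies this by ``the independence of $X_k$''; under the outer-bound distribution (\ref{GMARC_converse_inpdist}) the sources are \emph{not} conditionally independent given $(X_r,U)$, and the example above shows the resulting inequality (\ref{DGAppPM_f2_5}) is false in general. The identical argument \emph{is} sound for the DF inner bound, where the conditioning includes $V_{\mathcal{K}}$ and conditional independence is restored, and for the Gaussian DF relay bounds $I_{r,\mathcal{S}}=C\bigl(\sum_{k\in\mathcal{S}}\alpha_kP_k/N_r\bigr)$, which are concave functions of a modular function. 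In the two-user counterexample the polytope $\mathcal{R}_r^{ob}$ happens to remain a polymatroid — the sum constraint is inactive and the region degenerates to a box — but the defining bounds are then not rank functions, which is what Lemma \ref{Lemma_PolyIntersect} and Theorem \ref{DGOB_Th1} actually require. So your closing question has a definite answer: submodularity of $B_{r,\mathcal{S}}$ holds only under additional relations between $\{P_k\}$ and $\{\gamma_k\}$ (e.g., in the symmetric case), and any correct proof of the theorem must either restrict $\underline{\gamma}_{\mathcal{K}}$ accordingly or argue about the polytopes directly rather than through submodularity of (\ref{Con_final_B1S}).
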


\begin{proof}
In Appendix \ref{DG_App_PM} we show that for each choice of input distribution
satisfying (\ref{GMARC_converse_inpdist}), the bounds in
(\ref{Prop_DF_rateregion}) are submodular set functions, i.e., they enclose
regions that are polymatroids. For the optimal Gaussian input distribution,
this implies that $\mathcal{R}_{r}^{ob}\left(  \underline{\zeta}\right)  $ and
$\mathcal{R}_{d}^{ob}\left(  \underline{\zeta}\right)  $ are polymatroids for
every choice of $\underline{\zeta}$.
\end{proof}%

%TCIMACRO{\TeXButton{B}{\begin{figure*}[tbp] \centering}}%
%BeginExpansion
\begin{figure*}[tbp] \centering
%EndExpansion%
%TCIMACRO{\FRAME{itbpF}{5.4518in}{3.8839in}{0in}{}{}{all_cases_thesis.eps}%
%{\special{ language "Scientific Word";  type "GRAPHIC";  display "USEDEF";
%valid_file "F";  width 5.4518in;  height 3.8839in;  depth 0in;
%original-width 0pt;  original-height 0pt;  cropleft "0";  croptop "1";
%cropright "1";  cropbottom "0";
%filename '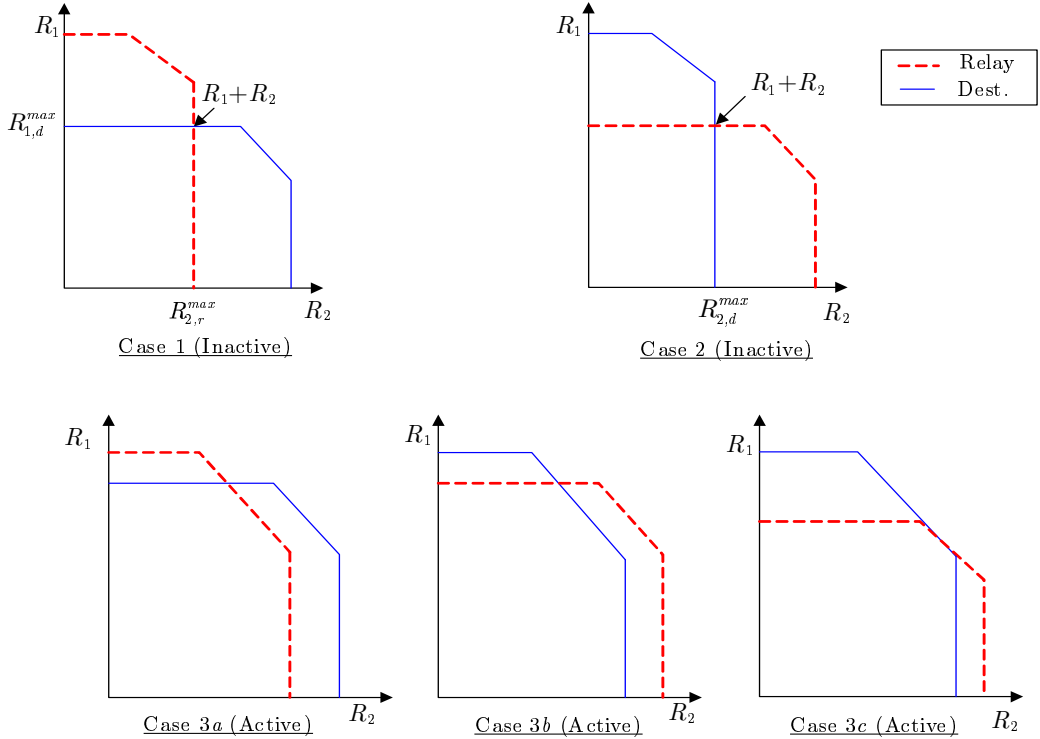';file-properties "XNPEU";}}}%
%BeginExpansion
{\includegraphics[
height=3.8839in,
width=5.4518in
]%
{All_Cases_thesis.eps}%
}%
%EndExpansion
\caption{Five possible intersections of $\mathcal{R}_{r}$ and $\mathcal{R}_{d}$ for a two-user Gaussian MARC.}\label{Fig_AllCases}%
%TCIMACRO{\TeXButton{E}{\end{figure*}}}%
%BeginExpansion
\end{figure*}%
%EndExpansion

The region $\mathcal{R}_{OB}$ in (\ref{GMARC_R_DF}) is a union of the
intersections of the regions $\mathcal{R}_{r}^{ob}$ and $\mathcal{R}_{d}^{ob}%
$, where the union is taken over all convex combinations of $\underline
{\gamma}_{\mathcal{K}}$. Since $\mathcal{R}_{OB}$ is convex, we obtain the
boundary of $\mathcal{R}_{OB}$ by maximizing the weighted sum $%
%TCIMACRO{\tsum \nolimits_{k\in\mathcal{K}}}%
%BeginExpansion
{\textstyle\sum\nolimits_{k\in\mathcal{K}}}
%EndExpansion
\mu_{k}R_{k}$ over all $\Gamma_{OB}$ and for all $\mu_{k}>0$. Specifically, we
determine the sum-rate $R_{\mathcal{K}}$ when $\mu_{k}$ $=$ $1$ for all $k$.
In general, to determine the intersecting polytope, one has to consider all
possible polytope shapes for the regions $\mathcal{R}_{r}^{ob}$ and
$\mathcal{R}_{d}^{ob}$. However, since $\mathcal{R}_{r}^{ob}$ and
$\mathcal{R}_{d}^{ob}$ are polymatroids, we use the following lemma on
polymatroid intersections \cite[p. 796, Cor. 46.1c]{cap_theorems:Schrijver01}
to broadly classify the intersection of two polymatroids into two categories.
The first \textit{inactive set }category\textit{ }includes all intersections
for which the constraints on the two $K$-user sum-rates are not active. This
implies that no rate tuple on the sum-rate plane achieved at one of the
receivers lies within or on the boundary of the rate region achieved at the
other receiver. On the other hand, the intersections for which there exists at
least one such rate tuple, i.e., the constraints on the two $K$-user sum-rates
are active in the final intersection, belong to the category of \textit{active
set}. In Fig. \ref{Fig_AllCases}, for a two-user MARC we illustrate the five
possible choices for the sum-rate resulting from an intersection of
$\mathcal{R}_{r}^{ob}(\underline{\gamma}_{\mathcal{K}})$ and $\mathcal{R}%
_{d}^{ob}(\underline{\gamma}_{\mathcal{K}})$. Cases $1$ and $2$ belong to the
inactive set while cases $3a,$ $3b$, and $3c$ belong to the active set. We
henceforth refer to members of the active and the inactive sets as active and
inactive cases, respectively. Note that Fig. \ref{Fig_AllCases} illustrates
two specific $\mathcal{R}_{r}^{ob}$ and $\mathcal{R}_{d}^{ob}$ polymatroids
for cases $3a$, $3b$, and $3c$. In general the active set includes all
intersections that satisfy the definition for this set including cases such as
$\mathcal{R}_{r}^{ob}\subseteq\mathcal{R}_{d}^{ob}$ and vice-versa. Finally,
note that the sum-rate is a minimum of the sum-rates at the two receivers only
for the active cases $3a$, $3b$, and $3c$. For the inactive cases $1$ and $2$,
the $R_{1}+R_{2}$ constraints are no longer active and the sum-rate is given
by the bounds $\overline{B}_{r,\{2\}}+\overline{B}_{d,\{1\}}$ and
$\overline{B}_{r,\{2\}}+\overline{B}_{d,\{1\}}$, respectively. We use the
following lemma on polymatroid intersections to generalize this observation
and develop an outer bound on the $K$-user sum-rate.

\begin{lemma}
\label{Lemma_PolyIntersect}Let $R_{\mathcal{S}}\leq f_{1}\left(
\mathcal{S}\right)  $ and $R_{\mathcal{S}}\leq f_{2}\left(  \mathcal{S}%
\right)  $, for all $\mathcal{S}\subseteq\mathcal{K}$, be two polymatroids
such that $f_{1}$ and $f_{2}$ are nondecreasing submodular set functions on
$\mathcal{K}$ with $f_{1}\left(  \emptyset\right)  =f_{2}\left(
\emptyset\right)  =0$. Then
\begin{equation}
\max R_{\mathcal{K}}=\min\limits_{\mathcal{S}\subseteq\mathcal{K}}\left(
f_{1}\left(  \mathcal{S}\right)  +f_{2}\left(  \mathcal{K}\backslash
\mathcal{S}\right)  \right)  .\label{DG_Lemma_RK}%
\end{equation}

\end{lemma}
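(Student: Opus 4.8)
The plan is to prove the two inequalities in \eqref{DG_Lemma_RK} separately, since they are of very different difficulty. The direction $\max R_{\mathcal{K}}\le\min_{\mathcal{S}}\bigl(f_{1}(\mathcal{S})+f_{2}(\mathcal{K}\setminus\mathcal{S})\bigr)$ is elementary, whereas the reverse inequality is the combinatorial heart and will need the full polymatroid (submodular) structure. First I would establish the upper bound. Fix any $R$ in the intersection and any $\mathcal{S}\subseteq\mathcal{K}$. Since the rates are nonnegative the total splits exactly as $R_{\mathcal{K}}=R_{\mathcal{S}}+R_{\mathcal{K}\setminus\mathcal{S}}$; applying the first polymatroid constraint to $\mathcal{S}$ and the second to $\mathcal{K}\setminus\mathcal{S}$ gives $R_{\mathcal{K}}\le f_{1}(\mathcal{S})+f_{2}(\mathcal{K}\setminus\mathcal{S})$. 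As this holds for every $\mathcal{S}$ and every feasible $R$, minimizing the right side over $\mathcal{S}$ and maximizing the left over feasible $R$ yields $\max R_{\mathcal{K}}\le\min_{\mathcal{S}}\bigl(f_{1}(\mathcal{S})+f_{2}(\mathcal{K}\setminus\mathcal{S})\bigr)$.

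For the reverse inequality I would let $\mathcal{S}^{*}$ be a minimizer and exhibit a feasible rate vector attaining the value $f_{1}(\mathcal{S}^{*})+f_{2}(\mathcal{K}\setminus\mathcal{S}^{*})$. Viewing $\max R_{\mathcal{K}}$ over the intersection as a linear program, I would pass to its dual, which minimizes $\sum_{\mathcal{S}}y_{\mathcal{S}}f_{1}(\mathcal{S})+\sum_{\mathcal{T}}z_{\mathcal{T}}f_{2}(\mathcal{T})$ over nonnegative multipliers that cover each element $k$ with total weight at least one. Using submodularity of $f_{1}$, one uncrosses the support of $y$ into a chain without increasing the objective (replacing weight on $\mathcal{A},\mathcal{B}$ by weight on $\mathcal{A}\cup\mathcal{B},\mathcal{A}\cap\mathcal{B}$), and similarly for $z$; a final uncrossing between the two chains collapses the optimal dual to a single indicator pair $y=\mathbf{1}_{\mathcal{S}^{*}}$, $z=\mathbf{1}_{\mathcal{K}\setminus\mathcal{S}^{*}}$, at which point strong LP duality delivers the claimed minimum. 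Equivalently, and more briefly, the stated equality is precisely the polymatroid intersection theorem specialized to the all-ones objective, i.e. the cited \cite[p. 796, Cor. 46.1c]{cap_theorems:Schrijver01}, which I would invoke to close the argument.

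The hard part will be this reverse direction: the elementary split gives the bound for free, but producing a feasible rate vector that attains $f_{1}(\mathcal{S}^{*})+f_{2}(\mathcal{K}\setminus\mathcal{S}^{*})$ requires the submodular structure in an essential way. The uncrossing step that reduces the dual optimum to the single cut $(\mathcal{S}^{*},\mathcal{K}\setminus\mathcal{S}^{*})$ is exactly where submodularity and the minimality of $\mathcal{S}^{*}$ enter, and it is the step that fails for general (non-submodular) set functions. If one instead prefers a constructive route, the same obstacle resurfaces as the need to verify that a rate vector formed by stacking a base of $f_{1}$ on $\mathcal{S}^{*}$ with a base of $f_{2}$ on $\mathcal{K}\setminus\mathcal{S}^{*}$ stays feasible for \emph{all} subsets straddling the cut, again a consequence of submodularity together with the optimality of $\mathcal{S}^{*}$. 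Given that \cite{cap_theorems:Schrijver01} is already available, I would favor invoking the cited corollary directly and recording only the short split argument in full.
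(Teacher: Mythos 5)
Your proposal is correct and ultimately rests on the same ground as the paper: the paper states this lemma without any proof of its own, citing precisely \cite[p.~796, Cor.~46.1c]{cap_theorems:Schrijver01} (Edmonds' polymatroid intersection theorem), which is exactly the corollary you invoke to close the hard direction. Your additional material — the elementary splitting argument for $\max R_{\mathcal{K}}\leq\min_{\mathcal{S}\subseteq\mathcal{K}}\left(f_{1}\left(\mathcal{S}\right)+f_{2}\left(\mathcal{K}\setminus\mathcal{S}\right)\right)$ and the LP-duality/uncrossing sketch of the reverse inequality — is a standard and correct outline of how that cited result is proved, so it goes beyond, but does not diverge from, the paper's treatment.
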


From Lemma \ref{Lemma_PolyIntersect} we see that the maximum $K$-user sum-rate
$R_{\mathcal{K}}$ that results from the intersection of two polymatroids,
$R_{\mathcal{S}}\leq f_{1}\left(  \mathcal{S}\right)  $ and $R_{\mathcal{S}%
}\leq f_{2}\left(  \mathcal{S}\right)  $ is given by the minimum of the two
$K$-user sum-rate planes $f_{1}\left(  \mathcal{K}\right)  $ and $f_{2}\left(
\mathcal{K}\right)  $ only if both the sum-rates are at most as large as the
sum of the orthogonal rate planes $f_{1}\left(  \mathcal{S}\right)  $ and
$f_{2}\left(  \mathcal{K}\backslash\mathcal{S}\right)  $, for all
$\emptyset\not =\mathcal{S}\subset\mathcal{K}$. Further, the resulting
intersection belongs to the set of active cases. Conversely, when there exists
at least one $\emptyset\not =\mathcal{S}\subset\mathcal{K}$ for which the
above condition is not true, an inactive case results. Physically, an inactive
case results when a subset $\mathcal{S}$ of all users achieve better rates at
one of the receivers while the remaining subset of users achieve a better rate
at the other receiver. For such inactive cases, the maximum sum-rate in
(\ref{DG_Lemma_RK}) is the sum of two orthogonal rate planes achieved by the
two complementary subsets of users. As a result, the $K$-user sum-rate bounds
$f_{1}(\mathcal{K})$ and $f_{2}(\mathcal{K})$ are no longer active for this
case, and thus, the region of intersection is no longer a polymatroid with
$2^{K}-1$ faces.

In the following theorem we use Lemma \ref{Lemma_PolyIntersect} to develop the
upper bound on the $K$-user sum-rate. For a Gaussian input distribution, the
polymatroids $\mathcal{R}_{r}^{ob}$ and $\mathcal{R}_{d}^{ob}$ are
parametrized by $\underline{\zeta}$, and thus, Lemma \ref{Lemma_PolyIntersect}
applies for each choice of $\underline{\zeta}$.

\begin{theorem}
\label{DGOB_Th1}For each $\underline{\zeta}\in\Gamma_{OB}$, the maximum
$K$-user sum-rate $R_{\mathcal{K}}$ resulting from the intersecting
polymatroids $\mathcal{R}_{r}^{ob}$ and $\mathcal{R}_{d}^{ob}$ is
\begin{equation}
R_{\mathcal{K}}=\left\{
\begin{array}
[c]{ll}%
\overline{B}_{d,\mathcal{A}}+\overline{B}_{r,\mathcal{A}^{c}} &
\text{condition }1\text{ }\\
\min\left(  \overline{B}_{r,\mathcal{K}},\overline{B}_{d,\mathcal{K}}\right)
& \text{otherwise}%
\end{array}
\right.  \label{DG_OB_RK}%
\end{equation}
where $\overline{B}_{d,\mathcal{S}}$ and $\overline{B}_{r,\mathcal{S}}$ for
all $\mathcal{S}$ are functions of $\underline{\zeta}_{\mathcal{K}}$ and
condition $1$ is defined for any $\emptyset\not =\mathcal{A\subset K}$ as%
\begin{equation}
\overline{B}_{d,\mathcal{A}}+\overline{B}_{r,\mathcal{A}^{c}}<\min\left(
\overline{B}_{r,\mathcal{K}},\overline{B}_{d,\mathcal{K}}\right)
.\label{DGOB_NI_Cond}%
\end{equation}

\end{theorem}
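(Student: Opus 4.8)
The plan is to read off the statement as a direct specialization of Lemma~\ref{Lemma_PolyIntersect}. First I would invoke the preceding theorem, which establishes that for each fixed $\underline{\zeta}$ the regions $\mathcal{R}_{r}^{ob}(\underline{\zeta})$ and $\mathcal{R}_{d}^{ob}(\underline{\zeta})$ are polymatroids; equivalently, the bound functions $\overline{B}_{r,\mathcal{S}}$ and $\overline{B}_{d,\mathcal{S}}$ defined by the convex combinations in (\ref{Con_convex_sum}) are nondecreasing submodular set functions on $\mathcal{K}$ that vanish at $\emptyset$. Submodularity, monotonicity, and the value $0$ at the empty set are each preserved under the nonnegative convex combination over $m$ in (\ref{Con_convex_sum}), so the three hypotheses of Lemma~\ref{Lemma_PolyIntersect} hold for every $\underline{\zeta}$. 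I would then apply the lemma with $f_{1}=\overline{B}_{d,\cdot}$ and $f_{2}=\overline{B}_{r,\cdot}$ to obtain
\[
\max R_{\mathcal{K}}=\min_{\mathcal{S}\subseteq\mathcal{K}}\left(\overline{B}_{d,\mathcal{S}}+\overline{B}_{r,\mathcal{K}\backslash\mathcal{S}}\right).
\]

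Next I would split the minimization over $\mathcal{S}\subseteq\mathcal{K}$ into the two boundary subsets $\mathcal{S}\in\{\emptyset,\mathcal{K}\}$ and the proper nonempty subsets $\emptyset\not=\mathcal{S}\subset\mathcal{K}$. Using $\overline{B}_{d,\emptyset}=\overline{B}_{r,\emptyset}=0$, the choice $\mathcal{S}=\emptyset$ yields $\overline{B}_{r,\mathcal{K}}$ and the choice $\mathcal{S}=\mathcal{K}$ yields $\overline{B}_{d,\mathcal{K}}$, so the boundary terms together contribute $\min(\overline{B}_{r,\mathcal{K}},\overline{B}_{d,\mathcal{K}})$. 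The global minimum is therefore the smaller of this boundary value and the interior value $\min_{\emptyset\not=\mathcal{S}\subset\mathcal{K}}(\overline{B}_{d,\mathcal{S}}+\overline{B}_{r,\mathcal{S}^{c}})$, which is precisely the dichotomy to be established.

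Finally I would carry out the case analysis matching the two branches of (\ref{DG_OB_RK}). If condition~$1$ holds, i.e.\ there exists $\emptyset\not=\mathcal{A}\subset\mathcal{K}$ with $\overline{B}_{d,\mathcal{A}}+\overline{B}_{r,\mathcal{A}^{c}}<\min(\overline{B}_{r,\mathcal{K}},\overline{B}_{d,\mathcal{K}})$, then the interior minimum falls strictly below the boundary value, so the global minimum is attained at the minimizing interior subset $\mathcal{A}$ and $R_{\mathcal{K}}=\overline{B}_{d,\mathcal{A}}+\overline{B}_{r,\mathcal{A}^{c}}$; otherwise every interior term is at least $\min(\overline{B}_{r,\mathcal{K}},\overline{B}_{d,\mathcal{K}})$ and the minimum is realized on the boundary, giving $R_{\mathcal{K}}=\min(\overline{B}_{r,\mathcal{K}},\overline{B}_{d,\mathcal{K}})$. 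There is no genuinely hard step here: the theorem is essentially a bookkeeping specialization of Lemma~\ref{Lemma_PolyIntersect}. The one point needing care is the role assignment of $f_{1}$ and $f_{2}$; I would note that the value of the minimum is unchanged under the substitution $\mathcal{S}\mapsto\mathcal{K}\backslash\mathcal{S}$, which merely swaps $\overline{B}_{r,\cdot}$ and $\overline{B}_{d,\cdot}$, so the identification $f_{1}=\overline{B}_{d,\cdot}$, $f_{2}=\overline{B}_{r,\cdot}$ is chosen only to match the form of condition~$1$, and I would verify that the subsets $\emptyset$ and $\mathcal{K}$ are exactly the ones reproducing the $\min(\overline{B}_{r,\mathcal{K}},\overline{B}_{d,\mathcal{K}})$ term.
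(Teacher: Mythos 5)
Your proposal is correct and follows essentially the same route as the paper, whose proof consists of a single sentence invoking Lemma~\ref{Lemma_PolyIntersect} for each fixed $\underline{\zeta}$. You simply fill in the bookkeeping the paper leaves implicit: checking that the $\overline{B}_{j,\mathcal{S}}$ inherit the submodular/nondecreasing/zero-at-$\emptyset$ properties under the convex combination in (\ref{Con_convex_sum}), splitting the minimization in (\ref{DG_Lemma_RK}) into the boundary subsets $\mathcal{S}\in\{\emptyset,\mathcal{K}\}$ versus proper nonempty subsets, and reading off the two branches of (\ref{DG_OB_RK}).
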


\begin{remark}
The condition in (\ref{DGOB_NI_Cond}) determines whether the intersection of
two polymatroids belongs to either the set of active or the set of inactive
cases with respect to the $K$-user sum-rate.
\end{remark}

\begin{proof}
The proof follows from applying Lemma \ref{Lemma_PolyIntersect} to the
maximization of $R_{\mathcal{K}}$ for each choice of $\underline{\zeta}$.
\end{proof}

For a fixed transmit power $P_{k}$, for all $k\in\mathcal{T}$, and noise
variances $N_{r}$ and $N_{d}$, the choice of $\underline{\zeta}$ determines
whether the intersection of $\mathcal{R}_{r}^{ob}\left(  \underline{\zeta
}\right)  $ and $\mathcal{R}_{d}^{ob}\left(  \underline{\zeta}\right)  $
belongs to the set of active or inactive cases. For each choice of
$\underline{\zeta}$, from Theorem \ref{DGOB_Th1} an active case results only
if for all $2^{K}-1$ non-empty subsets $\mathcal{A}$ of $\mathcal{K}$, the
condition in (\ref{DGOB_NI_Cond}) does not hold. Further, for any
$\underline{\zeta}$ that results in an inactive case, from Theorem
\ref{DGOB_Th1}, the sum-rate is bounded as
\begin{equation}
\overline{B}_{d,\mathcal{A}}+\overline{B}_{r,\mathcal{A}^{c}}<\min\left(
\overline{B}_{r,\mathcal{K}},\overline{B}_{d,\mathcal{K}}\right)
<\max_{\underline{\zeta}\in\Gamma_{OB}}\min\left(  \overline{B}_{r,\mathcal{K}%
},\overline{B}_{d,\mathcal{K}}\right)  .
\end{equation}
To this end, we consider the optimization problem
\begin{equation}
R_{\mathcal{K}}=\max\limits_{\underline{\zeta}\in\Gamma_{OB}}\min\left(
\overline{B}_{r,\mathcal{K}}\left(  \underline{\zeta}\right)  ,\overline
{B}_{d,\mathcal{K}}\left(  \underline{\zeta}\right)  \right)  .
\label{DGOB_RK_maxmin}%
\end{equation}
In general, optimizing non-convex functions is not straightforward. However,
since $B_{r,\mathcal{K}}$ and $B_{d,\mathcal{K}}$ are concave functions of
$\underline{\gamma}_{\mathcal{K}}$, the above max-min optimization simplifies
to%
\begin{equation}
R_{\mathcal{K}}=\max\limits_{\underline{\gamma}_{\mathcal{K}}\in\Gamma_{OB}%
}\min\left\{  B_{r,\mathcal{K}}\left(  \underline{\gamma}_{\mathcal{K}%
}\right)  ,B_{d,\mathcal{K}}\left(  \underline{\gamma}_{\mathcal{K}}\right)
\right\}  . \label{DGOB_RK_MM1}%
\end{equation}
Note that the optimization is performed over the same set in
(\ref{DGOB_RK_maxmin}) and (\ref{DGOB_RK_MM1}) as $\Gamma_{OB}$ is a closed
convex set. In Appendix \ref{DG_App4OBProof}, we show that the
\textit{max-min} problem in (\ref{DGOB_RK_MM1}) is a dual of the classical
\textit{minimax} problem of detection theory, (see for e.g., \cite[II.C]%
{cap_theorems:HVPoor01}). This allows us to apply the techniques used to
obtain a minimax solution to maximize the bounds in (\ref{DGOB_RK_MM1}) over
all $\underline{\gamma}_{\mathcal{K}}$ in $\Gamma_{OB}$ (see also
\cite{cap_theorems:LiangVP_ResAllocJrnl}). We write $\underline{\gamma
}_{\mathcal{K}}^{\ast}$ to denote a sum-rate optimal allocation, i.e., a
\textit{max-min rule}, and write $\mathcal{G}$ to denote the set of all
$\underline{\gamma}_{\mathcal{K}}^{\ast}$ maximizing (\ref{DGOB_RK_MM1}). A
general solution to the max-min optimization in (\ref{DGOB_RK_MM1}) simplifies
to three cases \cite[II.C]{cap_theorems:HVPoor01}. The first two correspond to
those in which the maximum achieved by one of the two functions is smaller
than the other, while the third corresponds to the case in which the maximum
results when the two functions are equal (see Fig. \ref{Fig_Prop_1}). For
$B_{r,\mathcal{K}}$ and $B_{d,\mathcal{K}}$ defined in (\ref{Con_final_B1S})
and (\ref{Con_final_B2S}), respectively, we now show that the solution
simplifies to the consideration of only two cases. The following theorem
summarizes the solution to the max-min problem in (\ref{DGOB_RK_MM1}). The
proof is developed in Appendix \ref{DG_App4OBProof}.

\begin{theorem}
\label{DGOB_Th2}The max-min optimization
\begin{equation}
R_{\mathcal{K}}=\max\limits_{\underline{\gamma}_{\mathcal{K}}\in\Gamma_{OB}%
}\min\left\{  B_{r,\mathcal{K}}\left(  \underline{\gamma}_{\mathcal{K}%
}\right)  ,B_{d,\mathcal{K}}\left(  \underline{\gamma}_{\mathcal{K}}\right)
\right\}  \label{DGOB_RK_MM2}%
\end{equation}
simplifies to the following two cases.%
\begin{equation}%
\begin{array}
[c]{lll}%
\text{Case }1\text{:} & R_{\mathcal{K}}=C\left(  \frac{%
%TCIMACRO{\tsum \limits_{k\in\mathcal{K}}}%
%BeginExpansion
{\textstyle\sum\limits_{k\in\mathcal{K}}}
%EndExpansion
P_{k}}{N_{r}}\right)  , & B_{r,\mathcal{K}}\left(  \underline{0}\right)
<B_{d,\mathcal{K}}\left(  \underline{0}\right) \\
\text{Case }2\text{:} & R_{\mathcal{K}}=C\left(  \left(
%TCIMACRO{\tsum \limits_{k\in\mathcal{K}}}%
%BeginExpansion
{\textstyle\sum\limits_{k\in\mathcal{K}}}
%EndExpansion
\frac{P_{k}}{N_{r}}\right)  -\frac{\left(  x^{\ast}\right)  ^{2}P_{\max}%
}{N_{r}}\right)  \equiv B^{\ast}, & B_{r,\mathcal{K}}^{\ast}\left(
\underline{\gamma}_{\mathcal{K}}^{\ast}\right)  =B_{d,\mathcal{K}}^{\ast
}\left(  \underline{\gamma}_{\mathcal{K}}^{\ast}\right)
\end{array}
\label{DGOB_RKmax}%
\end{equation}
where $P_{\max}=\max_{k\in\mathcal{K}}P_{k}$, $\lambda_{k}\overset
{\vartriangle}{=}P_{k}/P_{\max}$, and $x^{\ast}\overset{\vartriangle}{=}%
\sum_{k\in\mathcal{K}}\sqrt{\lambda_{k}\gamma_{k}^{\ast}}$ is the unique
solution satisfying $B_{r.\mathcal{K}}\left(  \underline{\gamma}_{\mathcal{K}%
}^{\ast}\right)  =B_{r.\mathcal{K}}\left(  \underline{\gamma}_{\mathcal{K}%
}^{\ast}\right)  $ and is given by%
\begin{equation}
x^{\ast}=\frac{-K_{1}+\sqrt{K_{1}^{2}+\left(  K_{3}-K_{2}\right)  K_{0}}%
}{K_{0}} \label{DGOB_xstar}%
\end{equation}
with
\begin{equation}%
\begin{array}
[c]{ll}%
K_{0}=P_{\max}\left/  N_{r}\right.  , & K_{1}=\sqrt{P_{\max}P_{r}}\left/
N_{d}\right. \\
K_{2}=\frac{\sum\nolimits_{k\in\mathcal{K}}P_{k}}{N_{d}}+\frac{P_{r}}{N_{d}%
},\text{ and} & K_{3}=\frac{%
%TCIMACRO{\tsum \nolimits_{k\in\mathcal{K}}}%
%BeginExpansion
{\textstyle\sum\nolimits_{k\in\mathcal{K}}}
%EndExpansion
P_{k}}{N_{r}}.
\end{array}
\label{DGOB_xstar1}%
\end{equation}

\end{theorem}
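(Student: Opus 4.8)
The plan is to exploit the monotonicity of $C(\cdot)$ together with a change of variables that collapses the $K$-dimensional optimization to a one-dimensional max-min problem. First I would specialize Theorem~\ref{DGOB_Th0} to $\mathcal{S}=\mathcal{K}$, so that $\mathcal{S}^{c}=\emptyset$ and $\overline{\gamma}_{\emptyset}=1$: then (\ref{Con_final_B1S}) and (\ref{Con_final_B2S}) read $B_{r,\mathcal{K}}=C\!\left(\sum_{k}P_{k}/N_{r}-(\sum_{k}\sqrt{\gamma_{k}P_{k}})^{2}/N_{r}\right)$ and $B_{d,\mathcal{K}}=C\!\left((\sum_{k}P_{k}+P_{r}+2\sum_{k}\sqrt{\gamma_{k}P_{k}P_{r}})/N_{d}\right)$. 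Writing $\lambda_{k}=P_{k}/P_{\max}$ and $x=\sum_{k}\sqrt{\lambda_{k}\gamma_{k}}$, one has $\sum_{k}\sqrt{\gamma_{k}P_{k}}=\sqrt{P_{\max}}\,x$, so both SNR arguments depend on $\underline{\gamma}_{\mathcal{K}}$ only through the scalar $x$: the relay argument is $f_{r}(x)=K_{3}-K_{0}x^{2}$, strictly decreasing in $x\geq0$, and the destination argument is $f_{d}(x)=K_{2}+2K_{1}x$, strictly increasing, with $K_{0},K_{1},K_{2},K_{3}$ as in (\ref{DGOB_xstar1}). Since $C$ is strictly increasing, (\ref{DGOB_RK_MM2}) is equivalent to maximizing $\min\{f_{r}(x),f_{d}(x)\}$ over the attainable range of $x$.

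Next I would determine that range as $\underline{\gamma}_{\mathcal{K}}$ ranges over $\Gamma_{OB}$. Clearly $x\geq0$ with $x=0$ at $\underline{\gamma}_{\mathcal{K}}=\underline{0}$. By Cauchy--Schwarz, $x=\sum_{k}\sqrt{\lambda_{k}}\sqrt{\gamma_{k}}\leq(\sum_{k}\lambda_{k})^{1/2}(\sum_{k}\gamma_{k})^{1/2}\leq(\sum_{k}\lambda_{k})^{1/2}$, with equality at $\gamma_{k}=\lambda_{k}/\sum_{j}\lambda_{j}$; by continuity of $x$ on the connected set $\Gamma_{OB}$ the whole interval $x\in[0,x_{\max}]$ with $x_{\max}=(\sum_{k}\lambda_{k})^{1/2}$ is attained. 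The crucial identity is $P_{\max}x_{\max}^{2}=P_{\max}\sum_{k}\lambda_{k}=\sum_{k}P_{k}$, which gives $f_{r}(x_{\max})=0$ while $f_{d}(x_{\max})>0$. Thus $f_{r}<f_{d}$ at the right endpoint, which rules out the minimax case in which the increasing function dominates everywhere, and this is precisely why the general three-case minimax solution collapses to two cases.

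With the one-dimensional picture in hand, the max-min is a standard crossover argument. If $B_{r,\mathcal{K}}(\underline{0})<B_{d,\mathcal{K}}(\underline{0})$, equivalently $f_{r}(0)<f_{d}(0)$, then monotonicity forces $f_{r}(x)<f_{d}(x)$ for every $x\in[0,x_{\max}]$, so $\min\{f_{r},f_{d}\}=f_{r}$ is maximized at $x=0$, giving Case~$1$ with $R_{\mathcal{K}}=C(\sum_{k}P_{k}/N_{r})$. Otherwise $f_{r}(0)\geq f_{d}(0)$; since $f_{r}(x_{\max})<f_{d}(x_{\max})$ and $f_{r}-f_{d}$ is strictly decreasing on $[0,x_{\max}]$, there is a unique $x^{\ast}\in[0,x_{\max}]$ with $f_{r}(x^{\ast})=f_{d}(x^{\ast})$. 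On $[0,x^{\ast}]$ the minimum equals the increasing $f_{d}$ and on $[x^{\ast},x_{\max}]$ it equals the decreasing $f_{r}$, so the max-min is attained at $x^{\ast}$, giving Case~$2$ with $R_{\mathcal{K}}=C(f_{r}(x^{\ast}))\equiv B^{\ast}$ and $B_{r,\mathcal{K}}^{\ast}=B_{d,\mathcal{K}}^{\ast}$.

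Finally I would solve $f_{r}(x^{\ast})=f_{d}(x^{\ast})$, i.e.\ $K_{0}(x^{\ast})^{2}+2K_{1}x^{\ast}+(K_{2}-K_{3})=0$. Because $K_{0}>0$ and, in Case~$2$, $K_{3}-K_{2}=f_{r}(0)-f_{d}(0)\geq0$ so the constant term $K_{2}-K_{3}\leq0$, the two roots have product $\leq0$; hence exactly one root is nonnegative, namely the $+$ root, yielding $x^{\ast}=\left(-K_{1}+\sqrt{K_{1}^{2}+(K_{3}-K_{2})K_{0}}\right)/K_{0}$ as in (\ref{DGOB_xstar}), with uniqueness also following from the strict monotonicity of $f_{r}-f_{d}$. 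I expect the main obstacle to be the change of variables itself: recognizing that both rate bounds factor through the single scalar $x=\sum_{k}\sqrt{\lambda_{k}\gamma_{k}}$, identifying its range via Cauchy--Schwarz, and using the identity $P_{\max}x_{\max}^{2}=\sum_{k}P_{k}$ that forces $f_{r}(x_{\max})=0$ and eliminates the third minimax case. Note that in Case~$2$ the optimizer is not unique in $\underline{\gamma}_{\mathcal{K}}$, since any $\underline{\gamma}_{\mathcal{K}}\in\Gamma_{OB}$ with $\sum_{k}\sqrt{\lambda_{k}\gamma_{k}}=x^{\ast}$ is a max-min rule, consistent with the set-valued $\mathcal{G}$.
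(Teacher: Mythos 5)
Your proposal is correct, and it reaches the theorem by a genuinely different route than the paper. The paper (Appendix \ref{DG_App4OBProof}) runs the detection-theoretic minimax machinery: it forms the weighted objective $J(\underline{\gamma}_{\mathcal{K}},\delta)=\delta B_{r,\mathcal{K}}+(1-\delta)B_{d,\mathcal{K}}$, takes the convex function $V(\delta)=\max_{\underline{\gamma}_{\mathcal{K}}}J$, invokes the three-case solution (Proposition \ref{Prop_minimax}), and eliminates the case in which the destination bound is the smaller one at its own maximizer by a concavity-plus-Lagrange-multiplier argument (Appendix \ref{DG_AppConvex}) showing that the maximizer $\gamma_k^{\ast}=P_k/\sum_{j}P_j$ of $B_{d,\mathcal{K}}$ drives $B_{r,\mathcal{K}}$ to zero; the remaining two cases are then settled by monotonicity and the quadratic. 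You instead collapse the problem to one dimension at the outset: both bounds factor through $x=\sum_k\sqrt{\lambda_k\gamma_k}$ as $B_{r,\mathcal{K}}=C(f_r(x))$ and $B_{d,\mathcal{K}}=C(f_d(x))$ with $f_r(x)=K_3-K_0x^2$ decreasing and $f_d(x)=K_2+2K_1x$ increasing, the attainable set of $x$ is exactly $[0,x_{\max}]$ with $x_{\max}=(\sum_k\lambda_k)^{1/2}$ (Cauchy--Schwarz plus continuity on the connected set $\Gamma_{OB}$), and the identity $P_{\max}x_{\max}^2=\sum_kP_k$ yields $f_r(x_{\max})=0<f_d(x_{\max})$, which kills the same infeasible case; the two surviving cases then drop out of an elementary crossover argument for a decreasing and an increasing function. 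The computational core is shared---the quadratic $K_0x^2+2K_1x+(K_2-K_3)=0$ and its unique nonnegative root (\ref{DGOB_xstar})---but the approaches trade different things: the paper's framework is reused essentially verbatim for the DF bounds (Appendix \ref{DG_App5_DFProof}) and does not require the objectives to collapse to a scalar, while yours is self-contained (no convexity of $V(\delta)$, no concavity appendix, no Lagrange multipliers) and makes explicit a step the paper leaves implicit, namely that every $x\in[0,x_{\max}]$ is attained by some $\underline{\gamma}_{\mathcal{K}}\in\Gamma_{OB}$, so the scalar max-min really equals (\ref{DGOB_RK_MM2}). Incidentally, your definition of $x$ as a sum of square roots is the one consistent with the theorem statement and with the expansions (\ref{DGA4_BrK})--(\ref{DGA4_BdK}); the paper's (\ref{DGA4_xKdef}) writes $x=\sqrt{\sum_k\lambda_k\gamma_k}$, an apparent typo that your derivation silently corrects.
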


\begin{remark}
The maximization in (\ref{DGOB_RK_MM2}) is independent of whether the optimal
$\underline{\gamma}_{\mathcal{K}}^{\ast}$ results in an active or an inactive
case. However, not all max-min rules $\underline{\gamma}_{\mathcal{K}}^{\ast
}\in\mathcal{G}$ will result in an active case. In general, active cases may
be achieved only by a subset $\mathcal{G}_{a}$ $\subseteq\mathcal{G}$.
However, irrespective of the kind of intersection, from Lemma
\ref{Lemma_PolyIntersect}, (\ref{DGOB_RKmax}) is an upper bound on the
$K$-user sum-rate cutset bounds.
\end{remark}

In the following theorem we show that it suffices to consider two conditions
in determining the largest outer bound on the $K$-user sum-capacity. We
enumerate the two conditions as%
\begin{equation}%
\begin{array}
[c]{cc}%
\text{Condition 1:} & B_{r,\mathcal{K}}(\underline{0})\leq B_{d,\mathcal{K}%
}(\underline{0})\\
\text{Condition 2:} & B_{r,\mathcal{K}}(\underline{0})>B_{d,\mathcal{K}%
}(\underline{0}).
\end{array}
\label{DG_OB_Conds}%
\end{equation}
The first condition implies that the maximum $K$-user cutset bound at the
relay is smaller than the corresponding bound at the destination; for this
case, we show that $B_{r,\mathcal{S}}(\underline{0})<B_{d,\mathcal{S}%
}(\underline{0})$ for all $\mathcal{S}\subset\mathcal{K}$, i.e.,
$\mathcal{R}_{OB}=\mathcal{R}_{r}^{ob}\subset\mathcal{R}_{d}^{ob}$. On the
other hand, when condition 2 occurs, i.e., when condition 1 does not hold in
(\ref{DG_OB_Conds}), we use the monotone properties of $B_{r,\mathcal{K}}$ and
$B_{d,\mathcal{K}}$ and Lemma \ref{Lemma_PolyIntersect} to show that
\begin{equation}
R_{K}\leq\max\limits_{\underline{\gamma}_{\mathcal{K}}\in\Gamma_{OB}}%
\min\left\{  B_{r,\mathcal{K}}\left(  \underline{\gamma}_{\mathcal{K}}\right)
,B_{d,\mathcal{K}}\left(  \underline{\gamma}_{\mathcal{K}}\right)  \right\}
\label{DG_OB_RKbound}%
\end{equation}
with equality achieved in (\ref{DG_OB_RKbound}) when the polymatroid
intersection is an active case. From Theorem \ref{DGOB_Th2} we have that a
continuous set, $\mathcal{G}$, of $\underline{\gamma}_{\mathcal{K}}^{\ast}$
maximizes the right-hand-side of (\ref{DG_OB_RKbound}). We show that the bound
in (\ref{DG_OB_RKbound}) is achieved with equality when there exists a
$\underline{\gamma}_{\mathcal{K}}^{\ast}$ that results in an active case,
i.e., in a non-empty $\mathcal{G}_{a}$. Finally, for the class of symmetric
degraded G-MARCs, we prove the existence of an active case that maximizes the sum-rate.

\begin{theorem}
\label{DGOB_Th3}The largest outer bound $R_{\mathcal{K}}^{ob}$ on the $K$-user
sum-rate is
\begin{equation}%
\begin{array}
[c]{ll}%
R_{\mathcal{K}}^{ob}=C\left(  \sum_{k\in\mathcal{K}}P_{k}/N_{r}\right)  , &
\text{if }B_{r,\mathcal{K}}(\underline{0})<B_{d,\mathcal{K}}(\underline{0})\\
R_{\mathcal{K}}^{ob}\leq C\left(  \left(
%TCIMACRO{\tsum \limits_{k\in\mathcal{K}}}%
%BeginExpansion
{\textstyle\sum\limits_{k\in\mathcal{K}}}
%EndExpansion
\frac{P_{k}}{N_{r}}\right)  -\frac{\left(  x^{\ast}\right)  ^{2}P_{\max}%
}{N_{r}}\right)  , & \text{otherwise}%
\end{array}
\label{RK_rate_defn}%
\end{equation}
where $x^{\ast}\overset{\vartriangle}{=}\sum_{k\in\mathcal{K}}\sqrt
{\lambda_{k}\gamma_{k}^{\ast}}$ is the unique solution satisfying
$B_{r,\mathcal{K}}(\underline{\gamma}^{\ast})=B_{d,\mathcal{K}}(\underline
{\gamma}^{\ast})$ and is given by (\ref{DGOB_xstar}) and (\ref{DGOB_xstar1}).
The bound in (\ref{RK_rate_defn}) is achieved with equality only when the
intersection of $\mathcal{R}_{r}^{ob}(\underline{\gamma}_{\mathcal{K}}^{\ast
})$ and $\mathcal{R}_{d}^{ob}(\underline{\gamma}_{\mathcal{K}}^{\ast})$
results in an active case. The bound is achieved with equality for the class
of symmetric degraded G-MARCs.
\end{theorem}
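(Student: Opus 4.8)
The plan is to treat the two regimes of (\ref{DG_OB_Conds}) separately. In each regime I would first use Lemma~\ref{Lemma_PolyIntersect} together with Theorem~\ref{DGOB_Th2} to certify the stated expression as an \emph{upper} bound on $R_{\mathcal{K}}^{ob}$, and then exhibit a correlation vector whose polymatroid intersection is an active case so that the upper bound is attained.

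First I would dispose of Condition~1, $B_{r,\mathcal{K}}(\underline{0})<B_{d,\mathcal{K}}(\underline{0})$. Evaluating (\ref{Con_final_B1S}) and (\ref{Con_final_B2S}) at $\underline{\gamma}_{\mathcal{K}}=\underline{0}$ gives $B_{r,\mathcal{S}}(\underline{0})=C\!\left(\sum_{k\in\mathcal{S}}P_{k}/N_{r}\right)$ and $B_{d,\mathcal{S}}(\underline{0})=C\!\left((\sum_{k\in\mathcal{S}}P_{k}+P_{r})/N_{d}\right)$, since all cross terms vanish and $\overline{\gamma}_{\mathcal{S}^{c}}=1$. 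Using $N_{d}=N_{r}+N_{\Delta}$, the inequality $B_{r,\mathcal{S}}(\underline{0})<B_{d,\mathcal{S}}(\underline{0})$ is equivalent to $N_{\Delta}\sum_{k\in\mathcal{S}}P_{k}<N_{r}P_{r}$; because $\sum_{k\in\mathcal{S}}P_{k}\le\sum_{k\in\mathcal{K}}P_{k}$, the hypothesis at $\mathcal{S}=\mathcal{K}$ forces the same strict inequality for every $\mathcal{S}\subset\mathcal{K}$. Hence $\mathcal{R}_{r}^{ob}(\underline{0})\subseteq\mathcal{R}_{d}^{ob}(\underline{0})$, so their intersection is the relay polymatroid, an active case whose sum-rate face equals $C\!\left(\sum_{k\in\mathcal{K}}P_{k}/N_{r}\right)$. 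Since $B_{r,\mathcal{K}}$ is maximized at $\underline{0}$ while $B_{d,\mathcal{K}}$ increases in $\underline{\gamma}_{\mathcal{K}}$, Case~1 of Theorem~\ref{DGOB_Th2} shows this value also bounds $\max_{\underline{\gamma}_{\mathcal{K}}}\min(B_{r,\mathcal{K}},B_{d,\mathcal{K}})$; no other vector can beat it, so equality holds and the first line of (\ref{RK_rate_defn}) follows.

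Next I would treat Condition~2. For any $\underline{\zeta}$, Theorem~\ref{DGOB_Th1} gives the intersection sum-rate as $\min(\overline{B}_{r,\mathcal{K}},\overline{B}_{d,\mathcal{K}})$ in the active case and as the strictly smaller quantity $\overline{B}_{d,\mathcal{A}}+\overline{B}_{r,\mathcal{A}^{c}}$ in the inactive case; in both cases it is at most $\min(\overline{B}_{r,\mathcal{K}},\overline{B}_{d,\mathcal{K}})$. Maximizing over $\underline{\zeta}$ and invoking the concavity of $B_{r,\mathcal{K}}$ and $B_{d,\mathcal{K}}$ --- which collapses the time-shared max-min (\ref{DGOB_RK_maxmin}) to the single-vector max-min (\ref{DGOB_RK_MM1}) --- yields $R_{\mathcal{K}}^{ob}\le B^{\ast}$ with $B^{\ast}$ given by Case~2 of Theorem~\ref{DGOB_Th2} and (\ref{DGOB_xstar})--(\ref{DGOB_xstar1}). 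Equality forces a max-min rule $\underline{\gamma}_{\mathcal{K}}^{\ast}\in\mathcal{G}$ whose intersection is active, i.e. a nonempty $\mathcal{G}_{a}$: at such a rule $\overline{B}_{r,\mathcal{K}}=\overline{B}_{d,\mathcal{K}}=B^{\ast}$ is active and is attained, which explains the ``only when'' clause of the theorem.

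Finally, for the symmetric class I would produce an explicit active max-min rule. With $P_{k}=P$ I would take the symmetric allocation $\gamma_{k}=\gamma^{\ast}$ for all $k$, where $\gamma^{\ast}$ is the common value induced by $x^{\ast}$ of Theorem~\ref{DGOB_Th2}; this lies in $\mathcal{G}$ by construction. To show it is active I must verify that Condition~1 of Theorem~\ref{DGOB_Th1}, namely (\ref{DGOB_NI_Cond}), fails for every split, i.e. $B_{d,\mathcal{A}}(\underline{\gamma}^{\ast})+B_{r,\mathcal{A}^{c}}(\underline{\gamma}^{\ast})\ge B^{\ast}$ for all nonempty $\mathcal{A}\subset\mathcal{K}$. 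Under the symmetric allocation both terms depend on $\mathcal{A}$ only through $a=|\mathcal{A}|$, so the $2^{K}-2$ conditions reduce to the $K-1$ one-variable inequalities indexed by $a\in\{1,\dots,K-1\}$ obtained by substituting the closed forms of $B_{r,\mathcal{A}^{c}}$ and $B_{d,\mathcal{A}}$. Establishing this family of inequalities is the main obstacle, and it is where equal power is essential: the coherent-combining gain at the destination and the self-interference penalty at the relay must balance across every partition of the users. Once verified, $\mathcal{G}_{a}\neq\emptyset$ and (\ref{RK_rate_defn}) is met with equality for the symmetric degraded G-MARC.
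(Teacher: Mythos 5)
Your handling of the two conditions in (\ref{DG_OB_Conds}) is correct and mirrors the paper's own proof: at $\underline{\gamma}_{\mathcal{K}}=\underline{0}$ the degradedness relation $N_{d}=N_{r}+N_{\Delta}$ forces $B_{r,\mathcal{S}}(\underline{0})<B_{d,\mathcal{S}}(\underline{0})$ for all $\mathcal{S}\subset\mathcal{K}$ once it holds for $\mathcal{S}=\mathcal{K}$, giving $\mathcal{R}_{OB}=\mathcal{R}_{r}^{ob}(\underline{0})$ and the first line of (\ref{RK_rate_defn}); and under Condition 2 the combination of Theorem \ref{DGOB_Th1}, the concavity of $B_{r,\mathcal{K}}$ and $B_{d,\mathcal{K}}$, and Theorem \ref{DGOB_Th2} yields $R_{\mathcal{K}}^{ob}\leq B^{\ast}$, with equality forcing a max-min rule whose intersection is an active case, i.e. $\mathcal{G}_{a}\neq\emptyset$. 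This is the same decomposition and the same chain of lemmas the paper uses.

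The genuine gap is in the final claim, which is precisely the part of the theorem requiring new work: you never prove that the symmetric degraded G-MARC admits an active max-min rule. You correctly reduce the verification to the $K-1$ cardinality-indexed inequalities $B_{d,\mathcal{A}}(\underline{\gamma}^{\ast})+B_{r,\mathcal{A}^{c}}(\underline{\gamma}^{\ast})\geq B^{\ast}$ for $|\mathcal{A}|=a$, but then declare that establishing them ``is the main obstacle'' and stop; the conclusion $\mathcal{G}_{a}\neq\emptyset$ is therefore asserted, not proved. The step does close, and symmetry is used qualitatively rather than through explicit closed-form manipulation: since all $P_{k}=P$ and all $\gamma_{k}^{\ast}=\gamma^{\ast}$, the set functions $B_{r,\cdot}$ and $B_{d,\cdot}$ depend on $\mathcal{A}$ only through $a=|\mathcal{A}|$, and their submodularity (Appendix \ref{DG_App_PM}) specializes, for symmetric set functions, to discrete concavity in $a$, namely $2B_{j}(a+1)\geq B_{j}(a)+B_{j}(a+2)$ for $j=r,d$. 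Hence $g(a)\overset{\vartriangle}{=}B_{d}(a)+B_{r}(K-a)$ is concave on $\{0,1,\ldots,K\}$ and attains its minimum at an endpoint, so $\min_{a}g(a)=\min\left(  g(0),g(K)\right)  =\min\left(  B_{r,\mathcal{K}},B_{d,\mathcal{K}}\right)  $, which is exactly the failure of (\ref{DGOB_NI_Cond}) for every nonempty $\mathcal{A}\subset\mathcal{K}$; this is the rigorous form of the paper's remark that, by symmetry, no subset of users can achieve better rates at one receiver while its complement does better at the other. A secondary omission: you assert the symmetric rule ``lies in $\mathcal{G}$ by construction,'' glossing over feasibility. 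One must check $K\gamma^{\ast}=(x^{\ast})^{2}/K\leq1$, i.e. $\underline{\gamma}_{\mathcal{K}}^{\ast}\in\Gamma_{OB}$ in (\ref{DGOB_TOB}); the paper obtains this from the DF analysis (Theorem \ref{DGDF_Th3}, via $x^{\ast}=q^{\ast}$ and $(q^{\ast})^{2}<K$), and your proposal leaves it unaddressed.
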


\begin{proof}
Let $\underline{\gamma}_{\mathcal{K}}=\underline{0}$. From
(\ref{Con_final_B1S}) we see that $B_{\mathcal{S},r}(\underline{\gamma
}_{\mathcal{S}}\not =\underline{0})<B_{\mathcal{S},r}(\underline{0})$, for all
$\mathcal{S}\subseteq\mathcal{K}$, i.e, the region $\mathcal{R}_{r}%
^{(ob)}(\underline{\gamma}_{\mathcal{K}})$ is largest at $\underline{\gamma
}_{\mathcal{K}}=\underline{0}$. Expanding $B_{\mathcal{S},r}$ and
$B_{\mathcal{S},d}$ at $\underline{\gamma}_{\mathcal{K}}=\underline{0}$ from
(\ref{Con_final_B1S}) and (\ref{Con_final_B2S}), respectively, we have
\begin{align}
B_{r,\mathcal{S}}\left(  \underline{0}\right)   &  =C\left(  \frac{\sum
_{k\in\mathcal{S}}P_{k}}{N_{r}}\right) \label{DGOB_BrS0}\\
B_{d,\mathcal{S}}\left(  \underline{0}\right)   &  =C\left(  \frac{\sum
_{k\in\mathcal{S}}P_{k}}{N_{d}}+\frac{P_{r}}{N_{d}}\right)  .
\label{DGOB_BdS0}%
\end{align}
The sum-rate resulting from the intersection of $\mathcal{R}_{r}^{ob}\left(
\underline{0}\right)  $ and $\mathcal{R}_{d}^{ob}\left(  \underline{0}\right)
$ falls into one of following two cases.

\textit{Case 1}: The first case results when $B_{\mathcal{K},r}\left(
\underline{0}\right)  \leq B_{\mathcal{K},d}\left(  \underline{0}\right)  $.
From (\ref{DGOB_BrS0}) and (\ref{DGOB_BdS0}) this condition simplifies to
\begin{equation}
\frac{\sum\limits_{k\in\mathcal{K}}P_{k}}{N_{r}}\leq\frac{\sum\limits_{k\in
\mathcal{K}}P_{k}}{N_{d}}+\frac{P_{r}}{N_{d}}.\label{DGOB_Case1_CondK}%
\end{equation}
Expanding (\ref{DGOB_Case1_CondK}), we have, for any $\mathcal{S\subset K}$,
\begin{align}
\frac{%
%TCIMACRO{\tsum \limits_{k\in\mathcal{S}}}%
%BeginExpansion
{\textstyle\sum\limits_{k\in\mathcal{S}}}
%EndExpansion
P_{k}}{N_{r}} &  \leq\frac{%
%TCIMACRO{\tsum \limits_{k\in\mathcal{S}}}%
%BeginExpansion
{\textstyle\sum\limits_{k\in\mathcal{S}}}
%EndExpansion
P_{k}+P_{r}}{N_{d}}-\frac{%
%TCIMACRO{\tsum \limits_{k\in\mathcal{S}^{c}}}%
%BeginExpansion
{\textstyle\sum\limits_{k\in\mathcal{S}^{c}}}
%EndExpansion
P_{k}\left(  N_{d}-N_{r}\right)  }{N_{d}N_{r}}\\
&  <\frac{%
%TCIMACRO{\tsum \limits_{k\in\mathcal{S}}}%
%BeginExpansion
{\textstyle\sum\limits_{k\in\mathcal{S}}}
%EndExpansion
P_{k}+P_{r}}{N_{d}}\label{DGOB_Case1_CondS}%
\end{align}
where (\ref{DGOB_Case1_CondS}) follows from the degradedness condition in
(\ref{DGMARC_Noise_var}). Thus, $B_{r,\mathcal{K}}(\underline{0})\leq
B_{d,\mathcal{K}}(\underline{0})$ implies that $B_{r,\mathcal{S}}%
(\underline{0})<B_{d,\mathcal{S}}(\underline{0})$ for all $\mathcal{S}%
\subset\mathcal{K}$, i.e., $\mathcal{R}_{r}^{ob}(\underline{0})\subset
\mathcal{R}_{d}^{ob}(\underline{0})$, and $\mathcal{R}_{OB}(\underline
{0})=\mathcal{R}_{r}^{ob}(\underline{0})$. The maximum $K$-user sum-rate upper
bound for this active case is then
\begin{equation}
R_{\mathcal{K}}^{ob}=B_{r,\mathcal{K}}=C(%
%TCIMACRO{\tsum \nolimits_{k\in\mathcal{K}}}%
%BeginExpansion
{\textstyle\sum\nolimits_{k\in\mathcal{K}}}
%EndExpansion
P_{k}\left/  N_{r}\right.  ).
\end{equation}

\textit{Case 2}: The second case results when $B_{\mathcal{K},r}\left(
\underline{0}\right)  >B_{\mathcal{K},d}\left(  \underline{0}\right)  ,$ i.e.,
when
\begin{equation}
\frac{\sum\limits_{k\in\mathcal{K}}P_{k}}{N_{r}}>\frac{\sum\limits_{k\in
\mathcal{K}}P_{k}}{N_{d}}+\frac{P_{r}}{N_{d}}.\label{DGOB_Case2_CondK}%
\end{equation}
Unlike \textit{case }$1$, (\ref{DGOB_Case2_CondK}) does not imply that
$B_{\mathcal{S},r}\left(  \underline{0}\right)  >B_{\mathcal{S},d}\left(
\underline{0}\right)  $ or vice-versa, for all $\mathcal{S}\subset\mathcal{K}%
$. From Theorem \ref{DGOB_Th1}, the intersection of $\mathcal{R}_{r}%
^{ob}\left(  \underline{0}\right)  $ and $\mathcal{R}_{d}^{ob}\left(
\underline{0}\right)  $ can result in either an active or an inactive case and
thus, from (\ref{DGOB_NI_Cond}), we have
\begin{equation}
R_{\mathcal{K}}^{ob}\leq\min(B_{r,\mathcal{K}}\left(  \underline{0}\right)
,B_{d,\mathcal{K}}\left(  \underline{0}\right)  )=B_{d,\mathcal{K}}\left(
\underline{0}\right)
\end{equation}
with equality for the active case. Note that from symmetry an active case
results for the symmetric G-MARC. We now show that the sum-rate is increased
for a $\underline{\gamma}_{\mathcal{K}}^{\ast}\not =\underline{0}$ such that
$B_{r,\mathcal{K}}\left(  \underline{\gamma}_{\mathcal{K}}^{\ast}\right)
=B_{d,\mathcal{K}}\left(  \underline{\gamma}_{\mathcal{K}}^{\ast}\right)  $.
To simplify the exposition, we write $B_{r,\mathcal{K}}$ and $B_{d,\mathcal{K}%
}$ in (\ref{Con_final_B1S}) and (\ref{Con_final_B2S}) as%
\begin{align}
B_{r,\mathcal{K}}\left(  x\right)   &  =C\left(  \frac{%
%TCIMACRO{\tsum \limits_{k\in\mathcal{K}}}%
%BeginExpansion
{\textstyle\sum\limits_{k\in\mathcal{K}}}
%EndExpansion
P_{k}}{N_{r}}-\frac{x^{2}P_{\max}}{N_{r}}\right)  \label{DGOB_BrKx}\\
B_{d,\mathcal{K}}\left(  x\right)   &  =C\left(  \frac{%
%TCIMACRO{\tsum \limits_{k\in\mathcal{K}}}%
%BeginExpansion
{\textstyle\sum\limits_{k\in\mathcal{K}}}
%EndExpansion
P_{k}}{N_{d}}+\frac{P_{r}}{N_{d}}+\frac{2x\sqrt{P_{\max}P_{r}}}{N_{d}}\right)
\label{DGOB_BdKx}%
\end{align}
where
\begin{equation}
x\overset{\vartriangle}{=}\sum_{k=1}^{K}\sqrt{\gamma_{k}\lambda_{k}%
}\label{OB_x_def}%
\end{equation}
and $\lambda_{k}=P_{k}/P_{\max}$ where $P_{\max}=\max_{k\in\mathcal{K}}P_{k}$,
for all $k$. For all $\gamma_{k}\in\lbrack0,1]$, we have
\begin{align}
&
\begin{array}
[c]{cc}%
\frac{\partial x}{\partial\gamma_{k}}=\frac{\sqrt{\lambda_{k}}}{2\sqrt
{\gamma_{k}}} & k\in\mathcal{K}%
\end{array}
\\
&
\begin{array}
[c]{cc}%
\frac{\partial^{2}x}{\partial\gamma_{k}^{2}}=-\frac{\sqrt{\lambda_{k}}%
}{4\gamma_{k}^{3/2}} & k\in\mathcal{K}%
\end{array}
\\
&
\begin{array}
[c]{cc}%
\frac{\partial^{2}x}{\partial\gamma_{k}\partial\gamma_{j}}=0 & k\not =%
j\text{.}%
\end{array}
\end{align}
Thus, $x$ is a concave function of $\underline{\gamma}_{\mathcal{K}}$ over the
hyper-cube $\gamma_{k}\in\lbrack0,1]$, for all $k$, and therefore, is concave
for all $\gamma_{k}$ satisfying (\ref{DGOB_TOB}). Further, from
(\ref{DGOB_TOB}), we see that $x$ is maximized when the entries of
$\underline{\gamma}_{\mathcal{K}}$ satisfy $\sum_{k=1}^{K}\gamma_{k}=1$. Using
techniques similar to those in Appendix \ref{DG_AppConvex}, one can show that
$x$ achieves its maximum for a $\underline{\gamma}_{\mathcal{K}}^{\prime}$
with entries
\begin{equation}%
\begin{array}
[c]{cc}%
\gamma_{k}^{\prime}=\frac{\lambda_{k}}{\sum_{k=1}^{K}\lambda_{k}} & \text{for
all }k\text{,}%
\end{array}
\label{DGOB_Gprime}%
\end{equation}
and thus, we have
\begin{equation}
x\in\left[  0,\sqrt{\sum\nolimits_{k=1}^{K}\lambda_{k}}\right]  \subseteq
\lbrack0,\sqrt{K}]\text{.}\label{DGOB_x_range}%
\end{equation}
The functions $B_{r,\mathcal{K}}\left(  x\right)  $ and $B_{d,\mathcal{K}%
}\left(  x\right)  $ in (\ref{DGOB_BrKx}) and (\ref{DGOB_BdKx}) are
monotonically decreasing and increasing functions of $x$, respectively.
Substituting (\ref{DGOB_Gprime}) in (\ref{Con_final_B1S}), we have
$B_{r,\mathcal{S}}(\underline{\gamma}_{\mathcal{K}}^{\prime})=0$ for all
$\mathcal{S}\subseteq\mathcal{K}$. Thus, for the case in which $B_{\mathcal{K}%
,r}\left(  \underline{0}\right)  >B_{\mathcal{K},d}\left(  \underline
{0}\right)  $, one can shrink the region $\mathcal{R}_{r}^{ob}$ from
$\mathcal{R}_{r}^{ob}\left(  \underline{0}\right)  $ just sufficiently such
that for some $\underline{\gamma}_{\mathcal{K}}^{\ast}\not =\underline{0}$,
$B_{r,\mathcal{K}}\left(  \underline{\gamma}_{\mathcal{K}}^{\ast}\right)
=B_{d,\mathcal{K}}\left(  \underline{\gamma}_{\mathcal{K}}^{\ast}\right)
>B_{\mathcal{K},d}\left(  \underline{0}\right)  $. In Theorem \ref{DGOB_Th3}
we show that $B_{\mathcal{K},r}=B_{\mathcal{K},d}$ is maximized by a set
$\mathcal{G}$ of $\underline{\gamma}_{\mathcal{K}}^{\ast}$ satisfying
\begin{equation}
\mathcal{G}=\left\{  \underline{\gamma}_{\mathcal{K}}^{\ast}:%
%TCIMACRO{\tsum \limits_{k\in\mathcal{K}}}%
%BeginExpansion
{\textstyle\sum\limits_{k\in\mathcal{K}}}
%EndExpansion
\gamma_{k}^{\ast}\lambda_{k}=\left(  x^{\ast}\right)  ^{2}\right\}
\label{OB_gammak_opt}%
\end{equation}
where $x^{\ast}$ is the unique value satisfying the quadratic $B_{\mathcal{K}%
,r}\left(  x\right)  =B_{\mathcal{K},d}\left(  x\right)  $. For $\underline
{\gamma}_{\mathcal{K}}^{\ast}\not =\underline{0}$, from (\ref{Con_final_B1S})
one can verify that $B_{r,\mathcal{S}}\left(  \underline{\gamma}_{\mathcal{K}%
}^{\ast}\right)  <B_{r,\mathcal{S}}\left(  \underline{0}\right)  $ for all
$\mathcal{S}$, i.e., $\mathcal{R}_{r}^{ob}\left(  \underline{\gamma
}_{\mathcal{K}}^{\ast}\right)  \subset\mathcal{R}_{r}^{ob}\left(
\underline{0}\right)  $. On the other hand, substituting $\underline{\gamma
}_{\mathcal{K}}^{\ast}$ in (\ref{Con_final_B2S}), $B_{d,\mathcal{S}}$ for all
$\mathcal{S}\subset\mathcal{K}$ simplifies to
\begin{equation}
B_{d,\mathcal{S}}\left(  \underline{\gamma}_{\mathcal{K}}^{\ast}\right)
=C\left(  \frac{%
%TCIMACRO{\tsum \limits_{k\in\mathcal{S}}}%
%BeginExpansion
{\textstyle\sum\limits_{k\in\mathcal{S}}}
%EndExpansion
P_{k}+(1-\sum\nolimits_{k\in\mathcal{S}^{c}}\gamma_{k}^{\ast})P_{r}+2%
%TCIMACRO{\tsum \limits_{k\in\mathcal{S}}}%
%BeginExpansion
{\textstyle\sum\limits_{k\in\mathcal{S}}}
%EndExpansion
\sqrt{\gamma_{k}^{\ast}P_{k}P_{r}}}{N_{d}}\right)  .\label{DGOB_BdSG}%
\end{equation}
Comparing $B_{d,\mathcal{S}}\left(  \underline{0}\right)  $ in
(\ref{DGOB_BdS0}) with $B_{d,\mathcal{S}}\left(  \underline{\gamma
}_{\mathcal{K}}^{\ast}\right)  $ in (\ref{DGOB_BdSG}) above, one cannot in
general show that $\mathcal{R}_{d}^{ob}\left(  \underline{\gamma}%
_{\mathcal{K}}^{\ast}\right)  \supseteq\mathcal{R}_{d}^{ob}\left(
\underline{0}\right)  $. In fact, the $\underline{\gamma}_{\mathcal{K}}^{\ast
}$ chosen will determine the relationship between $B_{d,\mathcal{S}%
}(\underline{\gamma}_{\mathcal{K}}^{\ast})$ and $B_{d,\mathcal{S}}%
(\underline{0})$ for any $\mathcal{S}$. Thus, for any $\underline{\gamma
}_{\mathcal{K}}^{\ast}$ that equalizes $B_{r,\mathcal{K}}$ and
$B_{d,\mathcal{K}}$ the polytope $\mathcal{R}_{r}^{ob}\cap\mathcal{R}_{d}%
^{ob}$ belongs to either the set of active or inactive cases. Recall that we
write $\mathcal{G}_{a}$ to denote the set of $\underline{\gamma}_{\mathcal{K}%
}^{\ast}$ that results in an active case, i.e., the set of $\underline{\gamma
}_{\mathcal{K}}^{\ast}$ for which the condition in (\ref{DGOB_NI_Cond}) does
not hold for all $2^{K}-1$ non-empty subsets $\mathcal{A}$ of $\mathcal{K}$.
From Theorem \ref{DGOB_Th1}, we have that the sum-rate for the inactive case
is always bounded by the maximum sum-rate developed in Theorem \ref{DGOB_Th2}.
Thus, the maximum $K$-user sum-rate when $B_{r,\mathcal{K}}(\underline
{0})>B_{d,\mathcal{K}}(\underline{0})$ is
\begin{equation}
R_{\mathcal{K}}=\left\{
\begin{array}
[c]{ll}%
B_{d,\mathcal{K}}(\underline{\gamma}_{\mathcal{K}}^{\ast})=B_{r,\mathcal{K}%
}(\underline{\gamma}_{\mathcal{K}}^{\ast})\equiv B^{\ast} & \underline{\gamma
}_{\mathcal{K}}^{\ast}\in\mathcal{G}_{a}\not =\emptyset\\
\max\limits_{\underline{\xi}}\overline{B}_{d,\mathcal{A}}(\underline{\xi
})+\overline{B}_{r,\mathcal{A}^{c}}(\underline{\xi})<B^{\ast} & \mathcal{G}%
_{a}=\emptyset
\end{array}
\right.  \label{DGOB_Th3RKmax}%
\end{equation}
where $B^{\ast}$ is defined in Theorem \ref{DGOB_Th2}. We now show that for
the class of symmetric G-MARC channels the bound $B^{\ast}$ is achieved, i.e.,
$\mathcal{G}_{a}\not =\emptyset$. For this class since $P_{k}=P$ for all
$k\in\mathcal{K}$, from symmetry $B^{\ast}$ can be achieved by choosing
$\gamma_{k}^{\ast}=\gamma^{\ast}$ for all $k$ such that from (\ref{OB_x_def}),
we have%
\begin{equation}
\gamma^{\ast}=\left(  x^{\ast}\right)  ^{2}/K^{2}.\label{DGOB_gam_sym}%
\end{equation}
From (\ref{DGOB_x_range}), since $0\leq x^{\ast}\leq\sqrt{K}$, there exists an
$\gamma^{\ast}<1$. From (\ref{DGOB_TOB}), we also require $\gamma^{\ast}<1/K$.
In Theorem \ref{DGDF_Th3} in Section \ref{DG_Sec4} below, we prove the
existence of a $\gamma^{\ast}<1/K$ for symmetric channels. From symmetry,
since no subset of users can achieve better rates at one receiver than the
other, the resulting $\mathcal{R}_{r}\left(  \gamma^{\ast}\right)
\cap\mathcal{R}_{d}\left(  \gamma^{\ast}\right)  $ belongs to the set of
inactive cases. The $K$-user sum-rate cutset bound for this class is given by
the $B^{\ast}$ in (\ref{DGOB_RKmax}) with $P_{\max}=P$ and $\lambda_{k}=1$ for
all $k\in\mathcal{K}$.\newline\qquad Finally, from continuity, one can expect
that for small perturbations of user powers from the symmetric case, an active
case will result. However, for arbitrary user powers, it is possible that
$\mathcal{G}_{a}=\emptyset$, i.e., the set of all feasible $\underline{\gamma
}_{\mathcal{K}}^{\ast}$ results in non-inactive cases. In general, however,
obtaining a closed-form expression for the maximum sum-rate for the inactive
cases is not straightforward.
\end{proof}

\section{\label{DG_Sec4}Decode-and-Forward\ }

A DF code construction for a discrete memoryless MARC using block Markov
encoding and backward decoding is developed in \cite[Appendix~A]%
{cap_theorems:KGG_IT} (see also \cite{cap_theorems:SKM_journal01}) and we
extend it here to the degraded Gaussian MARC. We first summarize the rate
region achieved by DF below.

\begin{proposition}
\label{Prop_MARC_DF}The DF rate region is the union of the set of rate tuples
$(R_{1},R_{2},\ldots,$ $R_{K})$ that satisfy, for all $\mathcal{S}%
\subseteq\mathcal{K}$,%
\begin{equation}
R_{\mathcal{S}}\leq\min\left\{  I(X_{\mathcal{S}};Y_{r}|X_{\mathcal{S}^{c}%
}V_{\mathcal{K}}X_{r}U),I(X_{\mathcal{S}}X_{r};Y_{d}|X_{\mathcal{S}^{c}%
}V_{\mathcal{S}^{c}}U)\right\}  \label{Prop_DF_rateregion}%
\end{equation}
where the union is over all distributions that factor as%
\begin{equation}
p(u)\cdot\left(
%TCIMACRO{\tprod \nolimits_{k=1}^{K}}%
%BeginExpansion
{\textstyle\prod\nolimits_{k=1}^{K}}
%EndExpansion
p(v_{k}|u)p(x_{k}|v_{k},u)\right)  \cdot p(x_{r}|v_{\mathcal{K}},u)\cdot
p(y_{r},y_{d}|x_{\mathcal{T}}). \label{Prop_inp_dist}%
\end{equation}

\end{proposition}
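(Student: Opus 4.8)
The plan is to prove achievability by the block-Markov superposition encoding and backward-decoding construction that is standard for the MARC \cite[Appendix~A]{cap_theorems:KGG_IT}, specialized to the Gaussian outputs (\ref{Yr_defn})--(\ref{Yd_defn}); the auxiliary variables $V_k$ play the role of the cooperative information that the relay decodes in one block and forwards in the next. First I would fix an input distribution of the product form (\ref{Prop_inp_dist}) and transmit over $B$ blocks of length $n$, splitting each source message $W_k$ into $B-1$ independent sub-messages $w_k^{(1)},\dots,w_k^{(B-1)}$ so that the effective rate is $\tfrac{B-1}{B}R_k\to R_k$ as $B\to\infty$. The codebook is generated in the nested order dictated by (\ref{Prop_inp_dist}): a fixed $U^n$; cloud centers $V_k^n$ drawn i.i.d.\ from $p(v_k|u)$ and indexed by the \emph{previous} sub-message $w_k^{(b-1)}$; satellites $X_k^n$ drawn from $p(x_k|v_k,u)$ and indexed by the current $w_k^{(b)}$; and the relay codeword $X_r^n$ drawn from $p(x_r|v_{\mathcal K},u)$ as a function of the whole tuple $w_{\mathcal K}^{(b-1)}$. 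Thus in block $b$ the relay coherently reinforces the cloud centers carrying $w_{\mathcal K}^{(b-1)}$, which it decoded at the end of block $b-1$.

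I would then read off the two families of constraints from the two decoders. At the end of block $b$ the relay knows $w_{\mathcal K}^{(b-1)}$, hence $V_{\mathcal K}^n$ and $X_r^n$, and decodes the current $w_{\mathcal K}^{(b)}$ from $Y_r^n$ by joint typicality. Enumerating the error patterns by the subset $\mathcal S$ of sources decoded incorrectly and applying the multiaccess packing lemma (with $X_{\mathcal S^c}$, $V_{\mathcal K}$, $X_r$, and $U$ correctly known and therefore in the conditioning) drives the relay error to zero provided $R_{\mathcal S}\le I(X_{\mathcal S};Y_r|X_{\mathcal S^c}V_{\mathcal K}X_rU)$ for every $\mathcal S\subseteq\mathcal K$, the first term of the minimum in (\ref{Prop_DF_rateregion}). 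The destination decodes backward from block $B$ (where the sources send a fixed dummy sub-message): having resolved $w_{\mathcal K}^{(b)}$ while processing later blocks, it recovers $w_{\mathcal K}^{(b-1)}$ from $Y_d^n$ of block $b$, where $w_{\mathcal K}^{(b-1)}$ determines both $V_{\mathcal K}^n$ and $X_r^n$. For the error event in which exactly the subset $\mathcal S$ is in error while $\mathcal S^c$ is correct, the true codewords $V_{\mathcal S^c}^n$ and $X_{\mathcal S^c}^n$ enter the conditioning and the fresh uncertainty resides in $(X_{\mathcal S}^n,X_r^n)$; the packing lemma then yields $R_{\mathcal S}\le I(X_{\mathcal S}X_r;Y_d|X_{\mathcal S^c}V_{\mathcal S^c}U)$, the second term of the minimum, with the presence of $X_r$ on the left capturing the coherent-combining gain.

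Reliable decoding requires both families simultaneously, so the achievable region for a fixed distribution is their intersection as in (\ref{Prop_DF_rateregion}); taking the union over all distributions of the form (\ref{Prop_inp_dist}), with $U$ supplying the convexification, gives the stated region, and the $\tfrac{B-1}{B}$ message-splitting loss vanishes as $B\to\infty$. I expect the delicate step to be the bookkeeping of the backward-decoding error events rather than any new estimate: one must verify that the decisions $w_{\mathcal K}^{(b)},\dots,w_{\mathcal K}^{(B-1)}$ already made in the backward pass are correct with high probability, so that errors do not propagate, that $w_{\mathcal K}^{(b)}$ may legitimately be treated as known when processing block $b$, and that only the subset $\mathcal S$ of $w_{\mathcal K}^{(b-1)}$ contributes fresh randomness, ensuring that the per-block union bound over the $2^K-1$ nonempty subsets closes. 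Making these conditioning and error-propagation arguments precise is the main obstacle; the Gaussian evaluation of the two mutual informations under (\ref{Prop_inp_dist}) is then routine.
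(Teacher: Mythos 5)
Your proposal is correct and takes essentially the same approach as the paper, whose entire ``proof'' is a citation to the block-Markov superposition encoding with backward decoding construction of \cite[Appendix~A]{cap_theorems:KGG_IT} and \cite{cap_theorems:SKM_journal01}. Your sketch---relay joint-typicality decoding with $(V_{\mathcal{K}},X_r)$ known yielding the first term of the minimum, and backward decoding at the destination over error subsets $\mathcal{S}$ (with $V_{\mathcal{S}^c},X_{\mathcal{S}^c}$ in the conditioning and the fresh uncertainty in $(X_{\mathcal{S}},X_r)$) yielding the second term---is precisely that construction, including the vanishing $\tfrac{B-1}{B}$ rate loss and the convexification via $U$.
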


\begin{proof}
See \cite{cap_theorems:SKM_journal01}.
\end{proof}

\begin{remark}
The \textit{time-sharing} random variable $U$ ensures that the region of
Theorem \ref{Prop_MARC_DF} is convex.
\end{remark}

\begin{remark}
The independent auxiliary random variables $V_{k}$, $k=1,2,\ldots,K$, help the
sources cooperate with the relay.
\end{remark}

For the degraded Gaussian\ MARC, we employ the following code construction. We
generate zero-mean, unit variance, independent and identically distributed
(i.i.d.) Gaussian random variables $V_{k}$, $V_{k,0},$ and $V_{r,0}$, for all
$k\in\mathcal{K}$, such that the channel inputs from source $k$ and the relay
are
\begin{align}
&
\begin{array}
[c]{cc}%
X_{k}=\sqrt{\alpha_{k}P_{k}}V_{k,0}+\sqrt{\left(  1-\alpha_{k}\right)  P_{k}%
}V_{k},\text{ \ \ \ \ \ \ \ \ \ \ } & k\in\mathcal{K},
\end{array}
\label{GMARC_DF_source_sig}\\
&
\begin{array}
[c]{cc}%
X_{r}=\sum\limits_{k=1}^{K}\sqrt{\beta_{k}P_{r}}V_{k}+\sqrt{\left(  1-%
%TCIMACRO{\tsum \limits_{k=1}^{K}}%
%BeginExpansion
{\textstyle\sum\limits_{k=1}^{K}}
%EndExpansion
\beta_{k}\right)  P_{r}}V_{r,0}\text{ } &
\end{array}
\label{GMARC_deg_relaysig}%
\end{align}
where $\alpha_{k}\in\lbrack0,1]$ and $\beta_{k}\in\lbrack0,1]$ are power
fractions for all $k$. We write
\begin{align}
\underline{\alpha}_{\mathcal{K}}  &  =%
\begin{pmatrix}
\alpha_{1}, & \alpha_{2}, & \ldots, & \alpha_{K}%
\end{pmatrix}
\label{alphaK_defn}\\
\underline{\beta}_{\mathcal{K}}  &  =%
\begin{pmatrix}
\beta_{1}, & \beta_{2}, & \ldots, & \beta_{K}%
\end{pmatrix}
\label{betaK_defn}%
\end{align}
and
\begin{equation}
\Gamma=\left\{  \left(  \underline{\alpha}_{\mathcal{K}},\underline{\beta
}_{\mathcal{K}}\right)  :%
\begin{array}
[c]{cc}%
\alpha_{k}\in\lbrack0,1],0\leq\sum_{k\in\mathcal{K}}\beta_{k}\leq1 & \text{for
all }k.
\end{array}
\right\}  \label{alp_beta_bounds}%
\end{equation}
for the set of feasible power fractions $\underline{\alpha}_{\mathcal{K}}$ and
$\underline{\beta}_{\mathcal{K}}$. Substituting (\ref{GMARC_DF_source_sig})
and (\ref{GMARC_deg_relaysig}) in (\ref{Prop_DF_rateregion}), for any
$(\underline{\alpha}_{\mathcal{K}},\underline{\beta}_{\mathcal{K}})\in\Gamma$,
we obtain%
\begin{equation}%
\begin{array}
[c]{cc}%
R_{\mathcal{S}}\leq\min\left(  I_{r,\mathcal{S}}\left(  \underline{\alpha
}_{\mathcal{K}}\right)  ,I_{d,\mathcal{S}}\left(  \underline{\alpha
}_{\mathcal{K}},\underline{\beta}_{\mathcal{K}}\right)  \right)  & \text{for
all }\mathcal{S}\subseteq\mathcal{K}%
\end{array}
\label{Ratebounds_1}%
\end{equation}
where $I_{r,\mathcal{S}}$ and $I_{d,\mathcal{S}}$, the bounds at the relay and
destination respectively, are
\begin{align}
I_{r,\mathcal{S}}  &  =C\left(  \frac{%
%TCIMACRO{\tsum \limits_{k\in\mathcal{S}}}%
%BeginExpansion
{\textstyle\sum\limits_{k\in\mathcal{S}}}
%EndExpansion
\alpha_{k}P_{k}}{N_{r}}\right) \label{GMARC_DF_IrG}\\
I_{d,\mathcal{S}}  &  =C\left(  \frac{%
%TCIMACRO{\tsum \limits_{k\in\mathcal{S}}}%
%BeginExpansion
{\textstyle\sum\limits_{k\in\mathcal{S}}}
%EndExpansion
P_{k}}{N_{d}}+\frac{\left(  1-%
%TCIMACRO{\tsum \limits_{k\in\mathcal{S}^{c}}}%
%BeginExpansion
{\textstyle\sum\limits_{k\in\mathcal{S}^{c}}}
%EndExpansion
\beta_{k}\right)  P_{r}}{N_{d}}+2%
%TCIMACRO{\tsum \limits_{k\in\mathcal{S}}}%
%BeginExpansion
{\textstyle\sum\limits_{k\in\mathcal{S}}}
%EndExpansion
\sqrt{\left(  1-\alpha_{k}\right)  \beta_{k}\frac{P_{k}}{N_{d}}\frac{P_{r}%
}{N_{d}}}\right)  . \label{GMARC_DF_IdG}%
\end{align}
\newline From the concavity of the $\log$ function it follows that
$I_{r,\mathcal{S}}$, for all $\mathcal{S}$, is a concave function of
$\underline{\alpha}_{\mathcal{K}}$. In Appendix \ref{DG_AppConvex} we show
that $I_{d,\mathcal{S}}$ is a concave function of $\underline{\alpha
}_{\mathcal{K}}$ and $\underline{\beta}_{\mathcal{K}}$. The DF rate region,
$\mathcal{R}_{DF}$, achieved over all $(\underline{\alpha}_{\mathcal{K}%
},\underline{\beta}_{\mathcal{K}})\in\Gamma$, is then given by the following theorem.

\begin{theorem}
The DF\ rate region $\mathcal{R}_{DF}$ for a degraded Gaussian MARC is
\begin{equation}
\mathcal{R}_{DF}=\bigcup\limits_{\left(  \underline{\alpha}_{\mathcal{K}%
},\underline{\beta}_{\mathcal{K}}\right)  \in\Gamma}\left(  \mathcal{R}%
_{r}\left(  \underline{\alpha}_{\mathcal{K}}\right)  \cap\mathcal{R}%
_{d}\left(  \underline{\alpha}_{\mathcal{K}},\underline{\beta}_{\mathcal{K}%
}\right)  \right)  \label{GMARC_R_DF}%
\end{equation}
where the rate region $\mathcal{R}_{t}$, $t=r,d$, is
\begin{equation}
\mathcal{R}_{t}\left(  \underline{\alpha}_{\mathcal{K}},\underline{\beta
}_{\mathcal{K}}\right)  =\left\{
\begin{array}
[c]{cc}%
\left(  R_{1},R_{2},\ldots,R_{K}\right)  :0\leq R_{\mathcal{S}}\leq
I_{t,\mathcal{S}}\left(  \underline{\alpha}_{\mathcal{K}},\underline{\beta
}_{\mathcal{K}}\right)  \text{,} & \text{for all }\mathcal{S}\subseteq
\mathcal{K}%
\end{array}
\right\}  . \label{DF_both_RB}%
\end{equation}

\end{theorem}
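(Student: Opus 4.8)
The plan is to specialize the general DF region of Proposition~\ref{Prop_MARC_DF} to the Gaussian code construction in (\ref{GMARC_DF_source_sig}) and (\ref{GMARC_deg_relaysig}), taking the time-sharing variable $U$ to be constant, and to evaluate the two mutual-information terms of (\ref{Prop_DF_rateregion}) in closed form. The construction assigns to each source $k$ an independent \emph{private} symbol $V_{k,0}$ and a \emph{cooperative} symbol $V_{k}$, and makes $X_{r}$ a deterministic function of $(V_{\mathcal{K}},V_{r,0})$; consequently every conditioning set appearing in (\ref{Prop_DF_rateregion}) reduces $Y_{r}$ or $Y_{d}$ to a known additive constant plus a residual Gaussian signal whose variance can be read off directly. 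For each fixed $(\underline{\alpha}_{\mathcal{K}},\underline{\beta}_{\mathcal{K}})\in\Gamma$ this yields the intersection $\mathcal{R}_{r}(\underline{\alpha}_{\mathcal{K}})\cap\mathcal{R}_{d}(\underline{\alpha}_{\mathcal{K}},\underline{\beta}_{\mathcal{K}})$, and taking the union over $\Gamma$ gives (\ref{GMARC_R_DF}).

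First I would compute the relay bound. Conditioning on $V_{\mathcal{K}}$, $X_{r}$, and $X_{\mathcal{S}^{c}}$ removes all cooperative components and all sources outside $\mathcal{S}$, so that the only residual randomness observed through $Y_{r}$ is $\sum_{k\in\mathcal{S}}\sqrt{\alpha_{k}P_{k}}\,V_{k,0}$ corrupted by $Z_{r}$. Since the $V_{k,0}$ are independent, zero-mean, and unit-variance, this residual has variance $\sum_{k\in\mathcal{S}}\alpha_{k}P_{k}$, giving $I(X_{\mathcal{S}};Y_{r}|X_{\mathcal{S}^{c}}V_{\mathcal{K}}X_{r}U)=C\!\left(\sum_{k\in\mathcal{S}}\alpha_{k}P_{k}/N_{r}\right)$, which is (\ref{GMARC_DF_IrG}).

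The main obstacle is the destination bound. Conditioning now on $X_{\mathcal{S}^{c}}$ and $V_{\mathcal{S}^{c}}$, the residual signal seen at $Y_{d}$ is
\[
\sum_{k\in\mathcal{S}}\sqrt{\alpha_{k}P_{k}}\,V_{k,0}+\sum_{k\in\mathcal{S}}\left(\sqrt{(1-\alpha_{k})P_{k}}+\sqrt{\beta_{k}P_{r}}\right)V_{k}+\sqrt{\left(1-\sum_{j=1}^{K}\beta_{j}\right)P_{r}}\,V_{r,0}.
\]
The delicate point is that the cooperative part $\sqrt{(1-\alpha_{k})P_{k}}\,V_{k}$ of each source signal and the relay's forwarding term $\sqrt{\beta_{k}P_{r}}\,V_{k}$ share the \emph{same} variable $V_{k}$, so they combine coherently; expanding $\left(\sqrt{(1-\alpha_{k})P_{k}}+\sqrt{\beta_{k}P_{r}}\right)^{2}$ produces the cross term $2\sqrt{(1-\alpha_{k})\beta_{k}P_{k}P_{r}}$. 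Summing the variances of the mutually independent $V_{k,0}$, $V_{k}$, and $V_{r,0}$ contributions, the $\alpha_{k}$ terms recombine into $\sum_{k\in\mathcal{S}}P_{k}$ and the $\beta$ terms into $(1-\sum_{k\in\mathcal{S}^{c}}\beta_{k})P_{r}$; dividing the resulting residual variance by $N_{d}$ reproduces the signal-to-noise ratio inside $C(\cdot)$ in (\ref{GMARC_DF_IdG}), and hence (\ref{Ratebounds_1}).

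Finally I would argue that taking the union over $\Gamma$ with a constant $U$ loses nothing relative to the time-shared region of Proposition~\ref{Prop_MARC_DF}. This rests on concavity: $I_{r,\mathcal{S}}$ is concave in $\underline{\alpha}_{\mathcal{K}}$ by concavity of the logarithm, and $I_{d,\mathcal{S}}$ is concave in $(\underline{\alpha}_{\mathcal{K}},\underline{\beta}_{\mathcal{K}})$ as shown in Appendix~\ref{DG_AppConvex}. Since each supporting bound defining the polymatroid intersection is concave over the convex set $\Gamma$, any convex combination of a rate tuple achievable at $(\underline{\alpha}_{\mathcal{K}}^{(1)},\underline{\beta}_{\mathcal{K}}^{(1)})$ with one achievable at $(\underline{\alpha}_{\mathcal{K}}^{(2)},\underline{\beta}_{\mathcal{K}}^{(2)})$ is itself achievable at the corresponding convex combination of parameters; thus the union in (\ref{GMARC_R_DF}) is already convex and equals the DF region, completing the proof.
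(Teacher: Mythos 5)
Your proposal is correct and takes essentially the same approach as the paper: specializing Proposition~\ref{Prop_MARC_DF} to the Gaussian code construction in (\ref{GMARC_DF_source_sig})--(\ref{GMARC_deg_relaysig}) with constant $U$, evaluating the two mutual-information bounds to obtain (\ref{GMARC_DF_IrG}) and (\ref{GMARC_DF_IdG}), and using the concavity of $I_{r,\mathcal{S}}$ and $I_{d,\mathcal{S}}$ over the convex set $\Gamma$ to conclude that the union over $\Gamma$ is already convex, so no time-sharing is lost. Your explicit evaluation of the residual signals (including the coherent-combining cross term $2\sqrt{(1-\alpha_{k})\beta_{k}P_{k}P_{r}}$) simply writes out in detail what the paper asserts by direct substitution in the text preceding the theorem and in Appendix~\ref{DG_AppConvex}.
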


\begin{proof}
The rate region $\mathcal{R}_{DF}$ follows directly from Proposition
\ref{Prop_MARC_DF}, the code construction in (\ref{GMARC_DF_source_sig}%
)-(\ref{GMARC_deg_relaysig}), and the fact that $I_{r,\mathcal{S}}$ and
$I_{d,\mathcal{S}}$ are concave functions of $(\underline{\alpha}%
_{\mathcal{K}},\underline{\beta}_{\mathcal{K}})$.
\end{proof}

\begin{theorem}
The rate region $\mathcal{R}_{DF}$ is convex.
\end{theorem}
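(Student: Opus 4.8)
The plan is to verify convexity directly from the definition, by showing that any convex combination of two rate tuples in $\mathcal{R}_{DF}$ again lies in $\mathcal{R}_{DF}$. First I would fix two rate tuples $\underline{R}^{(1)},\underline{R}^{(2)}\in\mathcal{R}_{DF}$ and, appealing to the definition in (\ref{GMARC_R_DF}), select feasible parameter pairs $(\underline{\alpha}_{\mathcal{K}}^{(1)},\underline{\beta}_{\mathcal{K}}^{(1)})$ and $(\underline{\alpha}_{\mathcal{K}}^{(2)},\underline{\beta}_{\mathcal{K}}^{(2)})$ in $\Gamma$ that achieve them, so that $R_{\mathcal{S}}^{(i)}\leq\min\left(I_{r,\mathcal{S}}(\underline{\alpha}_{\mathcal{K}}^{(i)}),I_{d,\mathcal{S}}(\underline{\alpha}_{\mathcal{K}}^{(i)},\underline{\beta}_{\mathcal{K}}^{(i)})\right)$ for all $\mathcal{S}\subseteq\mathcal{K}$ and $i=1,2$. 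For $\theta\in[0,1]$, the natural candidate to certify membership of $\underline{R}=\theta\underline{R}^{(1)}+(1-\theta)\underline{R}^{(2)}$ is the convex combination of parameters $(\underline{\alpha}_{\mathcal{K}},\underline{\beta}_{\mathcal{K}})=\theta(\underline{\alpha}_{\mathcal{K}}^{(1)},\underline{\beta}_{\mathcal{K}}^{(1)})+(1-\theta)(\underline{\alpha}_{\mathcal{K}}^{(2)},\underline{\beta}_{\mathcal{K}}^{(2)})$. Since the constraints in (\ref{alp_beta_bounds}) defining $\Gamma$ are linear, $\Gamma$ is convex and this candidate again lies in $\Gamma$.

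The core step is then to show that the combined parameters dominate the combined rates on every face of the polymatroid intersection. For each $\mathcal{S}\subseteq\mathcal{K}$, I would invoke the concavity of $I_{r,\mathcal{S}}$ in $\underline{\alpha}_{\mathcal{K}}$, which follows from the concavity of the $\log$, together with the concavity of $I_{d,\mathcal{S}}$ in $(\underline{\alpha}_{\mathcal{K}},\underline{\beta}_{\mathcal{K}})$ established in Appendix \ref{DG_AppConvex}. Concavity of $I_{r,\mathcal{S}}$ gives $I_{r,\mathcal{S}}(\underline{\alpha}_{\mathcal{K}})\geq\theta I_{r,\mathcal{S}}(\underline{\alpha}_{\mathcal{K}}^{(1)})+(1-\theta)I_{r,\mathcal{S}}(\underline{\alpha}_{\mathcal{K}}^{(2)})\geq\theta R_{\mathcal{S}}^{(1)}+(1-\theta)R_{\mathcal{S}}^{(2)}=R_{\mathcal{S}}$, and the analogous chain of inequalities holds for $I_{d,\mathcal{S}}$. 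Taking the minimum of the two bounds yields $R_{\mathcal{S}}\leq\min\left(I_{r,\mathcal{S}}(\underline{\alpha}_{\mathcal{K}}),I_{d,\mathcal{S}}(\underline{\alpha}_{\mathcal{K}},\underline{\beta}_{\mathcal{K}})\right)$ for all $\mathcal{S}$, i.e., $\underline{R}\in\mathcal{R}_{r}(\underline{\alpha}_{\mathcal{K}})\cap\mathcal{R}_{d}(\underline{\alpha}_{\mathcal{K}},\underline{\beta}_{\mathcal{K}})\subseteq\mathcal{R}_{DF}$, which establishes convexity.

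The main subtlety, and the only place where real work is required, is precisely the concavity of the rate bounds: without it the union over parameters need not be convex, since the convex combination of parameters could otherwise shrink both constituent regions below the combined rate tuple. The concavity of $I_{r,\mathcal{S}}$ is immediate, but the concavity of $I_{d,\mathcal{S}}$ is the delicate ingredient, owing to the coherent-combining cross terms $2\sum_{k\in\mathcal{S}}\sqrt{(1-\alpha_{k})\beta_{k}P_{k}P_{r}}/N_{d}$ in (\ref{GMARC_DF_IdG}) that couple $\underline{\alpha}_{\mathcal{K}}$ and $\underline{\beta}_{\mathcal{K}}$; this is exactly what is verified in Appendix \ref{DG_AppConvex}, and I would take it as given here. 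Once concavity is in hand, the argument requires no case analysis on the polymatroid shapes and applies uniformly to every $\mathcal{S}$, completing the proof.
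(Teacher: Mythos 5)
Your proof is correct and takes essentially the same approach as the paper's: both arguments rest on the convexity of the parameter set $\Gamma$ together with the concavity of $I_{r,\mathcal{S}}$ and $I_{d,\mathcal{S}}$ over $\Gamma$, concluding that the bounds evaluated at the convex combination of parameters dominate the convex combination of the two rate tuples. You simply spell out explicitly the chain of inequalities that the paper's proof compresses into a single sentence.
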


\begin{proof}
To show that $\mathcal{R}_{DF}$ is convex, it suffices to show that
$I_{r,\mathcal{S}}$ and $I_{d,\mathcal{S}}$, for all $\mathcal{S}$, are
concave functions over the convex set $\Gamma$ of $(\underline{\alpha
}_{\mathcal{K}},\underline{\beta}_{\mathcal{K}})$. This is because the
concavity of $I_{r,\mathcal{S}}$ and $I_{d,\mathcal{S}}$, for all
$\mathcal{S}$, ensures that a convex sum of two or more rate tuples in
$\mathcal{R}_{DF}$, each corresponding to a different value of $(\underline
{\alpha}_{\mathcal{K}},\underline{\beta}_{\mathcal{K}})$ tuple, also belongs
to $\mathcal{R}_{DF}$, i.e., satisfies (\ref{DF_both_RB}) for $t=r,d$.
\end{proof}

\begin{theorem}
The rate regions $\mathcal{R}_{r}$ and $\mathcal{R}_{d}$ are polymatroids.
\end{theorem}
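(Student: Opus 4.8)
The plan is to verify directly that, for each fixed $(\underline{\alpha}_{\mathcal{K}},\underline{\beta}_{\mathcal{K}})\in\Gamma$, the two set functions $\mathcal{S}\mapsto I_{r,\mathcal{S}}$ and $\mathcal{S}\mapsto I_{d,\mathcal{S}}$ defined in (\ref{GMARC_DF_IrG}) and (\ref{GMARC_DF_IdG}) are nondecreasing and submodular, with the appropriate grounding at $\mathcal{S}=\emptyset$; these are precisely the three properties characterizing a polymatroid rank function in the sense used in Lemma \ref{Lemma_PolyIntersect}.

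First I would exhibit each bound as a nondecreasing concave scalar function composed with a nonnegative modular set function. For the relay this is immediate: $I_{r,\mathcal{S}}=C(g_r(\mathcal{S}))$ with $g_r(\mathcal{S})=\sum_{k\in\mathcal{S}}\alpha_k P_k/N_r$, which is modular (additive over $\mathcal{S}$) with nonnegative increments. For the destination the only nontrivial step is to rewrite the term $(1-\sum_{k\in\mathcal{S}^c}\beta_k)$ using $\mathcal{S}^c=\mathcal{K}\setminus\mathcal{S}$ as
\begin{equation}
1-\sum_{k\in\mathcal{S}^c}\beta_k=\left(1-\sum_{k\in\mathcal{K}}\beta_k\right)+\sum_{k\in\mathcal{S}}\beta_k,
\end{equation}
so that the argument of $C$ in (\ref{GMARC_DF_IdG}) collapses to $g_d(\mathcal{S})=c+\sum_{k\in\mathcal{S}}a_k$ with
\begin{equation}
c=\frac{\left(1-\sum_{k\in\mathcal{K}}\beta_k\right)P_r}{N_d},\qquad a_k=\frac{P_k}{N_d}+\frac{\beta_k P_r}{N_d}+2\sqrt{(1-\alpha_k)\beta_k\frac{P_k}{N_d}\frac{P_r}{N_d}}.
\end{equation}
The constraint $\sum_{k\in\mathcal{K}}\beta_k\le1$ in (\ref{alp_beta_bounds}) guarantees $c\ge0$, and each $a_k\ge0$ since $\alpha_k\le1$; crucially the square-root terms, though nonlinear in the power fractions, enter as a sum over $k\in\mathcal{S}$ and hence keep $g_d$ modular in $\mathcal{S}$.

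Next I would invoke the general fact that a nondecreasing concave function of a nonnegative modular set function is monotone and submodular. Monotonicity is clear: $g_r$ and $g_d$ are nondecreasing in $\mathcal{S}$ (nonnegative coefficients) and $C(x)=\log(1+x)/2$ is increasing. For submodularity, modularity gives $g(\mathcal{S})+g(\mathcal{T})=g(\mathcal{S}\cup\mathcal{T})+g(\mathcal{S}\cap\mathcal{T})$ for either $g\in\{g_r,g_d\}$; writing $p=g(\mathcal{S}\cap\mathcal{T})\le g(\mathcal{S}),g(\mathcal{T})\le g(\mathcal{S}\cup\mathcal{T})=q$ with $g(\mathcal{S})+g(\mathcal{T})=p+q$, concavity of $C$ yields $C(g(\mathcal{S}))+C(g(\mathcal{T}))\ge C(p)+C(q)$, which is exactly submodularity of $C\circ g$.

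Finally I would address grounding at the empty set. One has $I_{r,\emptyset}=C(0)=0$, so $\mathcal{R}_r$ is immediately a polymatroid. For the destination, $I_{d,\emptyset}=C(c)\ge0$ need not vanish, but since $R_\emptyset=0$ the corresponding constraint is vacuous; replacing $I_{d,\emptyset}$ by $0$ leaves $\mathcal{R}_d$ unchanged, and the modified function remains monotone and submodular precisely because $c\ge0$. The point requiring the most care is thus not the concavity computation but this last observation: the destination bound is not grounded, and one must check that its value at $\emptyset$ is nonnegative (which relies on the power-sum constraint defining $\Gamma$) so that normalizing it to zero preserves submodularity and hence the polymatroid structure.
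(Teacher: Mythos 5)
Your proof is correct, but it takes a genuinely different route from the paper's. The paper (Appendix \ref{DG_App_PM}) argues at the level of mutual informations: it shows that the set functions $\mathcal{S}\mapsto I(X_{\mathcal{S}}X_{r};Y_{d}|X_{\mathcal{S}^{c}}V_{\mathcal{S}^{c}}U)$ and $\mathcal{S}\mapsto I(X_{\mathcal{S}};Y_{r}|X_{\mathcal{S}^{c}}V_{\mathcal{K}}X_{r}U)$ are submodular for \emph{every} input distribution factoring as (\ref{Prop_inp_dist}), using the chain rule for mutual information, the Markov structure $(X_{k},V_{k})-U-(X_{j},V_{j})$, and the fact that conditioning cannot increase entropy; the Gaussian signaling of (\ref{GMARC_DF_source_sig})--(\ref{GMARC_deg_relaysig}) is then handled as a specialization. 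You instead work directly with the closed-form Gaussian bounds (\ref{GMARC_DF_IrG})--(\ref{GMARC_DF_IdG}) and exhibit each as an increasing concave function $C$ composed with a nonnegative (shifted) modular set function, then invoke the elementary fact that concave-of-modular is monotone and submodular; your rewriting of $1-\sum_{k\in\mathcal{S}^{c}}\beta_{k}$ as $(1-\sum_{k\in\mathcal{K}}\beta_{k})+\sum_{k\in\mathcal{S}}\beta_{k}$ is exactly what makes the destination argument modular in $\mathcal{S}$, and your exchange argument from concavity is sound. Each approach buys something: the paper's information-theoretic proof is distribution-free, so the identical argument also covers the outer-bound regions $\mathcal{R}_{r}^{ob}$ and $\mathcal{R}_{d}^{ob}$ and would survive any change of input distribution, whereas your argument is tied to the Gaussian expressions but is more elementary and self-contained. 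Moreover, you make explicit a point the paper disposes of silently by defining its set functions to be zero at $\mathcal{S}=\emptyset$: the destination bound does not vanish at the empty set, and regrounding it to zero preserves submodularity only because its value there, $C(c)$ with $c=(1-\sum_{k\in\mathcal{K}}\beta_{k})P_{r}/N_{d}$, is nonnegative --- which is where the constraint $\sum_{k\in\mathcal{K}}\beta_{k}\leq1$ defining $\Gamma$ in (\ref{alp_beta_bounds}) enters. That care is a genuine (if small) improvement in rigor over the paper's presentation.
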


\begin{proof}
In Appendix \ref{DG_App_PM} we show that for every choice of input
distribution satisfying (\ref{Prop_inp_dist}) the bounds in
(\ref{Prop_DF_rateregion}) are submodular set functions, and thus, enclose
regions that are polymatroids. For the Gaussian input distribution in
(\ref{GMARC_DF_source_sig}) and (\ref{GMARC_deg_relaysig}), this implies that
$\mathcal{R}_{r}\left(  \underline{\alpha}\right)  $ and $\mathcal{R}%
_{d}\left(  \underline{\alpha},\underline{\beta}\right)  $ are polymatroids
for every choice of $(\underline{\alpha},\underline{\beta})$, i.e.,
$\mathcal{R}_{r}$ and $\mathcal{R}_{d}$ are completely defined by the corner
(vertex) points on their dominant $K$-user sum-rate face \cite[Chap.
44]{cap_theorems:Schrijver01}.
\end{proof}

The region $\mathcal{R}_{DF}$ in (\ref{GMARC_R_DF}) is a union of the
intersection of the regions $\mathcal{R}_{r}$ and $\mathcal{R}_{d}$ achieved
at the relay and destination respectively, where the union is over all
$(\underline{\alpha}_{\mathcal{K}},\underline{\beta}_{\mathcal{K}})$
$\in\Gamma$. Since $\mathcal{R}_{DF}$ is convex, each point on the boundary of
$\mathcal{R}_{DF}$ is obtained by maximizing the weighted sum $%
%TCIMACRO{\tsum \nolimits_{k\in\mathcal{K}}}%
%BeginExpansion
{\textstyle\sum\nolimits_{k\in\mathcal{K}}}
%EndExpansion
\mu_{k}R_{k}$ over all $\Gamma$, and for all $\mu_{k}>0$. Specifically, we
determine the optimal policy $(\underline{\alpha}_{\mathcal{K}}^{\ast
},\underline{\beta}_{\mathcal{K}}^{\ast})$ that maximizes the sum-rate
$R_{\mathcal{K}}$ when $\mu_{k}$ $=$ $1$ for all $k$. From (\ref{GMARC_R_DF}),
we see that every point on the boundary of $\mathcal{R}_{DF}$ results from the
intersection of the polymatroids $\mathcal{R}_{r}(\underline{\alpha
}_{\mathcal{K}})$ and $\mathcal{R}_{d}(\underline{\alpha}_{\mathcal{K}%
},\underline{\beta}_{\mathcal{K}})$ for some $(\underline{\alpha}%
_{\mathcal{K}},\underline{\beta}_{\mathcal{K}})$. Since $\mathcal{R}_{r}$ and
$\mathcal{R}_{d}$ are polymatroids, as with the outer bound analysis, here too
we use Lemma \ref{Lemma_PolyIntersect} on polymatroid intersections to broadly
classify the intersection of two polymatroids into the categories of active
and inactive sets. In the following theorem we use Lemma
\ref{Lemma_PolyIntersect} to write the bound on the $K$-user DF sum-rate. We
remark that $\mathcal{R}_{r}$ and $\mathcal{R}_{d}$ are polymatroids
parametrized by $(\underline{\alpha}_{\mathcal{K}},\underline{\beta
}_{\mathcal{K}})$, and thus, Lemma \ref{Lemma_PolyIntersect} applies for each
choice of $(\underline{\alpha}_{\mathcal{K}},\underline{\beta}_{\mathcal{K}})$.

\begin{theorem}
\label{Th_GMARC_DF_1}For any $(\underline{\alpha}_{\mathcal{K}},\underline
{\beta}_{\mathcal{K}})$, the maximum $K$-user sum-rate $R_{\mathcal{K}}$
resulting from the intersecting polymatroids $\mathcal{R}_{r}$ and
$\mathcal{R}_{d}$ is
\begin{equation}
R_{\mathcal{K}}=\left\{
\begin{array}
[c]{ll}%
I_{d,\mathcal{A}}+I_{r,\mathcal{A}^{c}}, & \text{condition }2\text{ }\\
\min\left(  I_{r,\mathcal{K}},I_{d,\mathcal{K}}\right)  , & \text{otherwise}%
\end{array}
\right.  \label{DF_Rsum_cond12}%
\end{equation}
where condition $2$ is defined for a $\emptyset\not =\mathcal{A\subset K}$ as%
\begin{equation}
I_{d,\mathcal{A}}+I_{r,\mathcal{A}^{c}}<\min\left(  I_{r,\mathcal{K}%
},I_{d,\mathcal{K}}\right)  . \label{NI_Case_Cond_Sum}%
\end{equation}

\end{theorem}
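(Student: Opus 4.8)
The plan is to apply Lemma \ref{Lemma_PolyIntersect} directly, mirroring the one-line argument used for the outer-bound counterpart in Theorem \ref{DGOB_Th1}. First I would fix $(\underline{\alpha}_{\mathcal{K}},\underline{\beta}_{\mathcal{K}})\in\Gamma$ and recall from the preceding theorem that $\mathcal{R}_{r}(\underline{\alpha}_{\mathcal{K}})$ and $\mathcal{R}_{d}(\underline{\alpha}_{\mathcal{K}},\underline{\beta}_{\mathcal{K}})$ are polymatroids whose rank functions are $I_{r,\mathcal{S}}$ and $I_{d,\mathcal{S}}$. To match the hypotheses of the lemma I would record that these are nondecreasing submodular set functions on $\mathcal{K}$ (the polymatroid property established via Appendix \ref{DG_App_PM}), normalized so that $I_{r,\emptyset}=I_{d,\emptyset}=0$.

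With $f_{1}=I_{d,\cdot}$ and $f_{2}=I_{r,\cdot}$, Lemma \ref{Lemma_PolyIntersect} gives at once
\begin{equation}
\max R_{\mathcal{K}}=\min_{\mathcal{S}\subseteq\mathcal{K}}\left(I_{d,\mathcal{S}}+I_{r,\mathcal{S}^{c}}\right).
\end{equation}
I would then split this minimization according to whether the optimal subset is one of the two extremes or a proper nonempty subset. Evaluating at $\mathcal{S}=\mathcal{K}$ yields $I_{d,\mathcal{K}}+I_{r,\emptyset}=I_{d,\mathcal{K}}$ and at $\mathcal{S}=\emptyset$ yields $I_{d,\emptyset}+I_{r,\mathcal{K}}=I_{r,\mathcal{K}}$, so the two extremes together contribute the value $\min(I_{r,\mathcal{K}},I_{d,\mathcal{K}})$; over the remaining $\emptyset\neq\mathcal{A}\subset\mathcal{K}$ the candidate values are $I_{d,\mathcal{A}}+I_{r,\mathcal{A}^{c}}$.

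Combining the two pieces produces exactly the dichotomy in the statement. If some proper nonempty $\mathcal{A}$ satisfies $I_{d,\mathcal{A}}+I_{r,\mathcal{A}^{c}}<\min(I_{r,\mathcal{K}},I_{d,\mathcal{K}})$, namely condition 2 of (\ref{NI_Case_Cond_Sum}), then the global minimum is attained at such an $\mathcal{A}$ and equals $I_{d,\mathcal{A}}+I_{r,\mathcal{A}^{c}}$; otherwise no interior subset improves on the extremes and the minimum is $\min(I_{r,\mathcal{K}},I_{d,\mathcal{K}})$, which is the ``otherwise'' branch. I do not anticipate a substantive obstacle, since the heavy lifting---submodularity of $I_{r,\cdot}$ and $I_{d,\cdot}$ and the intersection formula itself---is already available. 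The only points demanding care are the normalization $I_{d,\emptyset}=0$ (the empty-set constraint $R_{\emptyset}\le I_{d,\emptyset}$ is vacuous, so the relay's private-signal power appearing in (\ref{GMARC_DF_IdG}) at $\mathcal{S}=\emptyset$ may be discarded without changing the region), which is what makes the extreme subsets reproduce $\min(I_{r,\mathcal{K}},I_{d,\mathcal{K}})$ exactly, and the harmless observation that when several subsets tie for the minimum any one of them yields the same sum-rate.
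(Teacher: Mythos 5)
Your proposal is correct and follows the same route as the paper, whose proof is simply the one-line application of Lemma \ref{Lemma_PolyIntersect} to $R_{\mathcal{K}}$ for each fixed $(\underline{\alpha}_{\mathcal{K}},\underline{\beta}_{\mathcal{K}})$; your explicit split of the minimization into the extreme subsets $\mathcal{S}=\emptyset,\mathcal{K}$ versus proper nonempty subsets is exactly the dichotomy the paper intends. Your added remark on normalizing $I_{d,\emptyset}=0$ (consistent with the set functions $f_{3},f_{4}$ in Appendix \ref{DG_App_PM}) is a correct and worthwhile point of care, not a deviation.
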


\begin{remark}
The condition in (\ref{NI_Case_Cond_Sum}) determines whether the intersection
of two polymatroids belongs to either the set of active or inactive cases with
respect to the $K$-user sum-rate.
\end{remark}

\begin{proof}
The proof follows from applying Lemma \ref{Lemma_PolyIntersect} to the
maximization $R_{\mathcal{K}}=%
%TCIMACRO{\tsum \nolimits_{k\in\mathcal{K}}}%
%BeginExpansion
{\textstyle\sum\nolimits_{k\in\mathcal{K}}}
%EndExpansion
R_{k}$ for each choice of $(\underline{\alpha}_{\mathcal{K}},\underline{\beta
}_{\mathcal{K}})$.
\end{proof}

We seek to determine the maximum sum-rate $R_{\mathcal{K}}$ over all
$(\underline{\alpha}_{\mathcal{K}},\underline{\beta}_{\mathcal{K}})\in\Gamma$.
To this end, we first consider the optimization problem
\begin{equation}
R_{\mathcal{K}}=\max\limits_{(\underline{\alpha}_{\mathcal{K}},\underline
{\beta}_{\mathcal{K}})\in\Gamma}\min\left(  I_{r,\mathcal{K}}\left(
\underline{\alpha}_{\mathcal{K}}\right)  ,I_{d,\mathcal{K}}\left(
\underline{\alpha}_{\mathcal{K}},\underline{\beta}_{\mathcal{K}}\right)
\right)  . \label{RK_max_min}%
\end{equation}
We write $(\underline{\alpha}_{\mathcal{K}}^{\ast},\underline{\beta
}_{\mathcal{K}}^{\ast})$ to denote the \textit{max-min rule} optimizing
(\ref{RK_max_min}) and write $\mathcal{P}$ to denote the set of all
$(\underline{\alpha}_{\mathcal{K}}^{\ast},\underline{\beta}_{\mathcal{K}%
}^{\ast})$ maximizing (\ref{DGOB_RK_MM1}). A general solution to the max-min
optimization in (\ref{DGOB_RK_MM1}) simplifies to three cases \cite[II.C]%
{cap_theorems:HVPoor01}. The first two correspond to those in which the
maximum achieved by one of the two functions is smaller than the other, while
the third corresponds to the case in which the maximum results when the two
functions are equal (see Fig. \ref{Fig_Prop_1}). For $I_{r,\mathcal{K}}$ and
$I_{d,\mathcal{K}}$ defined in (\ref{GMARC_DF_IrG}) and (\ref{GMARC_DF_IdG}),
respectively, we can show that the solution simplifies to the consideration of
only two cases. The following theorem summarizes the solution to the max-min
problem in (\ref{RK_max_min}). The proof is developed in Appendix
\ref{DG_App4OBProof}.

\begin{theorem}
\label{Th_GMARC_DF_2}The max-min optimization
\begin{equation}
R_{\mathcal{K}}=\max\limits_{(\underline{\alpha}_{\mathcal{K}},\underline
{\beta}_{\mathcal{K}})\in\Gamma}\min\left(  I_{r,\mathcal{K}}\left(
\underline{\alpha}_{\mathcal{K}}\right)  ,I_{d,\mathcal{K}}\left(
\underline{\alpha}_{\mathcal{K}},\underline{\beta}_{\mathcal{K}}\right)
\right)
\end{equation}
simplifies to the following two cases.%
\begin{equation}%
\begin{array}
[c]{lll}%
\text{Case }1\text{:} & R_{\mathcal{K}}=C\left(  \frac{%
%TCIMACRO{\tsum \limits_{k\in\mathcal{K}}}%
%BeginExpansion
{\textstyle\sum\limits_{k\in\mathcal{K}}}
%EndExpansion
P_{k}}{N_{r}}\right)  & I_{r,\mathcal{K}}\left(  \underline{1}\right)
<I_{d,\mathcal{K}}\left(  \underline{1},\underline{0}\right) \\
\text{Case }2\text{:} & R_{\mathcal{K}}=C\left(  \left(
%TCIMACRO{\tsum \limits_{k\in\mathcal{K}}}%
%BeginExpansion
{\textstyle\sum\limits_{k\in\mathcal{K}}}
%EndExpansion
\frac{P_{k}}{N_{r}}\right)  -\frac{\left(  q^{\ast}\right)  ^{2}P_{\max}%
}{N_{r}}\right)  \equiv I^{\ast} & I_{r,\mathcal{K}}^{\ast}=I_{d,\mathcal{K}%
}^{\ast}%
\end{array}
\label{DGDF_RKmax}%
\end{equation}
where $I_{t,\mathcal{K}}^{\ast}=I_{t,\mathcal{K}}(\underline{\alpha
}_{\mathcal{K}}^{\ast},\underline{\beta}_{\mathcal{K}}^{\ast})$, $t=r,d$,
$P_{\max}=\max_{k}P_{k}$ with $\lambda_{k}=P_{k}/P_{\max}$, and
\begin{equation}
q^{\ast}\overset{\vartriangle}{=}\sum\lambda_{k}\left(  1-\alpha_{k}^{\ast
}\right)  \label{DGDF_alphastar}%
\end{equation}
is the unique value satisfying the quadratic $I_{r,\mathcal{K}}(\underline
{\alpha}_{\mathcal{K}}^{\ast},\underline{\beta}_{\mathcal{K}}^{\ast
})=I_{d,\mathcal{K}}(\underline{\alpha}_{\mathcal{K}}^{\ast},\underline{\beta
}_{\mathcal{K}}^{\ast})$ and is given by%
\begin{equation}
q^{\ast}=\frac{-K_{1}+\sqrt{K_{1}^{2}+\left(  K_{3}-K_{2}\right)  K_{0}}%
}{K_{0}} \label{DGDF_qstar}%
\end{equation}
with
\begin{equation}%
\begin{array}
[c]{ll}%
K_{0}=P_{\max}\left/  N_{r}\right.  , & K_{1}=\sqrt{P_{\max}P_{r}}\left/
N_{d}\right. \\
K_{2}=\frac{\sum\nolimits_{k\in\mathcal{K}}P_{k}}{N_{d}}+\frac{P_{r}}{N_{d}%
},\text{ and } & K_{3}=\frac{%
%TCIMACRO{\tsum \nolimits_{k\in\mathcal{K}}}%
%BeginExpansion
{\textstyle\sum\nolimits_{k\in\mathcal{K}}}
%EndExpansion
P_{k}}{N_{r}}.
\end{array}
\end{equation}
The entries of the optimal $\underline{\beta}_{\mathcal{K}}^{\ast}$ are given
by%
\begin{equation}%
\begin{array}
[c]{cc}%
\beta_{k}^{\ast}=\left\{
\begin{array}
[c]{ll}%
\frac{\left(  1-\alpha_{k}^{\ast}\right)  P_{k}}{\sum_{k=1}^{K}\left(
1-\alpha_{k}^{\ast}\right)  P_{k}} & \underline{\alpha}_{\mathcal{K}}^{\ast
}\not =\underline{1}\\
0 & \underline{\alpha}_{\mathcal{K}}^{\ast}=\underline{1}%
\end{array}
\right.  & \text{for all }k\in\mathcal{K}.
\end{array}
\label{DGDF_betastar}%
\end{equation}

\end{theorem}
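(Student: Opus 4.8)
The plan is to exploit the separable structure of the two $K$-user bounds. The relay bound $I_{r,\mathcal{K}}$ in (\ref{GMARC_DF_IrG}) depends on $\underline{\alpha}_{\mathcal{K}}$ alone, while the destination bound $I_{d,\mathcal{K}}$ in (\ref{GMARC_DF_IdG}) depends on both power-fraction vectors and is increasing in the coherent-combining term $\sum_{k}\sqrt{(1-\alpha_{k})\beta_{k}P_{k}}$. I would therefore optimize in two nested stages. First, for each fixed $\underline{\alpha}_{\mathcal{K}}$ I maximize $I_{d,\mathcal{K}}$ over $\underline{\beta}_{\mathcal{K}}$ subject to $\sum_{k}\beta_{k}\leq 1$; since $I_{r,\mathcal{K}}$ is untouched by $\underline{\beta}_{\mathcal{K}}$, this inner step can only increase the minimum of the two bounds. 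The inner problem amounts to maximizing $\sum_{k}\sqrt{(1-\alpha_{k})P_{k}}\,\sqrt{\beta_{k}}$, a concave program whose Cauchy--Schwarz (equivalently Lagrangian) optimizer is $\beta_{k}^{\ast}\propto(1-\alpha_{k})P_{k}$, i.e.\ exactly (\ref{DGDF_betastar}), with optimal value $\sqrt{\sum_{k}(1-\alpha_{k})P_{k}}$. When $\underline{\alpha}_{\mathcal{K}}=\underline{1}$ all coefficients vanish and $\underline{\beta}_{\mathcal{K}}$ is immaterial, which justifies setting it to $\underline{0}$ in that degenerate case.

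Second, substituting $\beta_{k}^{\ast}$ collapses both bounds to functions of a single scalar aggregate. Writing $\sum_{k}(1-\alpha_{k})P_{k}=P_{\max}\sum_{k}\lambda_{k}(1-\alpha_{k})$ and introducing $q^{\ast}$ through $(q^{\ast})^{2}=\sum_{k}\lambda_{k}(1-\alpha_{k})$, the relay bound becomes $I_{r,\mathcal{K}}=C\!\left(K_{3}-K_{0}(q^{\ast})^{2}\right)$ and the destination bound becomes $I_{d,\mathcal{K}}=C\!\left(K_{2}+2K_{1}q^{\ast}\right)$, with $K_{0},K_{1},K_{2},K_{3}$ as in the theorem. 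These are precisely the outer-bound functions $B_{r,\mathcal{K}}$ and $B_{d,\mathcal{K}}$ of (\ref{DGOB_BrKx})--(\ref{DGOB_BdKx}) with $q^{\ast}$ in the role of $x$: one concave and strictly decreasing, the other concave and strictly increasing. Consequently the DF max-min reduces to the identical scalar max-min already solved in Theorem \ref{DGOB_Th2}.

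With this reduction I invoke the minimax machinery of Appendix \ref{DG_App4OBProof}. Because the two reduced bounds are continuous, strictly monotone in opposite directions, and therefore cross at most once, the max-min has exactly two regimes. If the relay bound at its largest point $\underline{\alpha}_{\mathcal{K}}=\underline{1}$ (so $q^{\ast}=0$) still lies below $I_{d,\mathcal{K}}(\underline{1},\underline{0})$, i.e.\ $K_{3}<K_{2}$, then the relay link is the bottleneck, the minimum is pinned at the relay bound, and decreasing any $\alpha_{k}$ only lowers it; this gives Case~1 with $R_{\mathcal{K}}=C(\sum_{k}P_{k}/N_{r})$. Otherwise the curves cross in the interior and the optimum equalizes them, $I_{r,\mathcal{K}}^{\ast}=I_{d,\mathcal{K}}^{\ast}$; equating the arguments of the two $C(\cdot)$'s yields $K_{0}(q^{\ast})^{2}+2K_{1}q^{\ast}+(K_{2}-K_{3})=0$, whose positive root is (\ref{DGDF_qstar}), and back-substitution into $I_{r,\mathcal{K}}$ gives $R_{\mathcal{K}}=I^{\ast}$ as claimed in (\ref{DGDF_RKmax}).

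Finally I would verify feasibility and nonuniqueness. As $\underline{\alpha}_{\mathcal{K}}$ ranges over $[0,1]^{K}$ the quantity $\sum_{k}\lambda_{k}(1-\alpha_{k})$ sweeps $[0,\sum_{k}\lambda_{k}]$, so $q^{\ast}$ ranges over the same interval $[0,\sqrt{\sum_{k}\lambda_{k}}]$ used in the outer-bound analysis (\ref{DGOB_x_range}); this guarantees that the interior crossing value is attained by some feasible $\underline{\alpha}_{\mathcal{K}}^{\ast}$, whose companion $\underline{\beta}_{\mathcal{K}}^{\ast}$ is then fixed by (\ref{DGDF_betastar}). Since only the aggregate $(q^{\ast})^{2}=\sum_{k}\lambda_{k}(1-\alpha_{k}^{\ast})$ enters the objective, the set $\mathcal{P}$ of maximizers is a continuum rather than a single point. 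The step I expect to be the main obstacle is precisely this reduction: showing rigorously that after the inner $\beta$-optimization the objective depends on $\underline{\alpha}_{\mathcal{K}}$ only through the one-dimensional aggregate, and that the substitution $(q^{\ast})^{2}=\sum_{k}\lambda_{k}(1-\alpha_{k})$ delivers the monotonicity and concavity needed to legitimately inherit the solution of Theorem \ref{DGOB_Th2}, while the degenerate boundary $\underline{\alpha}_{\mathcal{K}}=\underline{1}$ (where the $\beta$-optimization is vacuous) must be treated separately to pin down Case~1.
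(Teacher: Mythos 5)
Your proposal is correct and takes essentially the same route as the paper's proof in Appendix \ref{DG_App5_DFProof}: the Lagrangian (Cauchy--Schwarz) inner optimization giving $\beta_{k}^{\ast}\propto(1-\alpha_{k})P_{k}$, the reduction of both $K$-user bounds to functions of the single scalar $q$, and the unique positive root of the resulting quadratic are exactly the paper's steps. The only difference is organizational: you perform the $\beta$-optimization up front and deduce the two regimes directly from the opposite monotonicities of the reduced bounds, whereas the paper invokes the three-case minimax framework of Proposition \ref{Prop_minimax} and separately rules out the case $I_{d,\mathcal{K}}^{\ast}<I_{r,\mathcal{K}}^{\ast}$ as infeasible.
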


\begin{remark}
\label{DGDF_RemTh2}The optimal $q^{\ast}$ in (\ref{DGDF_qstar}) is the same as
that for the optimal $x^{\ast}$ in (\ref{DGOB_xstar}). Thus, from
(\ref{DGOB_RKmax}) and (\ref{DGDF_RKmax}), we see that for both cases, the
maximum cutset bound is equal to the maximum DF bound on $R_{\mathcal{K}}$.
\end{remark}

From Lemma \ref{Lemma_PolyIntersect} we see that the maximum sum-rate can be
achieved by either an active or an inactive case. In the following theorem we
show that it suffices to consider two conditions in determining the maximum
$K$-user DF\ sum-rate. We enumerate the two conditions as%
\begin{equation}%
\begin{array}
[c]{cc}%
\text{Condition 1:} & I_{r,\mathcal{K}}(\underline{1})\leq I_{d,\mathcal{K}%
}(\underline{1},\underline{0})\\
\text{Condition 2:} & I_{r,\mathcal{K}}(\underline{1})>I_{d,\mathcal{K}%
}(\underline{1},\underline{0}).
\end{array}
\label{RK_DF_conds}%
\end{equation}
The first condition implies that the maximum sum-rate at the relay is smaller
than the corresponding rate at the destination; for this case, we show that
$I_{r,\mathcal{S}}(\underline{1})<I_{d,\mathcal{S}}(\underline{1}%
,\underline{0})$ for all $\mathcal{S}\subset\mathcal{K}$, i.e., $\mathcal{R}%
_{DF}=\mathcal{R}_{r}\subset\mathcal{R}_{d}$. Physically, this corresponds to
the case where the relay has a high SNR link to the destination and the
multiaccess link from the sources to the relay is the bottleneck link. Under
this condition, we show that the sum-capacity of a degraded Gaussian\ MARC is
achieved by DF. On the other hand, when condition 2 occurs, i.e., when
condition 1 does not hold in (\ref{RK_DF_conds}), we use the monotone
properties of $I_{r,\mathcal{K}}$ and $I_{d,\mathcal{K}}$ and Lemma
\ref{Lemma_PolyIntersect} to show that
\begin{equation}
R_{K}\leq\max\limits_{(\underline{\alpha}_{\mathcal{K}},\underline{\beta
}_{\mathcal{K}})\in\Gamma}\min\left\{  I_{r,\mathcal{K}}\left(  \underline
{\alpha}_{\mathcal{K}}\right)  ,I_{d,\mathcal{K}}\left(  \underline{\alpha
}_{\mathcal{K}},\underline{\beta}_{\mathcal{K}}\right)  \right\}
\label{RK_DF_maxmin}%
\end{equation}
with equality when the intersection of $\mathcal{R}_{r}(\underline{\alpha
}_{\mathcal{K}})$ and $\mathcal{R}_{d}(\underline{\alpha}_{\mathcal{K}%
},\underline{\beta}_{\mathcal{K}})$ results in an active case. From Theorem
\ref{Th_GMARC_DF_2}, a continuous set $\mathcal{P}$ of $(\underline{\alpha
}_{\mathcal{K}}^{\ast},\underline{\beta}_{\mathcal{K}}^{\ast})$ with a unique
$\underline{\beta}_{\mathcal{K}}^{\ast}$ for each choice of $\underline
{\alpha}_{\mathcal{K}}^{\ast}$ maximizes the right-side of (\ref{RK_DF_maxmin}%
). Furthermore, we show that the bound in (\ref{RK_DF_maxmin}) is the
sum-capacity when an active case achieves the maximum sum-rate. Finally, for
the class of symmetric degraded G-MARCs, we prove the existence of an active
case that achieves the sum-capacity.

\begin{theorem}
\label{DGDF_Th3}The $K$-user DF sum-rate $R_{\mathcal{K}}$ for a degraded
Gaussian\ MARC is
\begin{equation}%
\begin{array}
[c]{ll}%
R_{\mathcal{K}}=C\left(  \sum_{k\in\mathcal{K}}P_{k}/N_{r}\right)  , &
I_{r,\mathcal{K}}(\underline{1})<I_{d,\mathcal{K}}(\underline{1},\underline
{0})\\
R_{\mathcal{K}}\leq C\left(  \left(
%TCIMACRO{\tsum \limits_{k\in\mathcal{K}}}%
%BeginExpansion
{\textstyle\sum\limits_{k\in\mathcal{K}}}
%EndExpansion
\frac{P_{k}}{N_{r}}\right)  -\frac{\left(  q^{\ast}\right)  ^{2}P_{\max}%
}{N_{r}}\right)  , & \text{otherwise}.
\end{array}
\label{DF_RK_rate_defn}%
\end{equation}
For $I_{r,\mathcal{K}}(\underline{1})<I_{d,\mathcal{K}}(\underline
{1},\underline{0})$, DF achieves the capacity region and the sum-capacity of
the degraded Gaussian\ MARC. The upper bound on $R_{\mathcal{K}}$ in
(\ref{DF_RK_rate_defn}) is achieved with equality only for a class of
\textit{active} degraded Gaussian MARCs for which there exists a
$(\underline{\alpha}_{\mathcal{K}}^{\ast},\underline{\beta}_{\mathcal{K}%
}^{\ast})\in\mathcal{P}$ such that $\mathcal{R}_{r}(\underline{\alpha
}_{\mathcal{K}}^{\ast})\cap\mathcal{R}_{d}(\underline{\alpha}_{\mathcal{K}%
}^{\ast},\underline{\beta}_{\mathcal{K}}^{\ast})$ is an active case and is the
sum-capacity for this class. This active class also includes the class of
symmetric degraded Gaussian MARCs.
\end{theorem}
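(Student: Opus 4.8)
The plan is to establish Theorem~\ref{DGDF_Th3} as the decode-and-forward counterpart of the outer-bound result, Theorem~\ref{DGOB_Th3}, using the base point $(\underline{\alpha}_{\mathcal{K}},\underline{\beta}_{\mathcal{K}})=(\underline{1},\underline{0})$ in place of $\underline{\gamma}_{\mathcal{K}}=\underline{0}$. First I would evaluate (\ref{GMARC_DF_IrG})--(\ref{GMARC_DF_IdG}) at this point, obtaining $I_{r,\mathcal{S}}(\underline{1})=C(\sum_{k\in\mathcal{S}}P_k/N_r)$ and $I_{d,\mathcal{S}}(\underline{1},\underline{0})=C(\sum_{k\in\mathcal{S}}P_k/N_d+P_r/N_d)$, which agree termwise with $B_{r,\mathcal{S}}(\underline{0})$ and $B_{d,\mathcal{S}}(\underline{0})$ in (\ref{DGOB_BrS0})--(\ref{DGOB_BdS0}). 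Because every summand of (\ref{GMARC_DF_IrG}) increases in $\alpha_k$, the relay polymatroid is largest at $\underline{\alpha}_{\mathcal{K}}=\underline{1}$, so $\mathcal{R}_r(\underline{\alpha}_{\mathcal{K}})\subseteq\mathcal{R}_r(\underline{1})$ for every $\underline{\alpha}_{\mathcal{K}}$; this monotonicity lets the base point control the entire region $\mathcal{R}_{DF}$.

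For the regime $I_{r,\mathcal{K}}(\underline{1})<I_{d,\mathcal{K}}(\underline{1},\underline{0})$ I would rewrite the hypothesis as $\sum_k P_k/N_r<\sum_k P_k/N_d+P_r/N_d$ and then reuse the expansion leading to (\ref{DGOB_Case1_CondS}): isolating the $\mathcal{S}^{c}$ terms and invoking degradedness $N_d=N_r+N_\Delta>N_r$ yields $I_{r,\mathcal{S}}(\underline{1})<I_{d,\mathcal{S}}(\underline{1},\underline{0})$ for every $\mathcal{S}\subset\mathcal{K}$. Hence $\mathcal{R}_r(\underline{1})\subset\mathcal{R}_d(\underline{1},\underline{0})$, so the base-point intersection equals $\mathcal{R}_r(\underline{1})$ and is an active case; with the monotonicity above this gives $\mathcal{R}_{DF}=\mathcal{R}_r(\underline{1})$. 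Since the matching cutset region satisfies $\mathcal{R}_{OB}=\mathcal{R}_r^{ob}(\underline{0})=\mathcal{R}_r(\underline{1})$, the inner and outer regions coincide, so DF achieves the whole capacity region and $C(\sum_k P_k/N_r)$ is the sum-capacity.

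In the complementary regime I would apply Lemma~\ref{Lemma_PolyIntersect} at each $(\underline{\alpha}_{\mathcal{K}},\underline{\beta}_{\mathcal{K}})$ to get $R_{\mathcal{K}}\le\min(I_{r,\mathcal{K}},I_{d,\mathcal{K}})$ and then maximize over $\Gamma$; Theorem~\ref{Th_GMARC_DF_2} identifies the resulting maximum as $I^{\ast}=C((\sum_k P_k/N_r)-(q^{\ast})^2 P_{\max}/N_r)$, which is the upper bound in (\ref{DF_RK_rate_defn}). By Theorem~\ref{Th_GMARC_DF_1}, the sum-rate actually attained at a parameter choice equals this minimum exactly when the polymatroid intersection is active, i.e.\ when (\ref{NI_Case_Cond_Sum}) fails for all nonempty $\mathcal{A}\subset\mathcal{K}$; since at any non-max-min point the minimum is strictly below $I^{\ast}$, the bound is met with equality precisely when the max-min set $\mathcal{P}$ contains an active rule, which is the defining property of the \emph{active} class.

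The main obstacle is proving that every symmetric channel is active. Here I would exploit symmetry to take $\alpha_k^{\ast}=\alpha^{\ast}$ for all $k$, so that (\ref{DGDF_betastar}) forces $\beta_k^{\ast}=1/K$; feasibility ($\sum_k\beta_k^{\ast}=1$) holds, and the existence of an $\alpha^{\ast}<1$ follows because in this regime $q^{\ast}=x^{\ast}>0$ (Remark~\ref{DGDF_RemTh2}) with $\alpha^{\ast}=1-q^{\ast}/K$ and $x^{\ast}\le\sqrt{K}$ by (\ref{DGOB_x_range}). The key computation is that, with these parameters and $|\mathcal{S}|=m$, each term in the argument of $C$ in both $I_{r,\mathcal{S}}$ and $I_{d,\mathcal{S}}$ is exactly $m/K$ times the corresponding term for $\mathcal{S}=\mathcal{K}$ (the $\sum P_k$, the $(1-\sum_{\mathcal{S}^{c}}\beta_k)P_r$, and the coherent-combining terms all scale by $m/K$). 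As $C$ is concave with $C(0)=0$, we have $C(\tfrac{m}{K}\xi)\ge\tfrac{m}{K}C(\xi)$, so the equal-rate point $R_k=I^{\ast}/K$ obeys $R_{\mathcal{S}}\le I_{r,\mathcal{S}}$ and $R_{\mathcal{S}}\le I_{d,\mathcal{S}}$ for every $\mathcal{S}$ while meeting $\sum_k R_k=I_{r,\mathcal{K}}=I_{d,\mathcal{K}}=I^{\ast}$. This point lies in the intersection on both sum-rate faces, certifying an active case; hence $R_{\mathcal{K}}=I^{\ast}$, and since $I^{\ast}$ equals the cutset bound $B^{\ast}$ of Theorem~\ref{DGOB_Th3} (Remark~\ref{DGDF_RemTh2}), DF attains the sum-capacity of the symmetric degraded Gaussian MARC.
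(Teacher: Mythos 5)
Your proposal is correct and follows the same skeleton as the paper's proof: evaluate at the base point $(\underline{1},\underline{0})$, use monotonicity in $\underline{\alpha}_{\mathcal{K}}$ plus the degradedness expansion to get $\mathcal{R}_{DF}=\mathcal{R}_r(\underline{1})=\mathcal{R}_r^{ob}(\underline{0})=\mathcal{R}_{OB}$ under condition 1, and under condition 2 combine Lemma~\ref{Lemma_PolyIntersect}, Theorem~\ref{Th_GMARC_DF_1}, and Theorem~\ref{Th_GMARC_DF_2} to get $R_{\mathcal{K}}\le I^{\ast}$ with equality iff $\mathcal{P}_a\ne\emptyset$. You depart from the paper in two sub-arguments, both defensibly. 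First, for the symmetric class the paper merely asserts that ``from symmetry, no subset of users achieves a larger rate at one receiver than any other subset,'' whereas you give an explicit certificate: with $\alpha_k^{\ast}=\alpha^{\ast}$, $\beta_k^{\ast}=1/K$, each SNR argument in $I_{r,\mathcal{S}}$ and $I_{d,\mathcal{S}}$ scales exactly by $m/K$, so concavity of $C$ with $C(0)=0$ gives $I_{r,\mathcal{S}},I_{d,\mathcal{S}}\ge \tfrac{m}{K}I^{\ast}$; hence the equal-rate tuple $R_k=I^{\ast}/K$ lies in both polymatroids on both sum-rate faces, which is equivalent to the failure of (\ref{NI_Case_Cond_Sum}) for every $\mathcal{A}$ and so certifies the active case. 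This is strictly more rigorous than the paper's symmetry appeal. Second, for the general active class you close the sum-capacity claim through the chain (sum-capacity $\le R_{\mathcal{K}}^{ob}\le B^{\ast}=I^{\ast}$, via Theorem~\ref{DGOB_Th3} and Remark~\ref{DGDF_RemTh2}), which bypasses the paper's Appendix~\ref{DGA6_ActPrf} argument that each DF max-min rule in $\mathcal{P}_a$ induces, through $\gamma_k=(1-\alpha_k^{\ast})\beta_k^{\ast}$, an active case for the outer bounds; your shortcut is logically sufficient for the sum-capacity statement (the appendix argument additionally establishes exact tightness of the outer-bound max-min analysis), but you should state that chain explicitly for the general active class rather than only in the symmetric paragraph. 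One small slip: the symmetric max-min rule satisfies $\alpha^{\ast}=1-\left(q^{\ast}\right)^{2}/K$ (see (\ref{DGDF_alp_sym})), not $\alpha^{\ast}=1-q^{\ast}/K$; the feasibility conclusion $\alpha^{\ast}\in[0,1)$ still follows since $0<\left(q^{\ast}\right)^{2}\le K$.
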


\begin{proof}
Let $\underline{\alpha}_{\mathcal{K}}=\underline{1}$ and $\underline{\beta
}_{\mathcal{K}}=\underline{0}$. From (\ref{GMARC_DF_IrG}) and
(\ref{GMARC_DF_IdG}), we see that $I_{\mathcal{S},r}$ and $I_{\mathcal{S},d}$
are monotonically increasing and decreasing functions of $\underline{\alpha
}_{\mathcal{K}}$, respectively, for a fixed $\underline{\beta}_{\mathcal{K}}$,
i.e., for any $\underline{\alpha}_{\mathcal{K}}^{(1)}$ and $\underline{\alpha
}_{\mathcal{K}}^{(2)}$ satisfying (\ref{alp_beta_bounds}), with entries
$\alpha_{k}^{(1)}\leq\alpha_{k}^{(2)}$ for all $k\in\mathcal{K}$,
$\mathcal{R}_{r}(\underline{\alpha}_{\mathcal{K}}^{(1)})\subseteq
\mathcal{R}_{r}(\underline{\alpha}_{\mathcal{K}}^{(2)})$ and $\mathcal{R}%
_{d}(\underline{\alpha}_{\mathcal{K}}^{(1)},\underline{\beta}_{\mathcal{K}%
})\supseteq\mathcal{R}_{d}(\underline{\alpha}_{\mathcal{K}}^{(2)}%
,\underline{\beta}_{\mathcal{K}})$. Thus, $\mathcal{R}_{r}(\underline{\alpha
}_{\mathcal{K}})$ achieves its largest region for $\underline{\alpha
}_{\mathcal{K}}=\underline{1}$. The bounds $I_{r,\mathcal{S}}$ and
$I_{d,\mathcal{S}}$ can be expanded for this case using (\ref{GMARC_DF_IrG})
and (\ref{GMARC_DF_IdG}), respectively, as%
\begin{align}
I_{r,\mathcal{S}}  &  =C\left(  \frac{\sum_{k\in\mathcal{S}}P_{k}}{N_{r}%
}\right) \label{DF_Case1_IrS}\\
I_{d,\mathcal{S}}  &  =C\left(  \frac{\sum_{k\in\mathcal{S}}P_{k}}{N_{d}%
}+\frac{P_{r}}{N_{d}}\right)  . \label{DF_Case1_IdS}%
\end{align}
The resulting sum-rate satisfies one of two conditions and we enumerate them below.

\textit{Condition 1}: The first condition is $I_{r,\mathcal{K}}(\underline
{1})\leq I_{d,\mathcal{K}}(\underline{1},\underline{0})$. From
(\ref{DF_Case1_IrS}) and (\ref{DF_Case1_IdS}), this case requires%
\begin{equation}
\frac{\sum\limits_{k\in\mathcal{K}}P_{k}}{N_{r}}\leq\frac{\sum\limits_{k\in
\mathcal{K}}P_{k}}{N_{d}}+\frac{P_{r}}{N_{d}}. \label{DF_Case1_CondK}%
\end{equation}
Expanding (\ref{DF_Case1_CondK}), we have, for any $\mathcal{S\subset K}$,
\begin{align}
\frac{%
%TCIMACRO{\tsum \limits_{k\in\mathcal{S}}}%
%BeginExpansion
{\textstyle\sum\limits_{k\in\mathcal{S}}}
%EndExpansion
P_{k}}{N_{r}}  &  \leq\frac{%
%TCIMACRO{\tsum \limits_{k\in\mathcal{S}}}%
%BeginExpansion
{\textstyle\sum\limits_{k\in\mathcal{S}}}
%EndExpansion
P_{k}+P_{r}}{N_{d}}-\frac{%
%TCIMACRO{\tsum \limits_{k\in\mathcal{S}^{c}}}%
%BeginExpansion
{\textstyle\sum\limits_{k\in\mathcal{S}^{c}}}
%EndExpansion
P_{k}\left(  N_{d}-N_{r}\right)  }{N_{d}N_{r}}\nonumber\\
&  <\frac{%
%TCIMACRO{\tsum \limits_{k\in\mathcal{S}}}%
%BeginExpansion
{\textstyle\sum\limits_{k\in\mathcal{S}}}
%EndExpansion
P_{k}+P_{r}}{N_{d}} \label{DF_Case2_CondS}%
\end{align}
where (\ref{DF_Case2_CondS}) follows from (\ref{DGMARC_Noise_var}). Thus,
$I_{r,\mathcal{K}}(\underline{1})\leq I_{d,\mathcal{K}}(\underline
{1},\underline{0})$ implies that $I_{r,\mathcal{S}}(\underline{1}%
)<I_{d,\mathcal{S}}(\underline{1},\underline{0})$ for all $\mathcal{S}%
\subset\mathcal{K}$, i.e., $\mathcal{R}_{r}(\underline{1})\subset
\mathcal{R}_{d}(\underline{1})$ and thus, $\mathcal{R}_{DF}(\underline
{1})=\mathcal{R}_{r}(\underline{1})$. Further, since $\mathcal{R}%
_{r}(\underline{1})\cap\mathcal{R}_{d}(\underline{1},\underline{0}%
)=\mathcal{R}_{r}(\underline{1})$, the polymatroid intersection for this
condition belongs to the intersecting set. Finally, recall that we chose
$\underline{\beta}_{\mathcal{K}}=\underline{0}$. From (\ref{GMARC_DF_IrG}), we
see that the choice of $\underline{\beta}_{\mathcal{K}}$ does not affect
$\mathcal{R}_{r}$. Further, a non-zero $\underline{\beta}_{\mathcal{K}}$ does
not increase $I_{d,\mathcal{K}}$. However, it can decrease $I_{d,\mathcal{S}}$
for some or all $\mathcal{S}\subset\mathcal{K}$ as
\begin{equation}
I_{d,\mathcal{S}}\left(  \underline{1},\underline{\beta}_{\mathcal{K}}\right)
=C\left(  \frac{\left(
%TCIMACRO{\tsum \limits_{k\in\mathcal{S}}}%
%BeginExpansion
{\textstyle\sum\limits_{k\in\mathcal{S}}}
%EndExpansion
P_{k}\right)  +P_{r}\left(  1-%
%TCIMACRO{\tsum \limits_{k\in\mathcal{S}^{c}}}%
%BeginExpansion
{\textstyle\sum\limits_{k\in\mathcal{S}^{c}}}
%EndExpansion
\beta_{k}\right)  }{N_{d}}\right)  \leq I_{d,\mathcal{S}}\left(  \underline
{1},\underline{0}\right)
\end{equation}
thereby potentially decreasing $\mathcal{R}_{DF}\left(  \underline{1}\right)
$. Thus, for the condition in (\ref{DF_Case1_CondK}) and from Theorem
\ref{DGOB_Th2}, the $K$-user sum-capacity of a degraded G-MARC for this case
is
\begin{equation}
R_{\mathcal{K}}=I_{r,\mathcal{K}}\left(  \underline{1}\right)
=B_{r,\mathcal{K}}\left(  \underline{0}\right)  =C(%
%TCIMACRO{\tsum \nolimits_{k\in\mathcal{K}}}%
%BeginExpansion
{\textstyle\sum\nolimits_{k\in\mathcal{K}}}
%EndExpansion
P_{k}\left/  N_{r}\right.  ).
\end{equation}
The max-min rule for this condition is $(\underline{\alpha}_{\mathcal{K}%
}^{\ast},\underline{\beta}_{\mathcal{K}}^{\ast})=(\underline{1},\underline
{0})$. Finally, from \textit{condition} \textit{1} in Theorem \ref{DGOB_Th3}
for a class of degraded Gaussian\ MARCs where the source and relay powers
satisfy (\ref{DF_Case1_CondK}), DF achieves the capacity region since
\begin{equation}
\mathcal{R}_{DF}=\mathcal{R}_{r}\left(  \underline{1}\right)  =\mathcal{R}%
_{r}^{ob}(\underline{0}).
\end{equation}

\textit{Condition 2}: The second condition requires $I_{\mathcal{K}%
,r}(\underline{1})>I_{\mathcal{K},d}(\underline{1},\underline{0}),$ i.e.,
\begin{equation}
\frac{\sum\limits_{k\in\mathcal{K}}P_{k}}{N_{r}}>\frac{\sum\limits_{k\in
\mathcal{K}}P_{k}}{N_{d}}+\frac{P_{r}}{N_{d}}.\label{DF_Case2_CondK}%
\end{equation}
Unlike condition 1, one cannot show here that $I_{\mathcal{S},r}%
>I_{\mathcal{S},d}$ for all $\mathcal{S}\subset\mathcal{K}$ or vice-versa.
Thus, from Theorem \ref{Th_GMARC_DF_1}, the intersection of $\mathcal{R}%
_{r}\left(  \underline{1}\right)  $ and $\mathcal{R}_{d}\left(  \underline
{1},\underline{0}\right)  $ can result in either an active or an inactive
case. From (\ref{DF_Rsum_cond12}) in Theorem \ref{Th_GMARC_DF_1}, we then
have
\begin{equation}
R_{\mathcal{K}}\leq\min\left\{  I_{r,\mathcal{K}}\left(  \underline{1}\right)
,I_{d,\mathcal{K}}\left(  \underline{1},\underline{0}\right)  \right\}
=I_{d,\mathcal{K}}\left(  \underline{1},\underline{0}\right)
\end{equation}
with equality for the active case. Note that from symmetry an active case
results for the symmetric G-MARC. However, the bound on the sum-rate, and
thus, the sum-rate too, can be increased using the fact that $I_{r,\mathcal{K}%
}$ and $I_{d,\mathcal{K}}$ are monotonically increasing and decreasing
functions of $\underline{\alpha}_{\mathcal{K}}$, respectively. In fact, from
(\ref{GMARC_DF_IrG}) and (\ref{GMARC_DF_IdG}), we see that reducing some or
all of the entries of $\underline{\alpha}_{\mathcal{K}}$ from their maximum
value of $1$ reduces $I_{r,\mathcal{K}}$ and either reduces or keeps unchanged
some or all $I_{r,\mathcal{S}}$ while increasing $I_{d,\mathcal{K}}$. Further,
since $I_{r,\mathcal{S}}(\underline{0})=0$ for all $\mathcal{S}\subseteq
\mathcal{K}$, one can shrink the region $\mathcal{R}_{r}$ just sufficiently to
ensure that there exists some $\underline{\alpha}_{\mathcal{K}}^{\ast}$ and
$\underline{\beta}_{\mathcal{K}}^{\ast}$ such that $I_{r,\mathcal{K}%
}(\underline{\alpha}_{\mathcal{K}}^{\ast})=I_{d,\mathcal{K}}(\underline
{\alpha}_{\mathcal{K}}^{\ast},\underline{\beta}_{\mathcal{K}}^{\ast})$. From
Theorem \ref{Th_GMARC_DF_2} $I_{r,\mathcal{K}}=I_{d,\mathcal{K}}$ is maximized
by a set $\mathcal{P}$ of $(\underline{\alpha}_{\mathcal{K}}^{\ast}%
,\underline{\beta}_{\mathcal{K}}^{\ast})$ where $\underline{\alpha
}_{\mathcal{K}}^{\ast}$ and $\underline{\beta}_{\mathcal{K}}^{\ast}$ satisfy
(\ref{DGDF_alphastar}) and (\ref{DGDF_betastar}), respectively. Evaluating
$I_{d,\mathcal{S}}$ at a max-min rule $(\underline{\alpha}_{\mathcal{K}}%
^{\ast},\underline{\beta}_{\mathcal{K}}^{\ast})$, we have
\begin{equation}
I_{d,\mathcal{S}}=C\left(  \frac{%
%TCIMACRO{\tsum \limits_{k\in\mathcal{S}}}%
%BeginExpansion
{\textstyle\sum\limits_{k\in\mathcal{S}}}
%EndExpansion
P_{k}}{N_{d}}+\frac{\sum\nolimits_{k\in\mathcal{S}}\left(  1-\alpha_{k}^{\ast
}\right)  P_{k}P_{r}}{N_{d}\left(  q_{\mathcal{K}}^{\ast}\right)  ^{2}}%
+2\sqrt{%
%TCIMACRO{\tsum \limits_{k\in\mathcal{S}}}%
%BeginExpansion
{\textstyle\sum\limits_{k\in\mathcal{S}}}
%EndExpansion
\frac{\left(  1-\alpha_{k}^{\ast}\right)  P_{k}P_{r}}{N_{d}^{2}}}\right)
.\label{DF_Case2_IdS}%
\end{equation}
For $\underline{\alpha}_{\mathcal{K}}^{\ast}\not =\underline{1}$, since
$I_{r,\mathcal{S}}$, for all $\mathcal{S}$, is a monotonically decreasing
function of $\underline{\alpha}_{\mathcal{K}}$ we have $\mathcal{R}_{r}\left(
\underline{\alpha}_{\mathcal{K}}^{\ast}\right)  \subset\mathcal{R}_{r}\left(
\underline{1}\right)  $. On the other hand, comparing (\ref{DF_Case1_IdS}) and
(\ref{DF_Case2_IdS}) one cannot in general show that $\mathcal{R}_{d}\left(
\underline{\alpha}_{\mathcal{K}}^{\ast},\underline{\beta}_{\mathcal{K}}^{\ast
}\right)  \supseteq\mathcal{R}_{d}\left(  \underline{1},\underline{0}\right)
$. In fact, the $\underline{\alpha}_{\mathcal{K}}^{\ast}$ chosen will
determine the relationship between $I_{d,\mathcal{S}}(\underline{\alpha
}_{\mathcal{K}}^{\ast},\underline{\beta}_{\mathcal{K}}^{\ast})$ and
$I_{d,\mathcal{S}}(\underline{1},\underline{0})$ for any $\mathcal{S}$. Thus,
for any $(\underline{\alpha}_{\mathcal{K}}^{\ast},\underline{\beta
}_{\mathcal{K}}^{\ast})$ that equalizes $I_{r,\mathcal{K}}$ and
$I_{d,\mathcal{K}}$, the polytope $\mathcal{R}_{r}\cap\mathcal{R}_{d}$ belongs
to either the set of active or inactive cases. Let $\mathcal{P}_{a}%
\subseteq\mathcal{P}$ denote the set of $(\underline{\alpha}_{\mathcal{K}%
}^{\ast},\underline{\beta}_{\mathcal{K}}^{\ast})$ that result in active cases.
From Theorem \ref{Th_GMARC_DF_1}, we can write the maximum $K$-user DF
sum-rate when $I_{r,\mathcal{K}}(\underline{1})>I_{d,\mathcal{K}}%
(\underline{1},\underline{0})$ as
\begin{equation}
R_{\mathcal{K}}=\left\{
\begin{array}
[c]{ll}%
I_{d,\mathcal{K}}(\underline{\alpha}_{\mathcal{K}}^{\ast},\underline{\beta
}_{\mathcal{K}}^{\ast})=I_{r,\mathcal{K}}(\underline{\alpha}_{\mathcal{K}%
}^{\ast})=I^{\ast}, & (\underline{\alpha}_{\mathcal{K}}^{\ast},\underline
{\beta}_{\mathcal{K}}^{\ast})\in\mathcal{P}_{a}\not =\emptyset\\
\max\limits_{(\underline{\alpha}_{\mathcal{K}}^{\ast},\underline{\beta
}_{\mathcal{K}}^{\ast})\in\mathcal{P}}I_{d,\mathcal{A}}(\underline{\alpha
}_{\mathcal{K}}^{\ast},\underline{\beta}_{\mathcal{K}}^{\ast}%
)+I_{r,\mathcal{A}^{c}}(\underline{\alpha}_{\mathcal{K}}^{\ast})<I^{\ast}, &
\mathcal{P}_{a}=\emptyset
\end{array}
\right.  \label{DF_RKmax}%
\end{equation}
where $I^{\ast}$ is given by (\ref{DGDF_RKmax}) in Theorem \ref{Th_GMARC_DF_2}%
. Finally, as shown in remark \ref{DGDF_RemTh2}, $I^{\ast}=B^{\ast}$ where
$B^{\ast}$ is the maximum outer bound sum-rate. \newline\qquad We now show
that for class of symmetric G-MARC channels, when the condition in
(\ref{DF_Case2_CondK}) holds, we achieve the $K$-user sum-capacity. For this
class, since $P_{k}=P$, from symmetry, $I_{d,\mathcal{K}}=I_{r,\mathcal{K}}$
in (\ref{GMARC_DF_IdG}) can be maximized by choosing $\alpha_{k}^{\ast}%
=\alpha^{\ast}$ for all $k$ in (\ref{DGDF_alphastar}) such that
\begin{equation}
\left(  1-\alpha^{\ast}\right)  =\left(  q^{\ast}\right)  ^{2}%
/K.\label{DGDF_alp_sym}%
\end{equation}
From (\ref{DGDF_alphastar}), since $0<\left(  q^{\ast}\right)  ^{2}<\sum
_{k=1}^{K}\lambda_{k}=K$, there exists an $0<\alpha^{\ast}<1$ that achieves
$I^{\ast}$ in (\ref{DF_RKmax}). Further, from symmetry, no subset of users
achieves a larger rate at one of the receiver than any other subset, i.e., for
$\alpha_{k}^{\ast}=\alpha^{\ast}$ and $\beta_{k}=1/K$, for all $k$,
$\mathcal{R}_{r}\cap\mathcal{R}_{d}$ belongs to the set of active cases and
the maximum $K$-user sum-rate for this class is $I^{\ast}=B^{\ast}$. Recall
that for the outer bound in Theorem \ref{DGOB_Th3}, we need to prove that
$\underline{\gamma}_{\mathcal{K}}^{\ast}\in\Gamma_{OB}$ where $\underline
{\gamma}_{\mathcal{K}}^{\ast}$ has entries $\gamma^{\ast}$ given by
(\ref{DGOB_gam_sym}) for all $k$. From (\ref{DGOB_TOB}) and
(\ref{alp_beta_bounds}), we can write
\begin{equation}%
\begin{array}
[c]{cc}%
\gamma_{k}=\left(  1-\alpha_{k}\right)  \beta_{k} & \text{where }%
(\underline{\alpha}_{\mathcal{K}},\underline{\beta}_{\mathcal{K}})\in
\Gamma\text{.}%
\end{array}
\end{equation}
. We then have
\begin{equation}
\sum_{k\in\mathcal{K}}\gamma_{k}=\sum_{k\in\mathcal{K}}\left(  1-\alpha
_{k}\right)  \beta_{k}<1\label{DGDF_gamsum}%
\end{equation}
where (\ref{DGDF_gamsum}) follows from (\ref{alp_beta_bounds}) and the fact
that $\left(  1-\alpha_{k}\right)  \beta_{k}<\beta_{k}$ for all $(\underline
{\alpha}_{\mathcal{K}},\underline{\beta}_{\mathcal{K}})\in\Gamma$. For the
symmetric case, this implies that there exists a $\gamma^{\ast}=\left(
1-\alpha^{\ast}\right)  /K$ satisfying (\ref{DGDF_gamsum}). In fact, for
$\alpha^{\ast}$ in (\ref{DGDF_alp_sym}), we obtain $\gamma^{\ast}=\left(
q^{\ast}\right)  ^{2}/K^{2}=\left(  x^{\ast}\right)  ^{2}/K^{2}<1$, i.e., the
symmetric $\gamma^{\ast}$ in (\ref{DGOB_gam_sym}) is feasible and results in
an active case. Since an active case achieves the same maximum sum-rate for
both the inner and outer bound, we see that DF\ achieves the sum-capacity for
the class of symmetric Gaussian\ MARCs. \newline\qquad For the general case of
arbitrary $P_{k}$, from (\ref{DF_RKmax}) and (\ref{DGOB_Th3RKmax}) we see that
DF achieves the maximum $K$-user sum-rate outer bounds for an \textit{active}
class of degraded Gaussian\ MARCs for which $R_{r}(\underline{\alpha
}_{\mathcal{K}}^{\ast})\cap R_{d}(\underline{\alpha}_{\mathcal{K}}^{\ast
},\underline{\beta}_{\mathcal{K}}^{\ast})$ belongs to the set of active cases.
Further, DF achieves the same maximum value for all $(\underline{\alpha
}_{\mathcal{K}}^{\ast},\underline{\beta}_{\mathcal{K}}^{\ast})\in
\mathcal{P}_{a}\not =\emptyset$. In Appendix \ref{DGA6_ActPrf}, we show that
for the same choice of the $K$ source-relay correlation coefficients for both
the inner and outer bounds, the outer cutset bounds are at least as large as
the inner DF bounds for all $\mathcal{S}\subseteq\mathcal{K}$. This implies
that for every $(\underline{\alpha}_{\mathcal{K}}^{\ast},\underline{\beta
}_{\mathcal{K}}^{\ast})\in\mathcal{P}_{a}$, there exists a $\underline{\gamma
}_{\mathcal{K}}^{\ast}$ with entries
\begin{equation}%
\begin{array}
[c]{cc}%
\gamma_{k}^{\ast}=\left(  1-\alpha_{k}^{\ast}\right)  \beta_{k}^{\ast} &
\text{for all }k
\end{array}
\label{DGDF_gammastar}%
\end{equation}
that results in an active case for the outer bounds, i.e., DF\ achieves the
sum-capacity for the active class. Note that the outer bounds may also be
maximized by other $(\underline{\alpha}_{\mathcal{K}},\underline{\beta
}_{\mathcal{K}})$ that do not maximize the $K$-user DF\ sum-rate.\newline%
\qquad Finally, as with the outer bounds, the optimization in (\ref{DF_RKmax})
for $\mathcal{P}_{a}=\emptyset$ is not straightforward. Further, comparing the
DF and cutset bounds in (\ref{DF_RKmax}) and (\ref{DGOB_Th3RKmax}),
respectively for the inactive cases, we see that the expression for the outer
bounds involves time-sharing and can in general be larger than the DF\ bound.  
\end{proof}

It is straightforward to find numerical examples for condition $1$ in Theorem
\ref{DGDF_Th3} where DF achieves the capacity region. We focus on condition
$2$ and present two examples where DF achieves the sum-capacity of a two-user
degraded Gaussian MARCs, with $\mathcal{P}_{a}=\mathcal{P}$ for one and
$\mathcal{P}_{a}\subset\mathcal{P}$ for the other. 

\begin{example}
Consider a two-user degraded Gaussian MARC with $P_{1}/N_{r}=6$, $P_{2}%
/N_{r}=4$, $P_{1}/N_{d}=3$, $P_{2}/N_{d}=2$, and $P_{r}/N_{d}=2$. These SNR
values satisfy the condition $2$ given by (\ref{DF_Case2_CondK}) in Theorem
\ref{DGDF_Th3} and thus, the DF\ sum-rate is maximized by a set of
$(\underline{\alpha}_{\mathcal{K}}^{\ast},\underline{\beta}_{\mathcal{K}%
}^{\ast})$ where $\underline{\alpha}_{\mathcal{K}}^{\ast}$ satisfies
\begin{equation}
\left(  1-\alpha_{1}^{\ast}\right)  +\frac{2}{3}\left(  1-\alpha_{2}^{\ast
}\right)  =\left(  q^{\ast}\right)  ^{2}=0.408,\label{DFEg_qstar}%
\end{equation}
and for every choice of $\underline{\alpha}_{\mathcal{K}}^{\ast}$ satisfying
(\ref{DFEg_qstar}), $\underline{\beta}_{\mathcal{K}}^{\ast}$ is given by
(\ref{DGDF_betastar}). The set of feasible $\underline{\alpha}_{\mathcal{K}%
}^{\ast}$ has entries $\alpha_{1}^{\ast}\in(0.83,1]$ with $\alpha_{2}^{\ast}$
for each such $\alpha_{1}^{\ast}$ satisfying (\ref{DFEg_qstar}) such that
$\alpha_{2}^{\ast}\in(0.75,1]$. For these SNR parameters, the set
$\mathcal{P}_{a}=\mathcal{P}$ and for each $(\underline{\alpha}_{\mathcal{K}%
}^{\ast},\underline{\beta}_{\mathcal{K}}^{\ast})\in\mathcal{P}$, the
correlation values $\gamma_{k}^{\ast}=\left(  1-\alpha_{k}^{\ast}\right)
\beta_{k}^{\ast}$, for all $k=1,2$. result in the vector $\underline{\gamma
}_{\mathcal{K}}^{\ast}\in\mathcal{G}_{a}$. 
\end{example}

\begin{example}
We next consider a two-user example with $P_{1}/N_{r}=6$, $P_{2}/N_{r}=0.4$,
$P_{1}/N_{d}=3$, $P_{2}/N_{d}=0.2$, and $P_{r}/N_{d}=2$. These SNR values also
satisfy the condition $2$ given by (\ref{DF_Case2_CondK}) in Theorem
\ref{DGDF_Th3} and thus, the DF\ sum-rate is maximized by a set of
$(\underline{\alpha}_{\mathcal{K}}^{\ast},\underline{\beta}_{\mathcal{K}%
}^{\ast})$ where $\underline{\alpha}_{\mathcal{K}}^{\ast}$ satisfies
\begin{equation}
\left(  1-\alpha_{1}^{\ast}\right)  +\frac{2}{3}\left(  1-\alpha_{2}^{\ast
}\right)  =\left(  q^{\ast}\right)  ^{2}=0.197.\label{DFE2_qstar}%
\end{equation}
The set of feasible $\underline{\alpha}_{\mathcal{K}}^{\ast}$ has entries
$\alpha_{1}^{\ast}\in(0.96,1]$ with $\alpha_{2}^{\ast}$ for each such
$\alpha_{1}^{\ast}$ satisfying (\ref{DFE2_qstar}) such that $\alpha_{2}^{\ast
}\in(0.416,1]$. Note that subject to (\ref{DFE2_qstar}), $\alpha_{2}$
decreases as $\alpha_{1}$ increases and vice-versa. For these SNR parameters,
the set $\mathcal{P}_{a}$ consists of $(\underline{\alpha}_{\mathcal{K}}%
^{\ast},\underline{\beta}_{\mathcal{K}}^{\ast})$ where the entries $\alpha
_{1}^{\ast}$ and $\alpha_{2}^{\ast}$ are restricted to the sets
$(0.961,0.979]$ and $(0.731,1]$, respectively. The remaining values for
$\alpha_{1}^{\ast}$ and $\alpha_{2}^{\ast}$ satisfying (\ref{DFE2_qstar})
result in a polymatroid intersection that belongs to the set of inactive
cases. In fact, all such values result in the inactive case $2$ illustrated in
Fig. \ref{Fig_AllCases} for $K=2.$ 
\end{example}

Finally, for the two-user degraded Gaussian MARC, a numerical example
illustrating $\mathcal{P}_{a}=\emptyset$ does not appear straightforward
despite using a wide range of ratios of $P_{1}$ to $P_{2}$, i.e., not all
rate-maximizing intersections are such that one of the sources achieve better
rates at one of the receivers while the other source achieves a better rate at
the other receiver. A possible reason for this is because, at any receiver,
the noise seen by both sources is the same, and thus, the source with smaller
power typically achieves smaller rates at both receivers. It may be possible
to increase the rate achieved at the destination by increasing the relay
power; however, large values of relay power will result in the bottle-neck
case where condition 1 in Theorem \ref{DGDF_Th3} holds. Thus, it appears that
it may always be possible to find an active case, particularly, one that
maximizes the sum-rate. If this is true for any arbitrary $K$, then DF
achieves the sum-capacity of the degraded Gaussian\ MARC.

\begin{remark}
In the above analysis, we determined the sum-capacity for a degraded
Gaussian\ MARC under a per symbol transmit power constraint at the sources and
relay. One can also consider an average power constraint at every transmitter.
The achievable strategy remains unchanged; for the converse we start with the
convex sums of the outer bounds in (\ref{MARC_OB_cutset}) over $n$ channel
uses. In the $i^{th}$ channel use, the bounds at the relay and destination are
given by $B_{r,\mathcal{S}}$ and $B_{d,\mathcal{S}}$ in (\ref{Con_final_B1S})
and (\ref{Con_final_B2S}), respectively, for all $\mathcal{S}$, except now the
correlation parameters and power parameters are indexed by $i$. Recall that
$B_{d,\mathcal{S}}$ is a concave function of the correlation coefficients and
power. On the other hand, $B_{r,\mathcal{S}}$ for all $\mathcal{S}%
\subset\mathcal{K}$ is not a concave function of the power and
cross-correlation parameters. However, we can use the concavity of
$B_{r,\mathcal{K}}$ to show that the maximum bounds on the sum-rate in Thereom
\ref{DGOB_Th1} remain unchanged. This in conjunction with the steps in Theorem
\ref{DGOB_Th3}, lead to the same sum-capacity results. Finally, we note that
as with the symbol power constraint, here too we require time-sharing to
develop the outer bound rate region.
\end{remark}

\section{\label{DG_Sec6}Concluding Remarks}

In this paper, we have studied the sum-capacity of degraded Gaussian MARCs. In
particular, we have developed the rate regions for the achievable strategy of
DF and the cutset outer bounds. The outer bounds have been obtained using
cut-set bounds for the case of independent sources and have been shown to be
maximized by Gaussian signaling at the sources and relay.

We have also shown that, in general, the rate regions achieved by the inner
and outer bounds are not the same. This difference is due to the fact that the
input distributions and the rate expressions for the inner and outer bounds
are not exactly the same. In fact, the input distribution for the inner bound
uses auxiliary random variables to model the correlation between the inputs at
the sources and the relay and is more restrictive than the distribution for
the outer bound. Despite these differences, in both cases the input
distributions can be quantified by a set of $K$ source-relay cross-correlation
coefficients. Further, in both cases, we have shown that the rate region for
every choice of the appropriate input distribution is an intersection of
polymatroids. We have used the properties of polymatroid intersections to show
that for both the inner and outer bounds the largest $K$-user sum-rate is at
most the maximum of the minimum of the two $K$-user sum rate bounds, with
equality only when the polymatroid intersections belongs to the set of active
cases in which the $K$-user sum rate planes are active.

For both DF and the outer bounds, we have shown that the largest $K$-user
sum-rate can be determined using max-min optimization techniques. In fact, we
have shown that for both the inner and outer bounds the max-min optimization
problem results in one of two unique solutions. The first solution results
when the largest sum-rate from the $K$ sources to the relay is the bottle-neck
rate and for this case, we have shown that DF achieves the capacity region. We
have further shown that the sum-rate maximizing polymatroid intersection for
this case belongs to the set of active cases. Specifically, the sum-capacity
as well as the entire capacity region is achieved by a max-min rule where the
sources and the relay do not allocate any power to cooperatively achieving
coherent combining gains at the destination, i.e., the auxiliary random
variables $V_{k}=0$, for all $k$. Thus, under Gaussian signaling, the capacity
region is achieved by DF\ because the inner and outer bounds at the relay, for
$V_{\mathcal{K}}=0$, are $I(X_{\mathcal{S}};Y_{r}|X_{r}X_{\mathcal{S}^{c}%
})=I(X_{\mathcal{S}};Y_{r}|X_{r}X_{\mathcal{S}^{c}}V_{\mathcal{K}})$ for all
$\mathcal{S}\subseteq\mathcal{K}$ (see (\ref{MARC_OB_cutset}) and
(\ref{Prop_DF_rateregion})).

The second solution results when the largest sum-rates at the relay and the
destination are equal. For this case, we have shown that DF achieves the
sum-capacity for a class of active degraded Gaussian MARCs in which the
sum-rate maximizing polymatroid intersection belongs to the set of active
cases. We have also shown that this class of active degraded Gaussian MARCs
contains the class of symmetric Gaussian\ MARCs. In general, for this class,
we have shown that the max-min DF rule is such that $V_{k}\not =0$ for all
$k$, i.e., a non-empty subset of sources and the relay divide their transmit
power to achieve cooperative combining gains at the destination. We have also
shown that the largest DF sum-rate is achieved by a relay power policy that
maximizes the cooperative gains achieved at the destination, i.e., $X_{r}$ is
a unique weighted sum of $V_{k}$ for all $k$ where the weight for each source
$k$ is proportional to the power allocated by source $k$ to cooperating with
the relay. Our analysis has also shown that the maximum sum-rate admits
several solutions for the power fractions allocated at the sources for
cooperation subject to a constraint that results from the equating the two
bounds on the sum-rate. For the outer bounds, we have shown that the $K$-user
sum-rate outer bound is maximized by a set of cross-correlation coefficients
that are subject to the same constraint as DF and the maximum sum-rate is the
same as that for DF. Furthermore, for the class of active degraded
Gaussian\ MARCs, we have shown that the set of DF max-min rules $(\underline
{\alpha}_{\mathcal{K}}^{\ast},\underline{\beta}_{\mathcal{K}}^{\ast})$ also
maximizes the outer bounds by using the fact that the inner and outer bound
coefficients can be related as $\gamma_{k}=\left(  1-\alpha_{k}\right)
\beta_{k}$, for all $k$. Finally, since a DF max-min rule requires a unique
correlation between $X_{r}$ and $V_{\mathcal{K}}$, conditioning the outer
bound that uses $Y_{r}$ on $X_{r}$ alone suffices to obtain the sum-capacity.

\section{Acknowledgments}

L. Sankar is grateful for numerous detailed discussions on the MARC with
Gerhard Kramer of Bell Labs, Alcatel-Lucent and on polymatroid intersections
with Jan Vondrack of Princeton University.

\pagebreak%

%TCIMACRO{\TeXButton{appendices}{\appendices}}%
%BeginExpansion
\appendices
%EndExpansion
{}

\section{\label{DG_App0_OBProof}Outer Bounds: Proof}

We now develop the proof for Theorem \ref{DGOB_Th0}. Recall that we write
$B_{r,\mathcal{S}}$ and $B_{d,\mathcal{S}}$ to denote, respectively, the first
and second bound on $R_{\mathcal{S}}$ in (\ref{DGMARC_OB_1}) for a constant
$U$. Expanding the bounds on $R_{\mathcal{S}}$ in (\ref{DGMARC_OB_1}) for a
constant $U$, we have%
\begin{equation}
R_{\mathcal{S}}\leq\min\left\{  h(Y_{r}|X_{r}X_{\mathcal{S}^{c}}%
)-h(Z_{r}),h(Y_{d}|X_{\mathcal{S}^{c}})-h(Z_{d})\right\}  .
\end{equation}
For a fixed covariance matrix of the input random variables $X_{\mathcal{K}}$
and $X_{r}$, one can apply a conditional entropy maximization theorem
\cite[Lemma 1]{cap_theorems:JAT01} to show that $h(Y_{r}|X_{r}X_{\mathcal{S}%
^{c}})$ and $h(Y_{d}|X_{\mathcal{S}^{c}})$ are maximized by choosing the
distribution in (\ref{GMARC_converse_inpdist}) as jointly Gaussian. Consider
the bound $B_{r,\mathcal{S}}$. Expanding $Y_{r}$, we have%
\begin{equation}
R_{\mathcal{S}}\leq C\left(  \frac{E\left[  var\left(
%TCIMACRO{\tsum \nolimits_{k\in\mathcal{S}}}%
%BeginExpansion
{\textstyle\sum\nolimits_{k\in\mathcal{S}}}
%EndExpansion
X_{k}|X_{r}X_{\mathcal{S}^{c}}\right)  \right]  }{N_{r}}\right)
.\label{OB1_simple}%
\end{equation}
For Gaussian signals, using the chain rule, we have%
\begin{equation}
E\left[  var\left(
%TCIMACRO{\tsum \limits_{k\in\mathcal{S}}}%
%BeginExpansion
{\textstyle\sum\limits_{k\in\mathcal{S}}}
%EndExpansion
X_{k}|X_{r}X_{\mathcal{S}^{c}}\right)  \right]  =\frac{\det(K_{\underline
{A}|\underline{C}})}{\det(K_{\underline{B}|\underline{C}})}%
\label{DGMARC_Evar_defn}%
\end{equation}
where
\begin{align}
\underline{A} &  =\left[
\begin{array}
[c]{cc}%
%TCIMACRO{\tsum \nolimits_{k\in\mathcal{S}}}%
%BeginExpansion
{\textstyle\sum\nolimits_{k\in\mathcal{S}}}
%EndExpansion
X_{k} & X_{r}%
\end{array}
\right]  ^{T}\\
\underline{B} &  =\left[  X_{r}\right]  \\
\underline{C} &  =\left[  X_{\mathcal{S}^{c}}\right]
\end{align}
and for random vectors $\underline{X}$ and $\underline{Y}$, the conditional
covariance $K_{\underline{X}|\underline{Y}}$ is%
\begin{equation}
K_{\underline{X}|\underline{Y}}=E\left[  \left(  \underline{X}-E\left[
\underline{X}|\underline{Y}\right]  \right)  \left(  \underline{X}-E\left[
\underline{X}|\underline{Y}\right]  \right)  ^{T}\right]
\label{Cond_exp_defn}%
\end{equation}
where $\underline{X}^{T}$ is the transpose of $\underline{X}$. We use the fact
that $X_{\mathcal{S}}$ and $X_{\mathcal{S}^{c}}$ are independent to expand
(\ref{DGMARC_Evar_defn}) as%
\begin{equation}
E\left[  var\left(
%TCIMACRO{\tsum \limits_{k\in\mathcal{S}}}%
%BeginExpansion
{\textstyle\sum\limits_{k\in\mathcal{S}}}
%EndExpansion
X_{k}|X_{r}X_{\mathcal{S}^{c}}\right)  \right]  =var\left(
%TCIMACRO{\tsum \limits_{k\in\mathcal{S}}}%
%BeginExpansion
{\textstyle\sum\limits_{k\in\mathcal{S}}}
%EndExpansion
X_{k}\right)  -\frac{E^{2}\left[
%TCIMACRO{\tsum \limits_{k\in\mathcal{S}}}%
%BeginExpansion
{\textstyle\sum\limits_{k\in\mathcal{S}}}
%EndExpansion
X_{k}\tilde{X}_{r,\mathcal{S}}\right]  }{P_{r,\mathcal{S}}}\label{Cond_varXk}%
\end{equation}
where $\tilde{X}_{r,\mathcal{S}}\overset{\vartriangle}{=}\left(  X_{r}\text{
}-\text{ }E(X_{r}\text{ }|\text{ }X_{\mathcal{S}^{c}})\right)  $ is a Gaussian
random variable with variance
\begin{equation}
P_{r,\mathcal{S}}=E\left[  \tilde{X}_{r,\mathcal{S}}^{2}\right]  =E\left[
var(X_{r}|X_{\mathcal{S}^{c}})\right]  .
\end{equation}
Substituting (\ref{Cond_varXk}) in (\ref{OB1_simple}) and using
(\ref{Pwr_cond}) to bound $var\left(  X_{k}\right)  $ for all $k$, we obtain,%
\begin{align}
R_{\mathcal{S}} &  \leq C\left(  \frac{%
%TCIMACRO{\tsum \limits_{k\in\mathcal{S}}}%
%BeginExpansion
{\textstyle\sum\limits_{k\in\mathcal{S}}}
%EndExpansion
var\left(  X_{k}\right)  -\frac{1}{P_{r,\mathcal{S}}}E^{2}\left[
%TCIMACRO{\tsum \limits_{k\in\mathcal{S}}}%
%BeginExpansion
{\textstyle\sum\limits_{k\in\mathcal{S}}}
%EndExpansion
X_{k}\tilde{X}_{r,\mathcal{S}}\right]  }{N_{r}}\right)  \label{Con_B1_1}\\
&  \leq C\left(  \frac{\left(
%TCIMACRO{\tsum \limits_{k\in\mathcal{S}}}%
%BeginExpansion
{\textstyle\sum\limits_{k\in\mathcal{S}}}
%EndExpansion
P_{k}\right)  -\frac{1}{P_{r,\mathcal{S}}}E^{2}\left[
%TCIMACRO{\tsum \limits_{k\in\mathcal{S}}}%
%BeginExpansion
{\textstyle\sum\limits_{k\in\mathcal{S}}}
%EndExpansion
X_{k}\tilde{X}_{r,\mathcal{S}}\right]  }{N_{r}}\right)  .
\end{align}
We define $\gamma_{k}$, for all $k\in\mathcal{K}$, by%
\begin{equation}
E\left[  X_{k}X_{r}\right]  \overset{\vartriangle}{=}\sqrt{\gamma_{k}%
P_{k}P_{r}}.\label{OB_gammak_def}%
\end{equation}
\qquad Note that by definition,
\begin{equation}%
\begin{array}
[c]{cc}%
\gamma_{k}\in\lbrack0,1] & \text{for all }k\in\mathcal{K}%
\end{array}
\label{OB_gammak_bounds}%
\end{equation}
and%
\begin{equation}
\sum_{k=1}^{K}\gamma_{k}\leq1.\label{OB_gammaK_sum}%
\end{equation}
Using the independence of $X_{k}$ for all $k\,$ and (\ref{OB_gammak_def}), we
write
\begin{equation}
E\left[
%TCIMACRO{\tsum \limits_{k\in\mathcal{S}}}%
%BeginExpansion
{\textstyle\sum\limits_{k\in\mathcal{S}}}
%EndExpansion
X_{k}\tilde{X}_{r}\right]  =%
%TCIMACRO{\tsum \limits_{k\in\mathcal{S}}}%
%BeginExpansion
{\textstyle\sum\limits_{k\in\mathcal{S}}}
%EndExpansion
E\left[  X_{k}X_{r}\right]  =%
%TCIMACRO{\tsum \limits_{k\in\mathcal{S}}}%
%BeginExpansion
{\textstyle\sum\limits_{k\in\mathcal{S}}}
%EndExpansion
\sqrt{\gamma_{k}P_{k}P_{r}}.\label{OB1_EXkXr}%
\end{equation}
Next we use (\ref{OB_gammak_def}) to evaluate $P_{r,\mathcal{S}}$. We start by
considering the random variable%
\begin{equation}
\hat{X}_{r}=X_{r}-E\left[  X_{r}|X_{K}\right]  .
\end{equation}
Using (\ref{OB_gammak_def}) and the independence of $X_{k}$ for all $k$, we
can write the variance of $\hat{X}_{r}$ as
\begin{align}
E\left[  \hat{X}_{r}^{2}\right]   &  =E\left[  var\left(  X_{r}|X_{K}\right)
\right]  \label{EvarXr_Vk}\\
&  =\left(  1-\gamma_{K}\right)  P_{r}.\label{EvarXr_Vk_final}%
\end{align}
where we used (\ref{Cond_exp_defn}) to simplify (\ref{EvarXr_Vk}) to
(\ref{EvarXr_Vk_final}). Continuing thus, we consider the random variable
$\bar{X}_{r}=\hat{X}_{r}-E\left[  \hat{X}_{r}|X_{K-1}\right]  $. Using the
independence of $X_{k}$ for all $k$, we thus have%
\begin{align}
E\left[  \bar{X}_{r}^{2}\right]   &  =E\left[  \hat{X}_{r}^{2}\right]
-E\left[  E^{2}\left[  \hat{X}_{r}|X_{K-1}\right]  \right]  \\
&  =E\left[  var\left(  X_{r}|X_{K-1}X_{K}\right)  \right]  \\
&  =\left(  1-\gamma_{K-1}-\gamma_{K}\right)  P_{r}.
\end{align}
Generalizing the above, we have
\begin{equation}%
\begin{array}
[c]{cc}%
E\left[  var\left(  X_{r}|X_{\mathcal{S}^{c}}\right)  \right]  =\left(  1-%
%TCIMACRO{\tsum \limits_{k\in\mathcal{S}^{c}}}%
%BeginExpansion
{\textstyle\sum\limits_{k\in\mathcal{S}^{c}}}
%EndExpansion
\gamma_{k}\right)  P_{r}\overset{\vartriangle}{=}\overline{\gamma
}_{\mathcal{S}^{c}}P_{r} & \text{for all }\mathcal{S}\subseteq\mathcal{K}%
\text{.}%
\end{array}
\label{Con_varXr_condVs}%
\end{equation}
Finally, we substitute (\ref{Con_varXr_condVs}) and (\ref{OB1_EXkXr}) in
(\ref{Con_B1_1}) to simplify the first bound as%
\begin{equation}
R_{\mathcal{S}}\leq\left\{
\begin{array}
[c]{ll}%
C\left(
%TCIMACRO{\tsum \limits_{k\in\mathcal{S}}}%
%BeginExpansion
{\textstyle\sum\limits_{k\in\mathcal{S}}}
%EndExpansion
\frac{P_{k}}{N_{r}}\right)  , & \text{if }%
%TCIMACRO{\tsum \limits_{k\in\mathcal{S}^{c}}}%
%BeginExpansion
{\textstyle\sum\limits_{k\in\mathcal{S}^{c}}}
%EndExpansion
\gamma_{k}=1\\
C\left(
%TCIMACRO{\tsum \limits_{k\in\mathcal{S}}}%
%BeginExpansion
{\textstyle\sum\limits_{k\in\mathcal{S}}}
%EndExpansion
\frac{P_{k}}{N_{r}}-\frac{\left(  \sum\limits_{k\in\mathcal{S}}\sqrt
{\gamma_{k}P_{k}}\right)  ^{2}}{N_{r}\overline{\gamma}_{\mathcal{S}^{c}}%
}\right)  , & \text{otherwise.}%
\end{array}
\right.  \label{DGA1_BrS}%
\end{equation}
Observe that for $K$ $=$ $1$, we have $V_{1}$ $=$ $X_{r}$ and $\gamma_{1}$ $=$
$1$, and thus, (\ref{Con_final_B1S}) simplifies to the first outer bound in
\cite[theorem 5]{cap_theorems:CEG01} for the classic single source degraded
relay channel. Finally, from (\ref{Con_varXr_condVs}), observe that
$\gamma_{k}$, for all $k$, satisfies%
\begin{equation}%
%TCIMACRO{\tsum \limits_{k\in\mathcal{K}}}%
%BeginExpansion
{\textstyle\sum\limits_{k\in\mathcal{K}}}
%EndExpansion
\gamma_{k}\leq1\text{.}\label{OB_gammak_constraint}%
\end{equation}
Consider the bound $B_{d,\mathcal{S}}$ in (\ref{DGMARC_OB_1}) with $U$ a
constant. Expanding $Y_{d}$ using (\ref{Yd_defn}), we have%
\begin{align}
R_{\mathcal{S}} &  \leq C\left(  \left.  E\left[  var\left(
%TCIMACRO{\tsum \limits_{k\in\mathcal{S}}}%
%BeginExpansion
{\textstyle\sum\limits_{k\in\mathcal{S}}}
%EndExpansion
X_{k}+X_{r}|X_{\mathcal{S}^{c}}\right)  \right]  \right/  N_{d}\right)  \\
&  =C\left(  \frac{%
%TCIMACRO{\tsum \limits_{k\in\mathcal{S}}}%
%BeginExpansion
{\textstyle\sum\limits_{k\in\mathcal{S}}}
%EndExpansion
\left(  P_{k}+2E\left(  X_{k}\tilde{X}_{r,\mathcal{S}}\right)  \right)
+E\left[  var(X_{r}|X_{\mathcal{S}^{c}})\right]  }{N_{d}}\right)
.\label{Con_B2_fin}%
\end{align}
Using (\ref{Pwr_cond}), (\ref{Con_varXr_condVs},) and (\ref{OB1_EXkXr}), we
simplify (\ref{Con_B2_fin}) as%
\begin{equation}
R_{\mathcal{S}}\leq C\left(  \frac{%
%TCIMACRO{\tsum \limits_{k\in\mathcal{S}}}%
%BeginExpansion
{\textstyle\sum\limits_{k\in\mathcal{S}}}
%EndExpansion
P_{k}+\overline{\gamma}_{\mathcal{S}^{c}}P_{r}+2%
%TCIMACRO{\tsum \limits_{k\in\mathcal{S}}}%
%BeginExpansion
{\textstyle\sum\limits_{k\in\mathcal{S}}}
%EndExpansion
\sqrt{\gamma_{k}P_{k}P_{r}}}{N_{d}}\right)  .\label{DGA1_BdS}%
\end{equation}
Writing $B_{r,\mathcal{S}}$ and $B_{d,\mathcal{S}}$ to denote the bounds on
the right-side of (\ref{DGA1_BrS}) and (\ref{DGA1_BdS}), respectively, we have
for a constant $U$,%
\begin{equation}%
\begin{array}
[c]{cc}%
R_{\mathcal{S}}\leq\min\left(  B_{r,\mathcal{S}},B_{d,\mathcal{S}}\right)   &
\text{for all }\mathcal{S}\subseteq\mathcal{K}.
\end{array}
\label{Con_finalbounds}%
\end{equation}

\section{\label{DG_App_PM}Inner and Outer Bounds:\ Polymatroids}

We first prove that the rate regions $\mathcal{R}_{r}^{ob}$ and $\mathcal{R}%
_{d}^{ob}$ given by the cutset bounds are polymatroids. Using similar
techniques, we then show that the DF regions $\mathcal{R}_{r}$ and
$\mathcal{R}_{d}$ are polymatroids.

\subsection{Outer Bounds}

Consider the set functions (see \ref{Prop_DF_rateregion})
\begin{equation}
f_{1}\left(  \mathcal{S}\right)  =\left\{
\begin{array}
[c]{ll}%
I\left(  X_{\mathcal{S}}X_{r};Y_{d}|X_{S^{c}}U\right)   & \mathcal{S}%
\subseteq\mathcal{K},\mathcal{S\not =\emptyset}\\
0 & \mathcal{S}=\emptyset
\end{array}
\right.
\end{equation}
and
\begin{equation}
f_{2}\left(  \mathcal{S}\right)  =\left\{
\begin{array}
[c]{ll}%
I\left(  X_{\mathcal{S}};Y_{r}|X_{S^{c}}X_{r}U\right)   & \mathcal{S}%
\subseteq\mathcal{K},\mathcal{S}\not =\emptyset\\
0 & \mathcal{S}=\emptyset
\end{array}
\right.
\end{equation}
for some distribution satisfying (\ref{GMARC_converse_inpdist}). We claim that
$f_{1}$ and $f_{2}$ are submodular \cite[Ch.~44]{cap_theorems:Schrijver01}. To
see this, we first consider $f_{1}$ and $k_{1}$, $k_{2}$ in $\mathcal{K}$ with
$k_{1}\neq k_{2}$, $k_{1}\notin\mathcal{S}$, $k_{2}\notin\mathcal{S}$, and
expand%
\begin{align}
f_{1}(\mathcal{S}\cup\{k_{1}\})+f_{1}(\mathcal{S}\cup\{k_{2}\}) &
=I(X_{\mathcal{S}}X_{k_{1}}X_{r};Y_{d}|X_{(\mathcal{S}\cup\{k_{1}\})^{C}%
}U)+I(X_{\mathcal{S}}X_{k_{2}}X_{r};Y_{d}|X_{(\mathcal{S}\cup\{k_{2}\})^{C}%
}U)\\
&  =I(X_{k_{1}};Y_{d}|X_{(\mathcal{S}\cup\{k_{1}\})^{C}}U)+I(X_{\mathcal{S}%
}X_{r};Y_{d}|X_{\mathcal{S}^{C}}U)\label{DGAppPM_f1exp}\\
&  \text{ \ \ }+I(X_{\mathcal{S}}X_{k_{2}}X_{r};Y_{d}|X_{(\mathcal{S}%
\cup\{k_{2}\})^{C}}U)
\end{align}
where (\ref{DGAppPM_f1exp}) follows from the chain rule for mutual
information. We lower bound the first term in (\ref{DGAppPM_f1exp}) as%
\begin{align}
&  h(X_{k_{1}}|X_{(\mathcal{S}\cup\{k_{1}\})^{C}}U)-h(X_{k_{1}}%
|X_{(\mathcal{S}\cup\{k_{1}\})^{C}}Y_{d}U)\label{DGAppPM_f1_4}\\
&  =h(X_{k_{1}}|X_{(\mathcal{S}\cup\{k_{1},k_{2}\})^{C}}U)-h(X_{k_{1}%
}|X_{(\mathcal{S}\cup\{k_{1}\})^{C}}Y_{d}U)\\
&  \geq I(X_{k_{1}};Y_{d}|X_{(\mathcal{S}\cup\{k_{1},k_{2}\})^{C}%
}U)\label{DGAppPM_f1_5}%
\end{align}
where (\ref{DGAppPM_f1_4}) follows from the Markov chain $X_{k}-U-X_{j}$ for
all $k,j\in\mathcal{K}$, $k\not =j$ and (\ref{DGAppPM_f1_5}) because
conditioning cannot increase entropy. The expression (\ref{DGAppPM_f1_5})
added to the final term in (\ref{DGAppPM_f1exp}) is
\[
I(X_{\mathcal{S}\cup\{k_{1},k_{2}\}}X_{r};Y_{d}|X_{(\mathcal{S}\cup
\{k_{1},k_{2}\})^{C}}U).
\]
Inserting (\ref{DGAppPM_f1_5}) into (\ref{DGAppPM_f1exp}), we have
\[
f_{1}(\mathcal{S}\cup\{k_{1}\})+f_{1}(\mathcal{S}\cup\{k_{2}\})\geq
f_{1}(\mathcal{S})+f_{1}(\mathcal{S}\cup\{k_{1},k_{2}\})
\]
for all $\mathcal{S}\subseteq\mathcal{K}$. The set function $f_{1}(\cdot)$ is
therefore submodular by \cite[Theorem~44.1, p.~767]{cap_theorems:Schrijver01}.

The above steps show that the rate region $\mathcal{R}_{d}^{ob}$ defined by
the destination cutset bounds (see (\ref{MARC_OB_cutset}))%
\begin{equation}
R_{\mathcal{S}}\leq I(X_{\mathcal{S}}X_{r};Y_{d}|X_{\mathcal{S}^{c}}%
U),\quad\mathcal{S}\subseteq\mathcal{K}%
\end{equation}
is a polymatroid associated with $f_{1}(\cdot)$ (see \cite[p.~767]%
{cap_theorems:Schrijver01}). \newline One can similarly show that $f_{2}%
(\cdot)$ is submodular. To see this, consider $f_{2}$ and $k_{1}$, $k_{2}$ in
$\mathcal{K}$ with $k_{1}\neq k_{2}$, $k_{1}\notin\mathcal{S}$, $k_{2}%
\notin\mathcal{S}$, and expand
\begin{align}
&  f_{2}(\mathcal{S}\cup\{k_{1}\})+f_{2}(\mathcal{S}\cup\{k_{2}\})\nonumber\\
&  =I(X_{\mathcal{S}}X_{k_{1}};Y_{r}|X_{(\mathcal{S}\cup\{k_{1}\})^{C}}%
X_{r}U)+I(X_{\mathcal{S}}X_{k_{2}};Y_{r}|X_{(\mathcal{S}\cup\{k_{2}\})^{C}%
}X_{r}U)\\
&  =I(X_{k_{1}};Y_{r}|X_{(\mathcal{S}\cup\{k_{1}\})^{C}}X_{r}%
U)+I(X_{\mathcal{S}};Y_{r}|X_{\mathcal{S}^{C}}X_{r}U)+I(X_{\mathcal{S}%
}X_{k_{2}};Y_{r}|X_{(\mathcal{S}\cup\{k_{2}\})^{C}}X_{r}U)
\label{DGAppPM_f2exp}%
\end{align}
where (\ref{DGAppPM_f2exp}) follows from the chain rule for mutual
information. We lower bound the first term in (\ref{DGAppPM_f2exp}) as
\begin{align}
&  \quad h(X_{k_{1}}|X_{(\mathcal{S}\cup\{k_{1}\})^{C}}X_{r}U)-h(X_{k_{1}%
}|X_{(\mathcal{S}\cup\{k_{1}\})^{C}}Y_{r}X_{r}U)\nonumber\\
&  =h(X_{k_{1}}|X_{(\mathcal{S}\cup\{k_{1},k_{2}\})^{C}}X_{r}U)-h(X_{k_{1}%
}|X_{(\mathcal{S}\cup\{k_{1}\})^{C}}Y_{r}X_{r}U)\label{DGAppPM_f2_4}\\
&  \geq I(X_{k_{1}};Y_{r}|X_{(\mathcal{S}\cup\{k_{1},k_{2}\})^{C}}X_{r}U)
\label{DGAppPM_f2_5}%
\end{align}
where (\ref{DGAppPM_f2_4}) follows from the independence of $X_{k}$ and
(\ref{DGAppPM_f2_5}) because conditioning cannot increase entropy. The
expression (\ref{DGAppPM_f2_5}) added to the final term in
(\ref{DGAppPM_f2exp}) is
\[
I(X_{\mathcal{S}\cup\{k_{1},k_{2}\}};Y_{r}|X_{(\mathcal{S}\cup\{k_{1}%
,k_{2}\})^{C}}X_{r}U).
\]
Inserting (\ref{DGAppPM_f1_5}) into (\ref{DGAppPM_f1exp}), we have
\[
f_{2}(\mathcal{S}\cup\{k_{1}\})+f_{2}(\mathcal{S}\cup\{k_{2}\})\geq
f_{2}(\mathcal{S})+f_{2}(\mathcal{S}\cup\{k_{1},k_{2}\})
\]
for all $\mathcal{S}\subseteq\mathcal{K}$. The set function $f_{2}(\cdot)$ is
therefore submodular by \cite[Theorem~44.1, p.~767]{cap_theorems:Schrijver01}.
This in turn implies that the rate region $\mathcal{R}_{r}^{ob}$ defined by
the relay cutset bounds (see (\ref{MARC_OB_cutset}))%
\begin{equation}
R_{\mathcal{S}}\leq I(X_{\mathcal{S}};Y_{r}|X_{\mathcal{S}^{c}}X_{r}%
U),\quad\mathcal{S}\subseteq\mathcal{K}%
\end{equation}
is a polymatroid associated with $f_{2}(\cdot)$ (see \cite[p.~767]%
{cap_theorems:Schrijver01}). \newline

\subsection{Inner Bounds}

For the inner DF\ bounds, we consider the set functions (see
\ref{Prop_DF_rateregion})
\begin{equation}
f_{3}\left(  \mathcal{S}\right)  =\left\{
\begin{array}
[c]{ll}%
I\left(  X_{\mathcal{S}}X_{r};Y_{d}|X_{S^{c}}V_{\mathcal{S}^{c}}U\right)  &
\mathcal{S}\subseteq\mathcal{K},\mathcal{S\not =\emptyset}\\
0 & \mathcal{S}=\emptyset
\end{array}
\right.
\end{equation}
and
\begin{equation}
f_{4}\left(  \mathcal{S}\right)  =\left\{
\begin{array}
[c]{ll}%
I\left(  X_{\mathcal{S}};Y_{r}|X_{S^{c}}V_{\mathcal{K}}X_{r}U\right)  &
\mathcal{S}\subseteq\mathcal{K},\mathcal{S}\not =\emptyset\\
0 & \mathcal{S}=\emptyset
\end{array}
\right.
\end{equation}
for some distribution satisfying (\ref{Prop_inp_dist}). The functions
$f_{1}(\cdot)$ and $f_{2}(\cdot)$ differ from $f_{3}(\cdot)$ and $f_{4}%
(\cdot)$, respectively, in the absence of the auxiliary random variables
$V_{\mathcal{K}}$. The proof of sub-modularity of $f_{3}$ and $f_{4}$ follows
along the same lines as those for the outer bounds except now we have the
Markov chain $\left(  X_{k},V_{k}\right)  -U-\left(  X_{j},V_{j}\right)  $ for
all $k\not =j$.

We thus have that the rate region $\mathcal{R}_{r}$ defined by the DF relay
bounds (see (\ref{Prop_DF_rateregion}))%
\begin{equation}
R_{\mathcal{S}}\leq I(X_{\mathcal{S}}X_{r};Y_{d}|X_{\mathcal{S}^{c}%
}V_{\mathcal{S}^{c}}),\quad\mathcal{S}\subseteq\mathcal{K} \label{DF_D_Bounds}%
\end{equation}
is a polymatroid associated with $f_{3}(\cdot)$ (see \cite[p.~767]%
{cap_theorems:Schrijver01}). Similarly, the region $\mathcal{R}_{d}$ defined
by the DF destination bounds (see (\ref{Prop_DF_rateregion}))%
\begin{equation}
R_{\mathcal{S}}\leq I(X_{\mathcal{S}};Y_{r}|X_{\mathcal{S}^{c}}V_{\mathcal{K}%
}X_{r}),\quad\mathcal{S}\subseteq\mathcal{K}%
\end{equation}
\newline is a polymatroid associated with $f_{4}(\cdot)$ (see \cite[p.~767]%
{cap_theorems:Schrijver01}).

\section{\label{DG_AppConvex}Concavity of $B_{d,\mathcal{S}}$ and
$I_{d,\mathcal{S}}$}

\subsection{Outer Bound $B_{d,\mathcal{S}}$}

Recall that the cutset bound at the destination, $B_{d,\mathcal{S}}$, is given
by%
\begin{equation}%
\begin{array}
[c]{cc}%
B_{d,\mathcal{S}}=C\left(  \frac{%
%TCIMACRO{\tsum \limits_{k\in\mathcal{S}}}%
%BeginExpansion
{\textstyle\sum\limits_{k\in\mathcal{S}}}
%EndExpansion
P_{k}}{N_{d}}+\frac{\left(  1-%
%TCIMACRO{\tsum \limits_{k\in\mathcal{S}^{c}}}%
%BeginExpansion
{\textstyle\sum\limits_{k\in\mathcal{S}^{c}}}
%EndExpansion
\gamma_{k}\right)  P_{r}}{N_{d}}+\frac{2%
%TCIMACRO{\tsum \limits_{k\in\mathcal{S}}}%
%BeginExpansion
{\textstyle\sum\limits_{k\in\mathcal{S}}}
%EndExpansion
\sqrt{\gamma_{k}P_{k}P_{r}}}{N_{d}}\right)  & \text{for all }\mathcal{S}%
\subseteq\mathcal{K}.
\end{array}
\label{DGAppCvx_BdS}%
\end{equation}
We show that $B_{d,\mathcal{S}}$ is a concave function of $\underline{\gamma
}_{\mathcal{K}}$. To prove concavity, one has to show that the Hessian or
second derivative of $B_{d,\mathcal{S}}$, $\nabla^{2}B_{d,\mathcal{S}}$, is
negative semi-definite, i.e, $\underline{x}^{T}\nabla^{2}B_{d,\mathcal{S}%
}\underline{x}\leq0$ for all $\underline{x}\in\mathcal{R}^{K}$ \cite[3.1.4]%
{cap_theorems:BVbook01}. We write%
\begin{equation}
B_{d,\mathcal{S}}=\frac{1}{2}\log\left(  K_{0}+2\sum\limits_{k\in\mathcal{S}%
}K_{k}\sqrt{\gamma_{k}}\right)
\end{equation}
where%
\begin{equation}%
\begin{array}
[c]{ll}%
K_{0}=1+\frac{\sum\limits_{k\in\mathcal{S}}P_{k}}{N_{d}}+\frac{P_{r}\left(
1-c\right)  }{N_{d}} & \\
K_{k}=\sqrt{\frac{P_{k}}{N_{d}}\frac{P_{r}}{N_{d}}}\text{ \ \ \ } &
k\in\mathcal{S}\text{.}%
\end{array}
\label{DGACvx_K0Kk}%
\end{equation}
The gradient $\nabla B_{d,\mathcal{S}}$ is given by
\begin{align}
\nabla B_{d,\mathcal{S}}  &  =\left[  \partial B_{d,\mathcal{S}}%
/\partial\gamma_{k}\right]  _{k\in\mathcal{K}}\\
&  =\frac{1}{K_{s}}\left[
\begin{array}
[c]{cc}%
\underline{v}_{\mathcal{S}} & \underline{v}_{\mathcal{S}^{c}}%
\end{array}
\right]  ^{T}\\
&  =\frac{1}{K_{s}}%
\end{align}
where $\underline{v}_{\mathcal{S}}$ is an $\left\vert \mathcal{S}\right\vert
$-length vector with entries $v_{k}=K_{k}\left/  \sqrt{\gamma_{k}}\right.  $
for all $k\in\mathcal{S}$, $\underline{v}_{\mathcal{S}^{c}}$ is an $\left\vert
\mathcal{S}^{c}\right\vert $-length vector with entries $v_{m}=-P_{r}\left/
N_{d}\right.  $ for all $m\in\mathcal{S}^{c}$, and
\begin{equation}
K_{s}=2\left(  K_{0}+2\sum\limits_{k\in\mathcal{S}}K_{k}\sqrt{\gamma_{k}%
}\right)  .
\end{equation}
The Hessian of $B_{d,\mathcal{S}}$, $\nabla^{2}B_{d,\mathcal{S}}$, is given by%
\begin{align}
\nabla^{2}B_{d,\mathcal{S}}  &  =\left[  \partial^{2}B_{d,\mathcal{S}%
}/\partial\gamma_{k}\partial\gamma_{m}\right]  _{\forall k,m\in\mathcal{K}}\\
&  =\frac{-1}{K_{s}}\text{diag}\left(  \underline{d}\right)  -\underline
{z}\underline{z}^{T} \label{DGACvx_BHess}%
\end{align}
where%
\begin{align}
\underline{z}  &  =\sqrt{2}\left(  \nabla B_{d,\mathcal{S}}\right) \\
\underline{d}  &  =\left[
\begin{array}
[c]{cc}%
\underline{d}_{\mathcal{S}} & \underline{d}_{\mathcal{S}^{c}}%
\end{array}
\right]  ^{T}%
\end{align}
such that $\underline{d}_{\mathcal{S}}$ is an $\left\vert \mathcal{S}%
\right\vert $-length vector with entries $d_{k}=K_{k}\left/  2\gamma_{k}%
^{3/2}\right.  $ for all $k\in\mathcal{S}$, and $\underline{d}_{\mathcal{S}%
^{c}}$ is an $\left\vert \mathcal{S}^{c}\right\vert $-length vector with
entries $d_{k}=-2P_{r}^{2}\left/  \left(  N_{d}^{2}K_{s}\right)  \right.  $
for all $k\in\mathcal{S}^{c}$. Using the fact that $K_{k}$ and $\gamma_{k}$
are non-negative for all $k$, from (\ref{DGACvx_BHess}), for any
$\underline{x}\in\mathcal{R}^{K}$, we have
\begin{align}
\underline{x}^{T}\nabla^{2}B_{d,\mathcal{S}}\underline{x}  &  =-\frac{1}%
{K_{s}}\left(
%TCIMACRO{\tsum \limits_{k\in\mathcal{K}}}%
%BeginExpansion
{\textstyle\sum\limits_{k\in\mathcal{K}}}
%EndExpansion
x_{k}^{2}d_{k}\right)  -\left(  \underline{x}^{T}\cdot\underline{z}\right)
^{2}\\
&  \leq0
\end{align}
with equality if and only if $\underline{x}$ $=$ $\underline{0}$. In proving
the concavity of $B_{d,\mathcal{S}}$, we assume only that $\gamma_{k}>0$, for
all $k$. Thus, from continuity, the concavity also holds for all non-negative
$\gamma_{k}$ satisfying (see (\ref{DGOB_TOB}))
\begin{equation}
\sum\limits_{k\in\mathcal{K}}\gamma_{k}\leq1. \label{DGAppCvx_Gammasum}%
\end{equation}

For a fixed $\underline{\gamma}_{\mathcal{S}^{c}}$, we now find the
$\underline{\gamma}_{\mathcal{S}}$ that maximizes $B_{d,\mathcal{S}}$ subject
to (\ref{DGAppCvx_Gammasum}) above. For a $c\in\lbrack0,1)$, we fix
$\underline{\gamma}_{\mathcal{S}^{c}}$ such that its entries $\gamma_{k}$, for
all $k\in\mathcal{S}^{c}$, satisfy
\begin{equation}
\sum\limits_{k\in\mathcal{S}^{c}}\gamma_{k}=1-c, \label{DGAppCvx_GSc}%
\end{equation}
and thus, from (\ref{DGAppCvx_Gammasum}) we have
\begin{equation}
\sum\limits_{k\in\mathcal{S}}\gamma_{k}\leq c. \label{DGACvx_GS}%
\end{equation}
Since $B_{d,\mathcal{S}}$ is a continuous concave function of $\underline
{\gamma}_{\mathcal{S}}$ it achieves its maximum at a \underline{$\gamma$%
}$_{\mathcal{S}}^{\ast}$ where%
\begin{equation}%
\begin{array}
[c]{cc}%
\left.  \frac{\partial B_{d,\mathcal{S}}}{\partial\gamma_{k}}\right\vert
_{\gamma_{k}^{\ast}}=0 & \text{for all }k\in\mathcal{S}\text{.}%
\end{array}
\end{equation}
Using the method of Lagrange multipliers, we find that a \underline{$\gamma$%
}$_{\mathcal{S}}^{\ast}$ that maximizes $B_{d,\mathcal{S}}$ subject to
(\ref{DGAppCvx_GSc}) and (\ref{DGACvx_GS}) has entries%
\begin{equation}
\gamma_{k}^{\ast}=\left\{
\begin{array}
[c]{cc}%
\frac{cP_{k}}{\sum\limits_{k\in\mathcal{S}}P_{k}} & k\in\mathcal{S}%
\end{array}
\right.  .
\end{equation}

\subsection{Inner Bound $I_{d,\mathcal{S}}$}

Recall that the DF bound, $I_{d,\mathcal{S}}$, at the destination is given as%
\begin{equation}%
\begin{array}
[c]{cc}%
I_{d,\mathcal{S}}=C\left(  \frac{%
%TCIMACRO{\tsum \limits_{k\in\mathcal{S}}}%
%BeginExpansion
{\textstyle\sum\limits_{k\in\mathcal{S}}}
%EndExpansion
P_{k}}{N_{d}}+\frac{\left(  1-%
%TCIMACRO{\tsum \limits_{k\in\mathcal{S}^{c}}}%
%BeginExpansion
{\textstyle\sum\limits_{k\in\mathcal{S}^{c}}}
%EndExpansion
\beta_{k}\right)  P_{r}}{N_{d}}+\frac{2%
%TCIMACRO{\tsum \limits_{k\in\mathcal{S}}}%
%BeginExpansion
{\textstyle\sum\limits_{k\in\mathcal{S}}}
%EndExpansion
\sqrt{\left(  1-\alpha_{k}\right)  \beta_{k}P_{k}P_{r}}}{N_{d}}\right)   &
\text{for all }\mathcal{S}\subseteq\mathcal{K}.
\end{array}
\label{DGACvx_IdS}%
\end{equation}
Comparing (\ref{DGAppCvx_BdS}) and (\ref{DGACvx_IdS}), for $\gamma_{k}%
\overset{\vartriangle}{=}\left(  1-\alpha_{k}\right)  \beta_{k}$ for all
$k\in\mathcal{S}$ and $\gamma_{k}\overset{\vartriangle}{=}\beta_{k}$ for all
$k\in\mathcal{S}^{c}$, the DF rate bound in (\ref{DGACvx_IdS}) simplifies to
that for the outer bound in (\ref{DGAppCvx_BdS}), and thus, one can use the
same technique to show that $I_{d,\mathcal{S}}$ is a concave function of
$\underline{\alpha}_{\mathcal{K}}$ and $\underline{\beta}_{\mathcal{K}}$. For
the power fractions $\beta_{k}$, we have
\begin{equation}
\sum\limits_{k\in\mathcal{K}}\beta_{k}\leq1.\label{App_betasum_ineq}%
\end{equation}
For a fixed $\underline{\alpha}_{\mathcal{K}}$, we determine the optimal
$\underline{\beta}_{\mathcal{S}}$ maximizing $I_{d,\mathcal{S}}$ by fixing the
vector $\underline{\beta}_{\mathcal{S}^{c}}$ such that
\begin{align}
\sum\limits_{k\in\mathcal{S}^{c}}\beta_{k} &  =1-c\label{DGACvx_betaSc}\\
\sum\limits_{k\in\mathcal{S}}\beta_{k} &  \leq c.\label{DGACvx_betaS}%
\end{align}
where $c\in\lbrack0,1)$. Since $I_{d,\mathcal{S}}$ is independent of
\underline{$\beta$}$_{\mathcal{S}}$ for \underline{$\alpha$}$_{\mathcal{S}%
}=\underline{1}$, we assume that \underline{$\alpha$}$_{\mathcal{S}}\not =$
\underline{$1$}.%

%TCIMACRO{\FRAME{ftbpFU}{4.6112in}{3.6659in}{0pt}{\Qcb{Rate region achieved at
%the destination for a two-user MARC and $\alpha_{1}=\alpha_{2}=1/2$. }%
%}{\Qlb{Fig_GMARC_DF_destrate_Region}}%
%{gmarc_twouser_destination_region_100406.eps}%
%{\special{ language "Scientific Word";  type "GRAPHIC";  display "USEDEF";
%valid_file "F";  width 4.6112in;  height 3.6659in;  depth 0pt;
%original-width 0pt;  original-height 0pt;  cropleft "0";  croptop "1";
%cropright "1";  cropbottom "0";
%filename '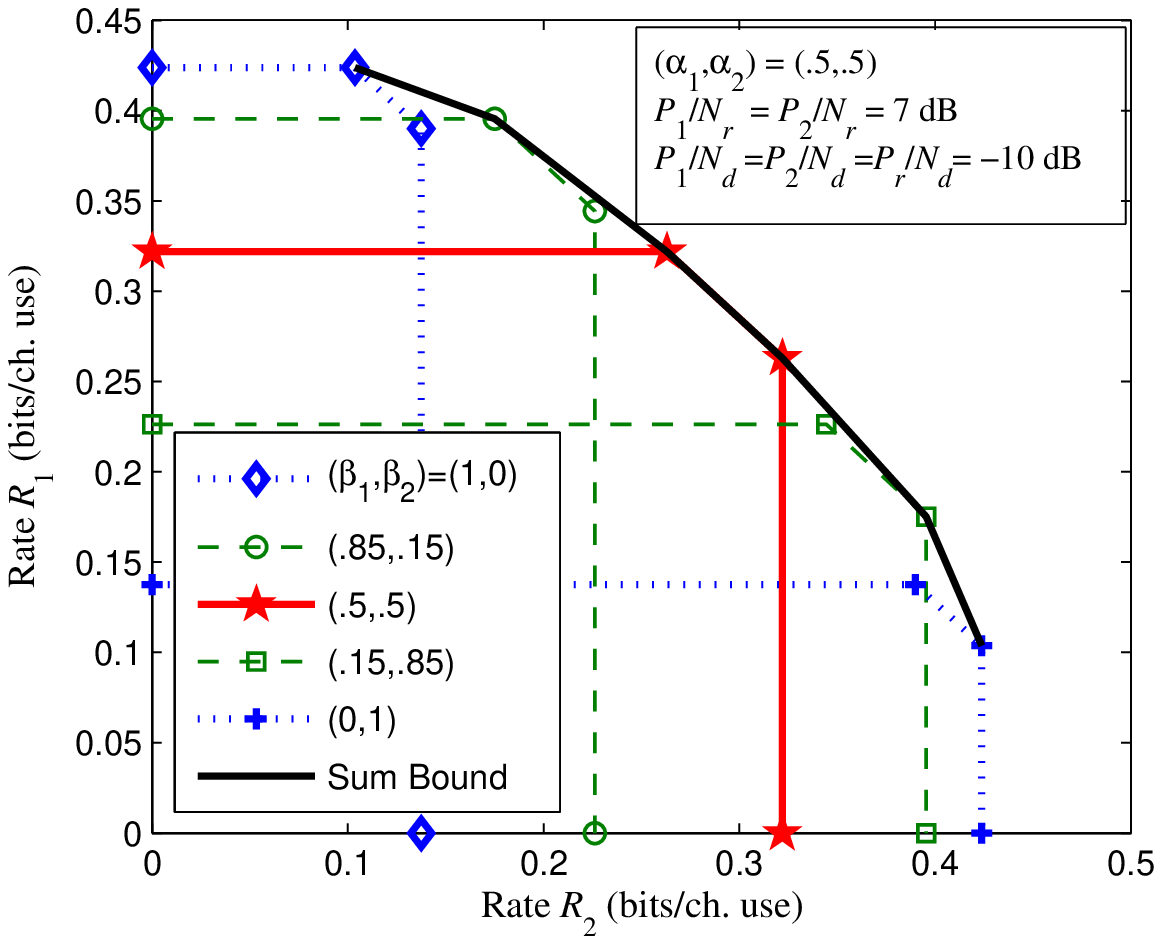';file-properties "XNPEU";}%
%}}%
%BeginExpansion
\begin{figure}
[ptb]
\begin{center}
\includegraphics[
height=3.6659in,
width=4.6112in
]%
{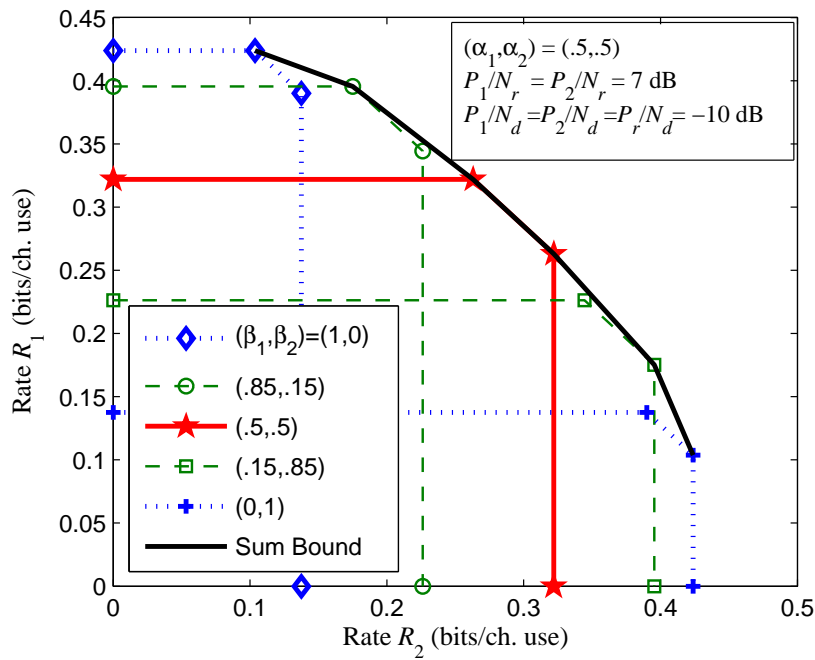}%
\caption{Rate region achieved at the destination for a two-user MARC and
$\alpha_{1}=\alpha_{2}=1/2$. }%
\label{Fig_GMARC_DF_destrate_Region}%
\end{center}
\end{figure}
%EndExpansion

We now consider the special case \thinspace in which $\underline{\alpha
}_{\mathcal{S}}\not =\underline{1}$ and $\underline{\beta}_{\mathcal{S}^{c}}$
are fixed. We determine a $\underline{\beta}_{\mathcal{S}}$ that maximizes
$I_{d,\mathcal{S}}$ subject to (\ref{DGACvx_betaS}) and (\ref{DGACvx_betaSc}).
Since $I_{d,\mathcal{S}}$ is a continuous concave function of $\underline
{\beta}_{\mathcal{S}}$ it achieves its maximum at a \underline{$\beta$%
}$_{\mathcal{S}}^{\ast}$ where%
\begin{equation}%
\begin{array}
[c]{cc}%
\left.  \frac{\partial I_{d,\mathcal{S}}}{\partial\beta_{k}}\right\vert
_{\beta_{k}^{\ast}}=0 & \text{for all }k\in\mathcal{S}\text{.}%
\end{array}
\end{equation}
As before, using Lagrange multipliers, the optimal \underline{$\beta$%
}$_{\mathcal{S}}^{\ast}$ that maximizes $I_{d,\mathcal{S}}$, subject to
(\ref{DGACvx_betaS}), has entries%
\begin{equation}
\beta_{k}^{\ast}=\left\{
\begin{array}
[c]{cc}%
\frac{c\left(  1-\alpha_{k}\right)  P_{k}}{\sum\limits_{k\in\mathcal{S}%
}\left(  1-\alpha_{k}\right)  P_{k}} & k\in\mathcal{S}%
\end{array}
\right.  . \label{App_beta_opt}%
\end{equation}

\textit{Rate region for a fixed }\underline{$\alpha$}$_{\mathcal{K}}$: For any
choice of a non-zero \underline{$\alpha$}$_{\mathcal{K}}$ and a \underline
{$\beta$}$_{\mathcal{K}}$ satisfying (\ref{App_betasum_ineq}), the rate region
given by (\ref{DGACvx_IdS}) for all~$\mathcal{S}$ is a polymatroid. For
\underline{$\alpha$}$_{\mathcal{K}}=\underline{1}$, from (\ref{DGACvx_IdS}) we
see that there are no gains achieved from coherent combining, i.e., it
suffices to choose \underline{$\beta$}$_{\mathcal{K}}=\underline{0}$. Consider
\underline{$\alpha$}$_{\mathcal{K}}\not =\underline{1}$. Since there is at
least one $k$ for which $\alpha_{k}<1$, gains from coherent combining at the
destination are maximized by choosing \underline{$\beta$}$_{\mathcal{K}}$ to
satisfy (\ref{App_betasum_ineq}) with equality. For a fixed \underline
{$\alpha$}$_{\mathcal{K}}$, we then write the rate region at the destination
as a union over all polymatroids, one for each choice of \underline{$\beta$%
}$_{\mathcal{K}}$ satisfying%
\begin{equation}
\sum_{k=1}^{K}\beta_{k}=1.
\end{equation}
Observe that for $\underline{\beta}_{\mathcal{K}}^{\ast}$ with entries given
by (\ref{App_beta_opt}), the bound $I_{d,\mathcal{S}}$ is maximized. In Fig
\ref{Fig_GMARC_DF_destrate_Region}, we illustrate the rate region for a
two-user degraded Gaussian\ MARC with the SNR$\ P_{1}/N_{d}=P_{2}/N_{d}$
chosen as $-10$ dB, \underline{$\alpha$} $=$ $(1/2,1/2)$, and five choices of
\underline{$\beta$}$_{\mathcal{K}}$. Observe that the maximum single-user rate
$R_{1}$ is achieved by setting $\beta_{1}$ to $1$ though this value does not
maximize $R_{2}$ or $R_{1}+R_{2}$. For all other $(\beta_{1},\beta_{2})$ such
as $(0.85,0.15)$, as $\beta_{1}$ decreases and $\beta_{2}$ increases, $R_{1}$
decreases while $R_{2}$ increases achieving its maximum at $\beta_{2}=1$. The
bound on the sum rate $R_{1}+R_{2}$ increases from $(\beta_{1},\beta
_{2})=(1,0)$, achieves its maximum at $(\beta_{1}^{\ast},\beta_{2}^{\ast})$
$=$ $(1/2,1/2)$, and then decreases as $\beta_{2}$ approaches $1$. The
resulting region at the destination is shown in Fig.
\ref{Fig_GMARC_DF_destrate_Region} as a union over all polymatroids, one for
each choice of \underline{$\beta$}$_{\mathcal{K}}$.

\section{\label{DG_Appen_2}$B_{r,\mathcal{S}}$ vs. \underline{$\gamma$%
}$_{\mathcal{K}}$}

We show that the function $B_{r,\mathcal{S}}$ in (\ref{Con_final_B1S}) is a
concave function of $\underline{\gamma}_{\mathcal{S}}$ for a fixed
$\underline{\gamma}_{\mathcal{S}^{c}}$ and for all $\mathcal{S\subseteq K}$.
Recall the expression for $B_{r,\mathcal{S}}$ as
\begin{equation}
B_{r,\mathcal{S}}=C\left(
%TCIMACRO{\tsum \limits_{k\in\mathcal{S}}}%
%BeginExpansion
{\textstyle\sum\limits_{k\in\mathcal{S}}}
%EndExpansion
\frac{P_{k}}{N_{r}}-\frac{\left(  \sum\limits_{k\in\mathcal{S}}\sqrt
{\gamma_{k}P_{k}}\right)  ^{2}}{N_{r}\left(  1-%
%TCIMACRO{\tsum \limits_{k\in\mathcal{S}^{c}}}%
%BeginExpansion
{\textstyle\sum\limits_{k\in\mathcal{S}^{c}}}
%EndExpansion
\gamma_{k}\right)  }\right)  \label{A2_BrS}%
\end{equation}
where we assume that
\begin{equation}
\sum\limits_{k\in\mathcal{S}^{c}}\gamma_{k}=1-c<1.
\end{equation}
Observe that $B_{r,\mathcal{S}}$ is maximized when $c=1$, i.e., $\gamma_{k}=0$
for all $k\in\mathcal{S}$, and minimized for $c=0$. Further, comparing
$B_{r,\mathcal{S}}$ and $B_{d,\mathcal{S}}$, one can see that for
\begin{equation}
\gamma_{k}=\left\{
\begin{array}
[c]{lll}%
P_{k}\left/  \left(  \sum_{k\in\mathcal{S}}P_{k}\right)  \right.  & , &
k\in\mathcal{S}\\
0 & , & k\in\mathcal{S}^{c}%
\end{array}
\right.
\end{equation}
$B_{r,\mathcal{S}}$ achieves its minimum, i.e., $B_{r,\mathcal{S}}=0$.

We write%
\begin{equation}
x\overset{\vartriangle}{=}\left(  \sum\limits_{k\in\mathcal{S}}\sqrt
{\gamma_{k}\lambda_{k}}\right)  \label{A2_x_sub}%
\end{equation}
where
\begin{equation}%
\begin{array}
[c]{ccc}%
P_{\max}=\max_{k\in\mathcal{K}}P_{k} & \text{and} & \lambda_{k}=\left.
P_{k}\right/  P_{\max}\text{.}%
\end{array}
\label{A2_Pmax_def}%
\end{equation}
Substituting (\ref{A2_x_sub}) in the expression for $B_{r,\mathcal{S}}$ in
(\ref{A2_BrS}), we have%
\begin{equation}
B_{r,\mathcal{S}}=C\left(
%TCIMACRO{\tsum \limits_{k\in\mathcal{S}}}%
%BeginExpansion
{\textstyle\sum\limits_{k\in\mathcal{S}}}
%EndExpansion
\frac{P_{k}}{N_{r}}-\frac{x^{2}P_{\max}}{N_{r}c}\right)  . \label{DG_A2_BrS}%
\end{equation}
Differentiating $B_{r,\mathcal{S}}$ with respect to $x$ we have
\begin{align}
\frac{dB_{r,\mathcal{S}}}{dx}  &  =\frac{-xP_{\max}}{N_{r}c}\cdot\left(  1+%
%TCIMACRO{\tsum \limits_{k\in\mathcal{S}}}%
%BeginExpansion
{\textstyle\sum\limits_{k\in\mathcal{S}}}
%EndExpansion
\frac{P_{k}}{N_{r}}-\frac{x^{2}P_{\max}}{N_{r}c}\right)  ^{-1}\\
\frac{d^{2}B_{r,\mathcal{S}}}{dx^{2}}  &  =\frac{-\frac{P_{\max}}{N_{r}%
c}\left(  1+%
%TCIMACRO{\tsum \limits_{k\in\mathcal{S}}}%
%BeginExpansion
{\textstyle\sum\limits_{k\in\mathcal{S}}}
%EndExpansion
\frac{P_{k}}{N_{r}}\right)  -\left(  \frac{xP_{\max}}{N_{r}c}\right)  ^{2}%
}{\left(  1+%
%TCIMACRO{\tsum \limits_{k\in\mathcal{S}}}%
%BeginExpansion
{\textstyle\sum\limits_{k\in\mathcal{S}}}
%EndExpansion
\frac{P_{k}}{N_{r}}-\frac{x^{2}P_{\max}}{N_{r}c}\right)  ^{2}}\label{A2_Br_d2}%
\\
&  <0 \label{A2_Br_d2_ineq}%
\end{align}
where the strict inequality in (\ref{A2_Br_d2_ineq}) follows since all terms
in (\ref{A2_Br_d2}) are positive. Further, for any $c>0\,$, from
(\ref{DG_A2_BrS}) $B_{r,\mathcal{S}}$ is maximized at $x=0$, i.e., for
$\gamma_{k}=0$ for all $k\in\mathcal{S}$. Thus, we see that $B_{r,\mathcal{S}%
}$ is a concave decreasing function of $x$.

\section{\label{DG_App4OBProof}Proof of Theorem \ref{DGOB_Th2}}

We now prove Theorem \ref{DGOB_Th2} and give the solution to the max-min
optimization
\begin{equation}
R_{\mathcal{K}}=\max\limits_{\underline{\gamma}_{\mathcal{K}}\in\Gamma_{OB}%
}\min\left(  B_{r,\mathcal{K}}\left(  \underline{\gamma}_{\mathcal{K}}\right)
,B_{d,\mathcal{K}}\left(  \underline{\gamma}_{\mathcal{K}}\right)  \right)  .
\label{DG_A3_maxmin}%
\end{equation}

Consider the function
\begin{equation}%
\begin{array}
[c]{cc}%
J(\underline{\gamma}_{\mathcal{K}},\delta)=\delta B_{r,\mathcal{K}}\left(
\underline{\gamma}_{\mathcal{K}}\right)  +\left(  1-\delta\right)
B_{d,\mathcal{K}}\left(  \underline{\gamma}_{\mathcal{K}}\right)  , &
\delta\in\lbrack0,1].
\end{array}
\label{DGA3_Jfn_defn}%
\end{equation}
Observe that $J$ is linear in $\delta$ ranging in value from $I_{d,\mathcal{K}%
}$ for $\delta$ $=$ $0$ to $I_{r,\mathcal{K}}$ for $\delta=1$. Thus, the
optimization in (\ref{DGA3_Jfn_defn}) is equivalent to maximizing the minimum
of the two end points of the line $J$ over $\Gamma_{OB}$. Maximizing
$J(\underline{\gamma}_{\mathcal{K}},\delta)$ over $\underline{\gamma
}_{\mathcal{K}}$, we obtain a continuous convex function
\begin{equation}%
\begin{array}
[c]{cc}%
V(\delta)=\max\limits_{\underline{\gamma}_{\mathcal{K}}\in\Gamma_{OB}%
}J(\underline{\gamma}_{\mathcal{K}},\delta), & \delta\in\lbrack0,1].
\end{array}
\label{DGA3_Vdelta_defn}%
\end{equation}
From (\ref{DGA3_Jfn_defn}) and (\ref{DGA3_Vdelta_defn}), we see that for any
$\underline{\gamma}_{\mathcal{K}}$, $J(\underline{\gamma}_{\mathcal{K}}%
,\delta)$ either lies strictly below or is tangential to $V(\delta)$. The
following proposition summarizes a well-known solution to the max-min problem
in (\ref{DG_A3_maxmin}) (see \cite{cap_theorems:LiangVP_ResAllocJrnl}).

\begin{proposition}
\label{Prop_minimax}$\underline{\gamma}_{\mathcal{K},\mathcal{\delta}^{\ast}%
}^{\ast}$ is a max-min rule where
\begin{equation}
\delta^{\ast}=\arg\min_{\delta\in\lbrack0,1]}V(\delta).
\end{equation}
The maximum bound on $R_{\mathcal{K}}$, $V(\delta^{\ast})$, is completely
determined by the following three cases (see Fig. \ref{Fig_Prop_1}).%
\begin{align}
&
\begin{array}
[c]{cl}%
\text{Case 1:} & \delta^{\ast}=0:V(\delta^{\ast})=B_{d,\mathcal{K}}%
(\underline{\gamma}_{\mathcal{K},\mathcal{\delta}^{\ast}}^{\ast}%
)<B_{r,\mathcal{K}}(\underline{\gamma}_{\mathcal{K},\mathcal{\delta}^{\ast}%
}^{\ast})
\end{array}
\label{DGA4_Case1}\\
&
\begin{array}
[c]{cl}%
\text{Case 2:} & \delta^{\ast}=1:V(\delta^{\ast})=B_{r,\mathcal{K}}%
(\underline{\gamma}_{\mathcal{K},\mathcal{\delta}^{\ast}}^{\ast}%
)<B_{d,\mathcal{K}}(\underline{\gamma}_{\mathcal{K},\mathcal{\delta}^{\ast}%
}^{\ast})
\end{array}
\label{DGA4_Case2}\\
&
\begin{array}
[c]{cl}%
\text{Case 3:} & 0<\delta^{\ast}<1:V(\delta^{\ast})=B_{r,\mathcal{K}%
}(\underline{\gamma}_{\mathcal{K},\mathcal{\delta}^{\ast}}^{\ast
})=B_{d,\mathcal{K}}(\underline{\gamma}_{\mathcal{K},\mathcal{\delta}^{\ast}%
}^{\ast}).
\end{array}
\label{DGA4_Case3}%
\end{align}

\end{proposition}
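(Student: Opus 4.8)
The plan is to recognize the max-min problem (\ref{DG_A3_maxmin}) as the dual, via a minimax theorem, of the minimization of the convex envelope $V(\delta)$, and then to read off the three cases from the location of the minimizer $\delta^{\ast}$. First I would observe that for any fixed $\underline{\gamma}_{\mathcal{K}}$ the function $J(\underline{\gamma}_{\mathcal{K}},\delta)$ is affine in $\delta$, so it attains its minimum over $\delta\in[0,1]$ at an endpoint, giving
\begin{equation}
\min_{\delta\in[0,1]}J(\underline{\gamma}_{\mathcal{K}},\delta)=\min\left(B_{r,\mathcal{K}}(\underline{\gamma}_{\mathcal{K}}),B_{d,\mathcal{K}}(\underline{\gamma}_{\mathcal{K}})\right).
\end{equation}
Hence the objective in (\ref{DG_A3_maxmin}) is exactly $\max_{\underline{\gamma}_{\mathcal{K}}\in\Gamma_{OB}}\min_{\delta\in[0,1]}J(\underline{\gamma}_{\mathcal{K}},\delta)$, which exhibits the problem as a max-min over the product of the compact convex sets $\Gamma_{OB}$ and $[0,1]$.

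Next I would invoke a minimax (saddle-point) theorem. Since $B_{r,\mathcal{K}}$ and $B_{d,\mathcal{K}}$ are concave in $\underline{\gamma}_{\mathcal{K}}$ (by the remark following Theorem \ref{DGOB_Th0} together with Appendices \ref{DG_AppConvex} and \ref{DG_Appen_2}), the convex combination $J$ is concave in $\underline{\gamma}_{\mathcal{K}}$ for each $\delta$ and affine, hence convex, in $\delta$ for each $\underline{\gamma}_{\mathcal{K}}$. As $\Gamma_{OB}$ and $[0,1]$ are compact and convex and $J$ is continuous, Sion's minimax theorem applies and yields
\begin{equation}
\max_{\underline{\gamma}_{\mathcal{K}}\in\Gamma_{OB}}\min_{\delta\in[0,1]}J(\underline{\gamma}_{\mathcal{K}},\delta)=\min_{\delta\in[0,1]}\max_{\underline{\gamma}_{\mathcal{K}}\in\Gamma_{OB}}J(\underline{\gamma}_{\mathcal{K}},\delta)=\min_{\delta\in[0,1]}V(\delta)=V(\delta^{\ast}).
\end{equation}
The theorem also supplies a saddle point $(\underline{\gamma}_{\mathcal{K},\delta^{\ast}}^{\ast},\delta^{\ast})$ in which $\underline{\gamma}_{\mathcal{K},\delta^{\ast}}^{\ast}$ attains $V(\delta^{\ast})=\max_{\underline{\gamma}_{\mathcal{K}}}J(\underline{\gamma}_{\mathcal{K}},\delta^{\ast})$. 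Evaluating the saddle value through the inner minimum gives $\min(B_{r,\mathcal{K}}(\underline{\gamma}_{\mathcal{K},\delta^{\ast}}^{\ast}),B_{d,\mathcal{K}}(\underline{\gamma}_{\mathcal{K},\delta^{\ast}}^{\ast}))=V(\delta^{\ast})=R_{\mathcal{K}}$, so $\underline{\gamma}_{\mathcal{K},\delta^{\ast}}^{\ast}$ is a max-min rule.

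It then remains to classify the solution by the location of the minimizer of the convex function $V$ on $[0,1]$. Using that $J(\cdot,\delta)$ is maximized at $\underline{\gamma}_{\mathcal{K},\delta}^{\ast}$, an envelope (Danskin) argument identifies the one-sided derivatives of $V$ with $B_{r,\mathcal{K}}(\underline{\gamma}_{\mathcal{K},\delta}^{\ast})-B_{d,\mathcal{K}}(\underline{\gamma}_{\mathcal{K},\delta}^{\ast})$. If the minimum is attained in the interior, $0<\delta^{\ast}<1$, the first-order condition $V'(\delta^{\ast})=0$ forces this difference to vanish, i.e. $B_{r,\mathcal{K}}(\underline{\gamma}_{\mathcal{K},\delta^{\ast}}^{\ast})=B_{d,\mathcal{K}}(\underline{\gamma}_{\mathcal{K},\delta^{\ast}}^{\ast})$, which is Case~3 of (\ref{DGA4_Case3}). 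If instead $V$ is minimized at $\delta^{\ast}=0$, then $V(\delta^{\ast})=\max_{\underline{\gamma}_{\mathcal{K}}}B_{d,\mathcal{K}}$ and convexity makes $V$ nondecreasing to the right of $0$, so $V'(0^{+})\ge 0$ forces $B_{d,\mathcal{K}}\le B_{r,\mathcal{K}}$ at $\underline{\gamma}_{\mathcal{K},0}^{\ast}$; in the nondegenerate case this is the strict inequality of Case~1 in (\ref{DGA4_Case1}), and the symmetric argument at $\delta^{\ast}=1$ gives Case~2 in (\ref{DGA4_Case2}).

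I expect the main obstacle to be the rigorous justification of the envelope step. Danskin's theorem yields $V'(\delta)=B_{r,\mathcal{K}}(\underline{\gamma}_{\mathcal{K},\delta}^{\ast})-B_{d,\mathcal{K}}(\underline{\gamma}_{\mathcal{K},\delta}^{\ast})$ cleanly only when the inner maximizer is unique, so I would either appeal to strict concavity of $J(\cdot,\delta)$ (or a measurable selection of maximizers) to secure uniqueness, or work directly with the subdifferential of the convex function $V$ and show that $0$ lies in $\partial V(\delta^{\ast})$ at an interior $\delta^{\ast}$ precisely when the two sum-rate bounds coincide. The minimax interchange itself is routine given the concavity already established, so the analytic delicacy is concentrated entirely in the first-order characterization of $\delta^{\ast}$.
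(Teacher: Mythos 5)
Your proof is sound, but it differs structurally from the paper in that the paper never actually proves Proposition \ref{Prop_minimax}: it invokes it as a known fact from the minimax theory of detection, citing \cite[II.C]{cap_theorems:HVPoor01} and \cite{cap_theorems:LiangVP_ResAllocJrnl}, supports it only with the illustration in Fig. \ref{Fig_Prop_1}, and proceeds directly to analyzing the three cases. Your argument supplies the missing proof, and its ingredients are the right formalization of what the paper takes for granted: the observation that $J(\underline{\gamma}_{\mathcal{K}},\delta)$ is affine in $\delta$, so that $\min_{\delta\in[0,1]}J=\min\left(B_{r,\mathcal{K}},B_{d,\mathcal{K}}\right)$; Sion's theorem on $\Gamma_{OB}\times[0,1]$, which is legitimate because $B_{r,\mathcal{K}}$ and $B_{d,\mathcal{K}}$ are concave and continuous on the compact convex set $\Gamma_{OB}$ (Appendices \ref{DG_AppConvex} and \ref{DG_Appen_2}); and the extraction of a saddle point, whose $\underline{\gamma}$-component is then a max-min rule. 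Two refinements would tighten the case classification. First, the Danskin step you flag as delicate can be bypassed entirely: given a saddle point $(\underline{\gamma}_{\mathcal{K},\delta^{\ast}}^{\ast},\delta^{\ast})$, the point $\delta^{\ast}$ minimizes the affine map $\delta\mapsto J(\underline{\gamma}_{\mathcal{K},\delta^{\ast}}^{\ast},\delta)$ over $[0,1]$, so an interior $\delta^{\ast}$ forces the slope $B_{r,\mathcal{K}}(\underline{\gamma}_{\mathcal{K},\delta^{\ast}}^{\ast})-B_{d,\mathcal{K}}(\underline{\gamma}_{\mathcal{K},\delta^{\ast}}^{\ast})$ to vanish (Case 3), while $\delta^{\ast}=0$ (resp. $\delta^{\ast}=1$) forces it to be nonnegative (resp. nonpositive), giving Cases 1 and 2. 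Second, if you keep the subdifferential route, the non-uniqueness gap closes because the argmax set of the concave function $J(\cdot,\delta^{\ast})$ is convex and $B_{r,\mathcal{K}}-B_{d,\mathcal{K}}$ is continuous on it, so $0\in\partial V(\delta^{\ast})$ yields a single equalizing maximizer by the intermediate value theorem. Note finally that only the weak inequalities in Cases 1 and 2 can be forced; the strict versions hold generically, and the degenerate endpoint-equality situation is exactly what the paper later absorbs as the first sub-case of its Case 3 analysis, so your ``nondegenerate case'' caveat is consistent with how the proposition is actually used. In short, what your approach buys is a rigorous, self-contained justification; what the paper's citation buys is brevity and an explicit link to the classical equalizer-rule structure of minimax detection that it reuses for both the outer bound and the DF optimization.
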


We apply Proposition \ref{Prop_minimax} to determine the maximum bound on
$R_{\mathcal{K}}$. We study each case separately and determine the max-min
rule $\underline{\gamma}_{\mathcal{K}}^{\ast}$ for each case. In general, the
max-min rule $\underline{\gamma}_{\mathcal{K},\mathcal{\delta}^{\ast}}^{\ast}$
depends on an optimal $\mathcal{\delta}^{\ast}$. However, for notational
convenience we henceforth omit the subscript $\delta^{\ast}$ in denoting the
max-min rule. We develop the optimal $\underline{\gamma}_{\mathcal{K}}^{\ast}$
and the maximum sum-rate for each case. We first consider case $1$ and show
that this case is not feasible.%

%TCIMACRO{\TeXButton{B}{\begin{figure*}[tbp] \centering}}%
%BeginExpansion
\begin{figure*}[tbp] \centering
%EndExpansion%
%TCIMACRO{\FRAME{itbpF}{5.9724in}{1.8948in}{0in}{}{}{prop_i__draw.eps}%
%{\special{ language "Scientific Word";  type "GRAPHIC";  display "USEDEF";
%valid_file "F";  width 5.9724in;  height 1.8948in;  depth 0in;
%original-width 8.7121in;  original-height 4.7262in;  cropleft "0";
%croptop "1";  cropright "1";  cropbottom "0";
%filename '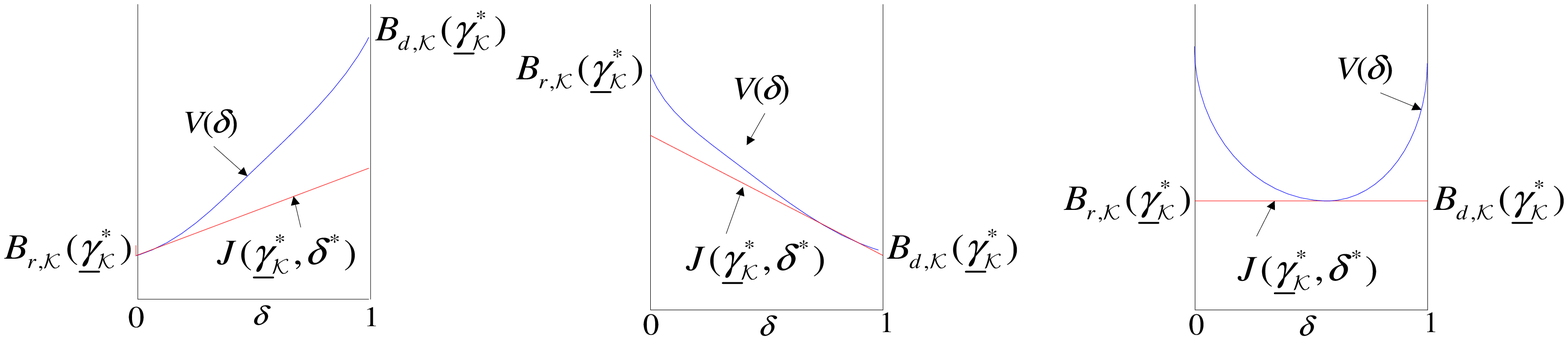';file-properties "XNPEU";}}}%
%BeginExpansion
{\includegraphics[
height=1.8948in,
width=5.9724in
]%
{Prop_I__draw.eps}%
}%
%EndExpansion
\caption{Illustration of Cases 1, 2, and 3.}\label{Fig_Prop_1}%
%TCIMACRO{\TeXButton{E}{\end{figure*}}}%
%BeginExpansion
\end{figure*}%
%EndExpansion

\textit{Case 1}: This case occurs when the maximum bound achievable at the
destination is smaller than the bound at the relay. In Appendix
\ref{DG_AppConvex}, we show that the bound $B_{d,\mathcal{K}}(\underline
{\gamma}_{\mathcal{K}})$ is a concave function of $\underline{\gamma
}_{\mathcal{K}}$ and achieves a maximum at \underline{$\gamma$}$_{\mathcal{K}%
}^{\ast}$ whose entries $\gamma_{k}^{\ast}$ satisfy
(\ref{OB_gammak_constraint}) and are given as%
\begin{equation}%
\begin{array}
[c]{cc}%
\gamma_{k}^{\ast}=P_{k}\left/  \left(  \sum_{k\in\mathcal{K}}P_{k}\right)
\right.  , & \text{for all }k\in\mathcal{K}\text{.}%
\end{array}
\label{DGA4_Case1_Gamma}%
\end{equation}
Substituting (\ref{DGA4_Case1_Gamma}) in (\ref{Con_final_B1S}), we have
$B_{r,\mathcal{K}}(\underline{\gamma}_{\mathcal{K}}^{\ast})=0$ which
contradicts the assumption in (\ref{DGA4_Case1}), thus making this case infeasible.

\textit{Case 2}: Consider the condition for case 2 in (\ref{DGA4_Case2}). This
condition implies that the case occurs when the maximum bound achievable at
the relay is smaller than the bound at the destination. From
(\ref{GMARC_DF_IrG}), we observe that $B_{r,\mathcal{K}}$ decreases
monotonically with $\gamma_{k}$ for all $k$ and achieves a maximum of
\begin{equation}
B_{r,\mathcal{K}}(\underline{\gamma}_{\mathcal{K}}^{\ast})=C\left(  \frac{%
%TCIMACRO{\tsum \limits_{k\in\mathcal{K}}}%
%BeginExpansion
{\textstyle\sum\limits_{k\in\mathcal{K}}}
%EndExpansion
P_{k}}{N_{r}}\right)
\end{equation}
at $\underline{\gamma}_{\mathcal{K}}^{\ast}=\underline{0}$. Comparing
(\ref{Con_final_B1S}) and (\ref{Con_final_B2S}) at $\underline{\gamma
}_{\mathcal{K}}^{\ast}$ $=$ $\underline{0}$, we obtain the condition for this
case as%
\begin{equation}
\frac{%
%TCIMACRO{\tsum \limits_{k\in\mathcal{K}}}%
%BeginExpansion
{\textstyle\sum\limits_{k\in\mathcal{K}}}
%EndExpansion
P_{k}}{N_{r}}\leq\frac{%
%TCIMACRO{\tsum \limits_{k\in\mathcal{K}}}%
%BeginExpansion
{\textstyle\sum\limits_{k\in\mathcal{K}}}
%EndExpansion
P_{k}+P_{r}}{N_{d}}\text{.}\label{DGA4_C2_CondK}%
\end{equation}
\textit{Case 3}: Finally, consider the condition for Case 3 in
(\ref{DGA4_Case3}). This case occurs when the maximum rate bound achievable at
the relay and destination are equal. The max-min solution for this case is
obtained by considering two sub-cases. The first is the relatively
straightforward sub-case where $\underline{\gamma}_{\mathcal{K}}^{\ast}$ $=$
$\underline{0}$ is the max-min rule. The resulting maximum sum-rate is the
same as that for \textit{case} $2$ with the condition in (\ref{DGA4_C2_CondK})
satisfied with equality. Consider the second sub-case where $\underline
{\gamma}_{\mathcal{K}}^{\ast}$ $\not =$ $\underline{0}$, i.e., when%
\begin{equation}
\frac{%
%TCIMACRO{\tsum \limits_{k\in\mathcal{K}}}%
%BeginExpansion
{\textstyle\sum\limits_{k\in\mathcal{K}}}
%EndExpansion
P_{k}}{N_{r}}>\frac{%
%TCIMACRO{\tsum \limits_{k\in\mathcal{K}}}%
%BeginExpansion
{\textstyle\sum\limits_{k\in\mathcal{K}}}
%EndExpansion
P_{k}+P_{r}}{N_{d}}\text{.}%
\end{equation}
We formulate the optimization problem for this case as%
\begin{equation}%
\begin{array}
[c]{ll}%
\text{maximize} & B_{r,\mathcal{K}}\left(  \underline{\gamma}\right)  \\
\text{subject to} & B_{r,\mathcal{K}}\left(  \underline{\gamma}\right)
=B_{d,\mathcal{K}}\left(  \underline{\gamma}\right)  .
\end{array}
\label{DGA4_maxprob}%
\end{equation}
We write%
\begin{equation}%
\begin{array}
[c]{cc}%
P_{\max}=\max_{k\in\mathcal{K}}P_{k}, & \lambda_{k}=\left.  P_{k}\right/
P_{\max}\text{,}%
\end{array}
\label{DGA4_lambdadef}%
\end{equation}
and define
\begin{equation}
x\overset{\vartriangle}{=}\sqrt{%
%TCIMACRO{\tsum \limits_{k\in\mathcal{K}}}%
%BeginExpansion
{\textstyle\sum\limits_{k\in\mathcal{K}}}
%EndExpansion
\lambda_{k}\gamma_{k}}.\label{DGA4_xKdef}%
\end{equation}
Substituting (\ref{DGA4_lambdadef}) and (\ref{DGA4_xKdef}) in
(\ref{Con_final_B1S}) and (\ref{Con_final_B2S}), we have%
\begin{align}
B_{r,\mathcal{K}}(x) &  =C\left(  \frac{\left(
%TCIMACRO{\tsum \limits_{k\in\mathcal{K}}}%
%BeginExpansion
{\textstyle\sum\limits_{k\in\mathcal{K}}}
%EndExpansion
P_{k}\right)  -x^{2}P_{\max}}{N_{r}}\right)  \label{DGA4_BrK}\\
B_{d,\mathcal{K}}(x) &  =C\left(  \frac{\left(
%TCIMACRO{\tsum \limits_{k\in\mathcal{K}}}%
%BeginExpansion
{\textstyle\sum\limits_{k\in\mathcal{K}}}
%EndExpansion
P_{k}\right)  +P_{r}+2x\sqrt{P_{\max}P_{r}}}{N_{d}}\right)  .\label{DGA4_BdK}%
\end{align}
Observe that $B_{r,\mathcal{K}}(x)$ and $B_{d,\mathcal{K}}(x)$ are
monotonically decreasing and increasing functions of $x$, respectively, and
thus, the maximization in (\ref{DGA4_maxprob}) simplifies to determining an
$x$ such that
\begin{equation}
\frac{%
%TCIMACRO{\tsum \limits_{k\in\mathcal{K}}}%
%BeginExpansion
{\textstyle\sum\limits_{k\in\mathcal{K}}}
%EndExpansion
P_{k}-x^{2}P_{\max}}{N_{r}}=\frac{%
%TCIMACRO{\tsum \limits_{k\in\mathcal{K}}}%
%BeginExpansion
{\textstyle\sum\limits_{k\in\mathcal{K}}}
%EndExpansion
P_{k}+P_{r}+2x\sqrt{P_{\max}P_{r}}}{N_{d}}.\label{DGA4_quad}%
\end{equation}
We write%
\begin{equation}%
\begin{array}
[c]{lll}%
K_{0}=P_{\max}\left/  N_{r}\right.  , &  & K_{1}=\sqrt{P_{\max}P_{r}}\left/
N_{d}\right.  \\
K_{2}=\frac{\sum\nolimits_{k\in\mathcal{K}}P_{k}}{N_{d}}+\frac{P_{r}}{N_{d}%
}, & \text{and} & K_{3}=\frac{%
%TCIMACRO{\tsum \nolimits_{k\in\mathcal{K}}}%
%BeginExpansion
{\textstyle\sum\nolimits_{k\in\mathcal{K}}}
%EndExpansion
P_{k}}{N_{r}}.
\end{array}
\label{DGA4_Kdefs}%
\end{equation}
From (\ref{DF_Case2_CondK}), since $K_{3}>K_{2}$, the quadratic equation in
(\ref{DGA4_quad}) has only one positive solution given by
\begin{equation}
x^{\ast}=\frac{-K_{1}+\sqrt{K_{1}^{2}+\left(  K_{3}-K_{2}\right)  K_{0}}%
}{K_{0}}.\label{DGA4_xstar}%
\end{equation}
The optimal power policy for this case is then the set $\mathcal{G}$ of
$\underline{\gamma}_{\mathcal{K}}^{\ast}$ for which $\underline{\gamma
}_{\mathcal{K}}^{\ast}$ satisfies (\ref{DGA4_xKdef}) with $x=$ $x^{\ast}$ in
(\ref{DGA4_xstar}). The maximum achievable sum-rate for this case is then
obtained from (\ref{DGA4_BrK}) as
\begin{equation}
C\left(  \frac{%
%TCIMACRO{\tsum \limits_{k\in\mathcal{K}}}%
%BeginExpansion
{\textstyle\sum\limits_{k\in\mathcal{K}}}
%EndExpansion
P_{k}-\left(  x^{\ast}\right)  ^{2}P_{\max}}{N_{r}}\right)
.\label{DGA4_C3_Rate}%
\end{equation}

\section{\label{DG_App5_DFProof}Proof of Theorem \ref{Th_GMARC_DF_2}}

We now prove Theorem \ref{Th_GMARC_DF_2} and give the solution to the max-min
optimization
\begin{equation}
R_{\mathcal{K}}=\max\limits_{(\underline{\alpha}_{\mathcal{K}},\underline
{\beta}_{\mathcal{K}})\in\Gamma}\min\left(  I_{r,\mathcal{K}}\left(
\underline{\alpha}_{\mathcal{K}}\right)  ,I_{d,\mathcal{K}}\left(
\underline{\alpha}_{\mathcal{K}},\underline{\beta}_{\mathcal{K}}\right)
\right)  . \label{DGA5_MM}%
\end{equation}

As in Appendix \ref{DG_App4OBProof}, a solution to the max-min optimization in
(\ref{DGA5_MM}) simplifies to three mutually exclusive cases \cite[II.C]%
{cap_theorems:HVPoor01} such that the max-min rule $(\underline{\alpha
}_{\mathcal{K}}^{\ast},\underline{\beta}_{\mathcal{K}}^{\ast})$ satisfies the
conditions for one of three cases. The conditions for the three cases are%
\begin{align}
&
\begin{array}
[c]{cc}%
\text{Case }1\text{:} & I_{d,\mathcal{K}}(\underline{\alpha}_{\mathcal{K}%
}^{\ast},\underline{\beta}_{\mathcal{K}}^{\ast})<I_{r,\mathcal{K}}%
(\underline{\alpha}_{\mathcal{K}}^{\ast})
\end{array}
\label{DGA5_Case1}\\
&
\begin{array}
[c]{cc}%
\text{Case }2\text{:} & I_{r,\mathcal{K}}(\underline{\alpha}_{\mathcal{K}%
}^{\ast})<I_{d,\mathcal{K}}(\underline{\alpha}_{\mathcal{K}}^{\ast}%
,\underline{\beta}_{\mathcal{K}}^{\ast})
\end{array}
\label{DGA5_Case2}\\
&
\begin{array}
[c]{cc}%
\text{Case }3\text{:} & I_{r,\mathcal{K}}(\underline{\alpha}_{\mathcal{K}%
}^{\ast})=I_{d,\mathcal{K}}(\underline{\alpha}_{\mathcal{K}}^{\ast}%
,\underline{\beta}_{\mathcal{K}}^{\ast})
\end{array}
\label{DGA5_Case3}%
\end{align}
We develop the conditions and determine the max-min rule for each case. We
first consider case $1$ and show that this case is not feasible.

\textit{Case 1}: This case occurs when the maximum bound achievable at the
destination is smaller than the bound at the relay. Observe that
$I_{d,\mathcal{K}}(\underline{\alpha}_{\mathcal{K}},\underline{\beta
}_{\mathcal{K}})$ in (\ref{GMARC_DF_IdG}) decreases monotonically with
$\alpha_{k}$, for all $k$, and, for any \underline{$\beta$}$_{\mathcal{K}}$,
achieves a maximum at \underline{$\alpha$}$_{\mathcal{K}}^{\ast}$ $=$
\underline{$0$} of
\begin{equation}
I_{d,\mathcal{K}}(\underline{\alpha}_{\mathcal{K}}^{\ast},\underline{\beta
}_{\mathcal{K}})=C\left(  \frac{%
%TCIMACRO{\tsum \limits_{k\in\mathcal{K}}}%
%BeginExpansion
{\textstyle\sum\limits_{k\in\mathcal{K}}}
%EndExpansion
P_{k}+P_{r}+2%
%TCIMACRO{\tsum \limits_{k\in\mathcal{K}}}%
%BeginExpansion
{\textstyle\sum\limits_{k\in\mathcal{K}}}
%EndExpansion
\sqrt{\beta_{k}P_{k}P_{r}}}{N_{d}}\right)  .
\end{equation}
However, substituting \underline{$\alpha$}$_{\mathcal{K}}^{\ast}$ $=$
\underline{$0$} in (\ref{GMARC_DF_IrG}), we obtain%
\begin{equation}
I_{r,\mathcal{K}}(\underline{\alpha}_{\mathcal{K}}^{\ast})=0
\end{equation}
which contradicts the assumption in (\ref{DGA5_Case1}), thus making this case infeasible.

\textit{Case 2}: Consider the condition for Case 2 in (\ref{DGA5_Case2}). This
condition implies that the case occurs when the maximum bound achievable at
the relay is smaller than the bound at the destination. From
(\ref{GMARC_DF_IrG}), we observe that $I_{r,\mathcal{K}}$ increases
monotonically with $\alpha_{k}$ for all $k$ and achieves a maximum of
\begin{equation}
I_{r,\mathcal{K}}(\underline{\alpha}_{\mathcal{K}}^{\ast})=C\left(  \frac{%
%TCIMACRO{\tsum \limits_{k\in\mathcal{K}}}%
%BeginExpansion
{\textstyle\sum\limits_{k\in\mathcal{K}}}
%EndExpansion
P_{k}}{N_{r}}\right)
\end{equation}
at $\underline{\alpha}_{\mathcal{K}}^{\ast}=\underline{1}$. Comparing
(\ref{GMARC_DF_IrG}) and (\ref{GMARC_DF_IdG}) at $\underline{\alpha
}_{\mathcal{K}}^{\ast}$ $=$ $\underline{1}$, we obtain the condition for this
case as%
\begin{equation}
\frac{%
%TCIMACRO{\tsum \limits_{k\in\mathcal{K}}}%
%BeginExpansion
{\textstyle\sum\limits_{k\in\mathcal{K}}}
%EndExpansion
P_{k}}{N_{r}}\leq\frac{%
%TCIMACRO{\tsum \limits_{k\in\mathcal{K}}}%
%BeginExpansion
{\textstyle\sum\limits_{k\in\mathcal{K}}}
%EndExpansion
P_{k}+P_{r}}{N_{d}}\text{.} \label{DGA5_C2_CondK}%
\end{equation}

\textit{Case 3}: Finally, consider Case 3 in (\ref{DGA5_Case3}). This case
occurs when the maximum rate bound achievable at the relay and destination are
equal. The max-min solution for this case is obtained by considering two
sub-cases. The first is the relatively straightforward sub-case where
$\underline{\alpha}_{\mathcal{K}}^{\ast}$ $=$ $\underline{1}$ is the max-min
rule. The resulting maximum sum-rate is the same as that for \textit{case} $2$
with the condition in (\ref{DGA5_C2_CondK}) satisfied with equality. Consider
the second sub-case where $\underline{\alpha}_{\mathcal{K}}^{\ast}$ $\not =$
$\underline{1}$, i.e.,
\begin{equation}
\frac{%
%TCIMACRO{\tsum \limits_{k\in\mathcal{K}}}%
%BeginExpansion
{\textstyle\sum\limits_{k\in\mathcal{K}}}
%EndExpansion
P_{k}}{N_{r}}>\frac{%
%TCIMACRO{\tsum \limits_{k\in\mathcal{K}}}%
%BeginExpansion
{\textstyle\sum\limits_{k\in\mathcal{K}}}
%EndExpansion
P_{k}+P_{r}}{N_{d}}\text{.}\label{DGA5_C3_Cond}%
\end{equation}
In Appendix \ref{DG_AppConvex} we show that, for a fixed \underline{$\alpha$%
}$_{\mathcal{K}}$, $I_{d,\mathcal{S}}$, is a concave function of
\underline{$\beta$}$_{\mathcal{K}}$ for all $\mathcal{S}\subseteq\mathcal{K}$.
Furthermore, from (\ref{alp_beta_bounds}), for \underline{$\alpha$%
}$_{\mathcal{K}}$ $\not =$ \underline{$1$}, $I_{d,\mathcal{K}}$ in
(\ref{GMARC_DF_IdG}) is maximized by a $\underline{\beta}_{\mathcal{K}}^{\ast
}$ whose entries $\beta_{k}^{^{\ast}}$, for all $k\in\mathcal{K}$, satisfy%
\begin{equation}%
%TCIMACRO{\tsum \limits_{k\in\mathcal{K}}}%
%BeginExpansion
{\textstyle\sum\limits_{k\in\mathcal{K}}}
%EndExpansion
\beta_{k}^{^{\ast}}=1\label{DGA5_betaKsum}%
\end{equation}
and are given as%
\begin{equation}%
\begin{array}
[c]{cc}%
\beta_{k}^{\ast}=\left\{
\begin{array}
[c]{ll}%
\frac{\left(  1-\alpha_{k}\right)  P_{k}}{\sum_{k=1}^{K}\left(  1-\alpha
_{k}\right)  P_{k}} & \underline{\alpha}_{\mathcal{K}}\not =\underline{1}\\
0 & \underline{\alpha}_{\mathcal{K}}=\underline{1}%
\end{array}
\right.   & \text{for all }k\in\mathcal{K}.
\end{array}
\label{DGA5_betaopt}%
\end{equation}
Observe that the optimal power fraction $\beta_{k}^{^{\ast}}$ that the relay
allocates to cooperating with user $k$ is proportional to the power allocated
by user $k$ to achieve coherent combining gains at the destination. Thus, one
can formulate the optimization problem for this case as%
\begin{equation}%
\begin{array}
[c]{ll}%
\text{maximize} & I_{r,\mathcal{K}}\left(  \underline{\alpha}\right)  \\
\text{subject to} & I_{r,\mathcal{K}}\left(  \underline{\alpha}\right)
=I_{d,\mathcal{K}}\left(  \underline{\alpha},\underline{\beta}\right)  ,\\
&
%TCIMACRO{\tsum \limits_{k\in\mathcal{K}}}%
%BeginExpansion
{\textstyle\sum\limits_{k\in\mathcal{K}}}
%EndExpansion
\beta_{k}=1.
\end{array}
\label{DGA5_Lmax}%
\end{equation}
Using Lagrange multipliers we can show that it suffices to consider $\beta
_{k}=\beta_{k}^{^{\ast}}$ in the maximization. Since the optimal $\beta
_{k}^{\ast}$ in (\ref{DGA5_betaopt}) is a function of \underline{$\alpha$%
}$_{\mathcal{K}}$, $I_{d,\mathcal{K}}(\underline{\alpha}_{\mathcal{K}%
},\underline{\beta}_{\mathcal{K}}^{\ast})$ simplifies to a function of
$\underline{\alpha}_{\mathcal{K}}$ as
\begin{equation}
I_{d,\mathcal{K}}(\underline{\alpha}_{\mathcal{K}},\underline{\beta
}_{\mathcal{K}}^{\ast})=C\left(  \frac{%
%TCIMACRO{\tsum \limits_{k\in\mathcal{K}}}%
%BeginExpansion
{\textstyle\sum\limits_{k\in\mathcal{K}}}
%EndExpansion
P_{k}+P_{r}+2\sqrt{%
%TCIMACRO{\tsum \limits_{k\in\mathcal{K}}}%
%BeginExpansion
{\textstyle\sum\limits_{k\in\mathcal{K}}}
%EndExpansion
\left(  1-\alpha_{k}\right)  P_{k}P_{r}}}{N_{d}}\right)  .
\end{equation}
We further simplify $I_{d,\mathcal{K}}(\underline{\alpha}_{\mathcal{K}%
},\underline{\beta}_{\mathcal{K}}^{\ast})$ and $I_{r,\mathcal{K}}%
(\underline{\alpha}_{\mathcal{K}})$ as follows. We write%
\begin{equation}%
\begin{array}
[c]{cc}%
P_{\max}=\max_{k\in\mathcal{K}}P_{k}, & \lambda_{k}=\left.  P_{k}\right/
P_{\max}\text{,}%
\end{array}
\label{DGA5_lambdadef}%
\end{equation}
and
\begin{equation}
q\overset{\vartriangle}{=}\sqrt{%
%TCIMACRO{\tsum \limits_{k\in\mathcal{K}}}%
%BeginExpansion
{\textstyle\sum\limits_{k\in\mathcal{K}}}
%EndExpansion
\left(  1-\alpha_{k}\right)  \lambda_{k}}.\label{DGA5_qK_defn}%
\end{equation}
Substituting (\ref{DGA5_lambdadef}) and (\ref{DGA5_qK_defn}) in
(\ref{GMARC_DF_IrG}) and (\ref{GMARC_DF_IdG}), we have%
\begin{align}
I_{r,\mathcal{K}}(q) &  =C\left(  \frac{\left(
%TCIMACRO{\tsum \limits_{k\in\mathcal{K}}}%
%BeginExpansion
{\textstyle\sum\limits_{k\in\mathcal{K}}}
%EndExpansion
P_{k}\right)  -q^{2}P_{\max}}{N_{r}}\right)  \\
I_{d,\mathcal{K}}(q) &  =C\left(  \frac{\left(
%TCIMACRO{\tsum \limits_{k\in\mathcal{K}}}%
%BeginExpansion
{\textstyle\sum\limits_{k\in\mathcal{K}}}
%EndExpansion
P_{k}\right)  +P_{r}+2q\sqrt{P_{\max}P_{r}}}{N_{d}}\right)  .
\end{align}
Observe that $I_{r,\mathcal{K}}(q)$ and $I_{d,\mathcal{K}}(q)$ are
monotonically increasing and decreasing functions of $q$ and thus, the
maximization in (\ref{DGA5_Lmax}) simplifies to determining a $q$ such that
\begin{equation}
\frac{%
%TCIMACRO{\tsum \limits_{k\in\mathcal{K}}}%
%BeginExpansion
{\textstyle\sum\limits_{k\in\mathcal{K}}}
%EndExpansion
P_{k}-q^{2}P_{\max}}{N_{r}}=\frac{%
%TCIMACRO{\tsum \limits_{k\in\mathcal{K}}}%
%BeginExpansion
{\textstyle\sum\limits_{k\in\mathcal{K}}}
%EndExpansion
P_{k}+P_{r}+2q\sqrt{P_{\max}P_{r}}}{N_{d}}.\label{DGA5_C3_quadeqn}%
\end{equation}
The condition in (\ref{DGA5_C3_quadeqn}) has the geometric interpretation that
the bounds on $R_{\mathcal{K}}$ are maximized when the $K$-user sum rate plane
achieved at the relay is tangential to the concave sum-rate surface achieved
at the destination at its maximum value. We further simplify
(\ref{DGA5_C3_quadeqn}) by using the definitions in Appendix
\ref{DG_App4OBProof} for $K_{0}$, $K_{1}$, $K_{2}$, and $K_{3}$. From
(\ref{DGA5_C3_Cond}), since $K_{3}>K_{2}$, the quadratic equation in
(\ref{DGA5_C3_quadeqn}) has only one positive solution given by
\begin{equation}
q^{\ast}=\frac{-K_{1}+\sqrt{K_{1}^{2}+\left(  K_{3}-K_{2}\right)  K_{0}}%
}{K_{0}}.\label{DGA5_qKstar}%
\end{equation}
The optimal power policy for this case is then the set $\mathcal{P}$ of
$(\underline{\alpha}_{\mathcal{K}}^{\ast},\underline{\beta}_{\mathcal{K}%
}^{\ast}(\underline{\alpha}_{\mathcal{K}}^{\ast}))$ such that $\underline
{\alpha}_{\mathcal{K}}^{\ast}$ satisfies (\ref{DGA5_qK_defn}) for $q=$
$q^{\ast}$ and for each such choice of $\underline{\alpha}_{\mathcal{K}}%
^{\ast}$, $\underline{\beta}_{\mathcal{K}}^{\ast}$ is given by
(\ref{DGA5_betaopt}). The maximum achievable sum-rate for this case is then
given by
\begin{equation}
C\left(  \frac{\left(
%TCIMACRO{\tsum \limits_{k\in\mathcal{K}}}%
%BeginExpansion
{\textstyle\sum\limits_{k\in\mathcal{K}}}
%EndExpansion
P_{k}\right)  -\left(  q^{\ast}\right)  ^{2}P_{\max}}{N_{r}}\right)
.\label{DGA5_C3_Rate}%
\end{equation}

\begin{remark}
The optimal $q^{\ast}$ in (\ref{DGA5_qKstar}) is the same as the optimal
$x^{\ast}$ in (\ref{DGA4_xstar}). Further, the maximum inner (DF) and outer
bounds on the sum-rate are also the same for the \textit{equal-bounds} case in
(\ref{DGA5_C3_Rate}) and (\ref{DGA4_C3_Rate}), respectively.
\end{remark}

\section{\label{DGA6_ActPrf}Sum-Capacity Proof for the Active Class}

In Theorem \ref{DGDF_Th3}, we proved that DF achieves the sum-capacity for an
active class of degraded Gaussian\ MARCs. In the proof we argue that since the
maximum DF sum-rate is the same as the maximum outer bound sum-rate, every DF
max-min rule $(\underline{\alpha}_{\mathcal{K}}^{\ast},\underline{\beta
}_{\mathcal{K}}^{\ast})\in\mathcal{P}_{a}$ that achieves this maximum
sum-rate, i.e., for which $R_{r}(\underline{\alpha}_{\mathcal{K}}^{\ast})\cap
R_{d}(\underline{\alpha}_{\mathcal{K}}^{\ast},\underline{\beta}_{\mathcal{K}%
}^{\ast})$ belongs to the set of active cases, also achieves the sum-capacity.
We now present a more detailed proof of the argument.

We begin by comparing the inner and outer bounds. As in the symmetric case,
without loss of generality, we write
\begin{equation}%
\begin{array}
[c]{cc}%
\gamma_{k}=\left(  1-\alpha_{k}\right)  \beta_{k} & \text{for all }k
\end{array}
\label{DG_A6gammak}%
\end{equation}
where $(\underline{\alpha}_{\mathcal{K}},\underline{\beta}_{\mathcal{K}}%
)\in\Gamma.$ We then have,
\begin{equation}
B_{r,\mathcal{K}}(\underline{\alpha}_{\mathcal{K}},\underline{\beta
}_{\mathcal{K}})=C\left(
%TCIMACRO{\tsum \limits_{k\in\mathcal{K}}}%
%BeginExpansion
{\textstyle\sum\limits_{k\in\mathcal{K}}}
%EndExpansion
\frac{P_{k}}{N_{r}}-\frac{\left(  \sum\limits_{k\in\mathcal{K}}\sqrt{\left(
1-\alpha_{k}\right)  \beta_{k}P_{k}}\right)  ^{2}}{N_{r}}\right)
\label{DGDF_A6_BrK}%
\end{equation}
and
\begin{equation}
B_{d,\mathcal{K}}(\underline{\alpha}_{\mathcal{K}},\underline{\beta
}_{\mathcal{K}})=C\left(  \frac{%
%TCIMACRO{\tsum \limits_{k\in\mathcal{K}}}%
%BeginExpansion
{\textstyle\sum\limits_{k\in\mathcal{K}}}
%EndExpansion
P_{k}+P_{r}+2%
%TCIMACRO{\tsum \limits_{k\in\mathcal{K}}}%
%BeginExpansion
{\textstyle\sum\limits_{k\in\mathcal{K}}}
%EndExpansion
\sqrt{\left(  1-\alpha_{k}\right)  \beta_{k}P_{k}P_{r}}}{N_{d}}\right)
=I_{d,\mathcal{K}}(\underline{\alpha}_{\mathcal{K}},\underline{\beta
}_{\mathcal{K}}).\label{DGDF_A6_BdK}%
\end{equation}
Choosing $\underline{\beta}_{\mathcal{K}}$ as the DF max-min rule
$\underline{\beta}_{\mathcal{K}}^{\ast}$ in (\ref{DGDF_betastar}), simplifies
(\ref{DGDF_A6_BrK}) to
\begin{equation}
B_{r,\mathcal{K}}(\underline{\alpha}_{\mathcal{K}},\underline{\beta
}_{\mathcal{K}}^{\ast})=C\left(
%TCIMACRO{\tsum \limits_{k\in\mathcal{K}}}%
%BeginExpansion
{\textstyle\sum\limits_{k\in\mathcal{K}}}
%EndExpansion
\frac{\alpha_{k}P_{k}}{N_{r}}\right)  =I_{r,\mathcal{K}}(\underline{\alpha
}_{\mathcal{K}}).\label{DGA6_BrKIrK}%
\end{equation}
Using theorem \ref{Th_GMARC_DF_2}, one can then verify that $B_{r,\mathcal{K}%
}(\underline{\alpha}_{\mathcal{K}}^{\ast},\underline{\beta}_{\mathcal{K}%
}^{\ast})=B_{d,\mathcal{K}}(\underline{\alpha}_{\mathcal{K}}^{\ast}%
,\underline{\beta}_{\mathcal{K}}^{\ast})$ is achieved by all $\underline
{\alpha}_{\mathcal{K}}^{\ast}\in\mathcal{P}$. Consider a $\underline{\alpha
}_{\mathcal{K}}^{\ast}\in\mathcal{P}_{a}$ and a corresponding $\beta
_{\mathcal{K}}^{\ast}$ such that the DF region $\mathcal{R}_{r}(\underline
{\alpha}_{\mathcal{K}}^{\ast})\cap\mathcal{R}_{d}(\underline{\alpha
}_{\mathcal{K}}^{\ast},\underline{\beta}_{\mathcal{K}}^{\ast})$ belongs to the
set of active cases. From Theorem \ref{Th_GMARC_DF_2}, this implies that
\begin{equation}%
\begin{array}
[c]{cc}%
I_{d,\mathcal{A}}(\underline{\alpha}_{\mathcal{K}}^{\ast},\underline{\beta
}_{\mathcal{K}}^{\ast})+I_{r,\mathcal{A}^{c}}(\underline{\alpha}_{\mathcal{K}%
}^{\ast})>I^{\ast}=B^{\ast} & \text{for all }\mathcal{A}\subset\mathcal{K}.
\end{array}
\label{DGDF_A6_IntCase}%
\end{equation}
Using (\ref{DG_A6gammak}), we expand $B_{d,\mathcal{S}}$ in
(\ref{Con_final_B2S}) as a function of $(\underline{\alpha}_{\mathcal{K}%
}^{\ast},\underline{\beta}_{\mathcal{K}}^{\ast})$ as
\begin{align}
B_{d,\mathcal{S}}(\underline{\alpha}_{\mathcal{K}}^{\ast},\underline{\beta
}_{\mathcal{K}}^{\ast})  & =C\left(  \frac{%
%TCIMACRO{\tsum \limits_{k\in\mathcal{K}}}%
%BeginExpansion
{\textstyle\sum\limits_{k\in\mathcal{K}}}
%EndExpansion
P_{k}+\left(  1-\left(
%TCIMACRO{\tsum \limits_{k\in\mathcal{S}^{c}}}%
%BeginExpansion
{\textstyle\sum\limits_{k\in\mathcal{S}^{c}}}
%EndExpansion
\left(  1-\alpha_{k}^{\ast}\right)  \beta_{k}^{\ast}\right)  \right)  P_{r}+2%
%TCIMACRO{\tsum \limits_{k\in\mathcal{S}}}%
%BeginExpansion
{\textstyle\sum\limits_{k\in\mathcal{S}}}
%EndExpansion
\sqrt{\left(  1-\alpha_{k}^{\ast}\right)  \beta_{k}^{\ast}P_{k}P_{r}}}{N_{d}%
}\right)  \\
& \geq I_{d,\mathcal{S}}(\underline{\alpha}_{\mathcal{K}}^{\ast}%
,\underline{\beta}_{\mathcal{K}}^{\ast})\label{DGA6_BdId}%
\end{align}
where (\ref{DGA6_BdId}) follows from the fact that $\left(  1-\alpha_{k}%
^{\ast}\right)  \beta_{k}^{\ast}\leq\beta_{k}^{\ast}$, for all $k$ and for all
$(\underline{\alpha}_{\mathcal{K}}^{\ast},\underline{\beta}_{\mathcal{K}%
}^{\ast})$. It is, however, not easy to compare $B_{r,\mathcal{S}}%
(\underline{\alpha}_{\mathcal{K}}^{\ast},\underline{\beta}_{\mathcal{K}}%
^{\ast})$ with $I_{r,\mathcal{S}}(\underline{\alpha}_{\mathcal{K}}^{\ast})$.
Note, however, that the choice of $\gamma_{k}$ in (\ref{DG_A6gammak}) requires
the same source-relay correlation values for both the inner and outer bounds.
Furthermore, for every choice of Gaussian input distribution with the same $K$
correlation values for both bounds, comparing the degraded cutset and DF
bounds in (\ref{DGMARC_OB_1}) and (\ref{Prop_DF_rateregion}), respectively,
for a constant $U$, we have
\begin{equation}%
\begin{array}
[c]{cc}%
I(X_{\mathcal{S}};Y_{r}|X_{\mathcal{S}^{c}}X_{r})\geq I(X_{\mathcal{S}}%
;Y_{r}|X_{\mathcal{S}^{c}}V_{\mathcal{K}}X_{r}) & \text{for all }%
\mathcal{S}\subseteq\mathcal{K}%
\end{array}
\label{DGA6_OBDFr}%
\end{equation}
where in (\ref{DGA6_OBDFr}) we use the fact that conditioning does not
increase entropy to show that the cutset bounds at the relay are less
restrictive than the corresponding DF bounds. From (\ref{DGA6_BrKIrK}), the
inequality in (\ref{DGA6_OBDFr}) simplifies to an equality for $\mathcal{S}%
=\mathcal{K}$ and for $(\underline{\alpha}_{\mathcal{K}}^{\ast},\underline
{\beta}_{\mathcal{K}}^{\ast})\in\mathcal{P}_{a}$ when $\gamma_{k}$ is given by
(\ref{DG_A6gammak}). Combining these inequalities with (\ref{DGDF_A6_IntCase}%
), we then have
\begin{equation}%
\begin{array}
[c]{cc}%
B_{d,\mathcal{A}}(\underline{\alpha}_{\mathcal{K}}^{\ast},\underline{\beta
}_{\mathcal{K}}^{\ast})+B_{r,\mathcal{A}^{c}}(\underline{\alpha}_{\mathcal{K}%
}^{\ast})>I^{\ast}=B^{\ast} & \text{for all }\mathcal{A}\subset\mathcal{K}.
\end{array}
\end{equation}
Thus, every DF max-min rule that results in an active case polymatroid
intersection, i.e., every $(\underline{\alpha}_{\mathcal{K}}^{\ast}%
,\underline{\beta}_{\mathcal{K}}^{\ast})\in\mathcal{P}_{a}$, also results in
an active case for the outer bounds when $\gamma_{k}$ is given by
(\ref{DG_A6gammak}).

\bibliographystyle{IEEEtran}
\bibliography{MARC_refs}

\end{document}